
\documentclass[a4paper,fleqn]{cas-dc}

\usepackage[numbers]{natbib}
\usepackage{amsthm, amsmath,bbm,mathtools,physics,amssymb,subfig}
\usepackage[skins,breakable,hooks,theorems]{tcolorbox}

\newtheorem{theorem}{Theorem}

\newtheorem{lemma}{Lemma}
\newtheorem{definition}{Definition}



\begin{document}
\let\WriteBookmarks\relax
\def\floatpagepagefraction{1}
\def\textpagefraction{.001}
\shorttitle{A Non-intrusive Decentralized Approach to Stabilizing IBR-dominated AC Microgrids}
\shortauthors{T. Huang}

\title [mode = title]{A Non-intrusive Decentralized Approach to Stabilizing IBR-dominated AC Microgrids}                      
\tnotemark[1]

\tnotetext[1]{This work was partly supported by the U.S. National Science Foundation (NSF) Grant ECCS-2328205.}


\author[1]{Tong Huang}[type=editor,
                        auid=000,bioid=1,
                        orcid = 0000-0002-2630-4825]
\cormark[1]
\ead{thuang7@sdsu.edu}


\affiliation[1]{organization={Department of Electrical and Computer Engineering, San Diego State University },
                city={San Diego},
                state={California},
                postcode={92182}, 
                country={USA}}

\cortext[cor1]{Corresponding author}

\begin{abstract}
This paper presents a non-intrusive, decentralized approach that stabilizes AC microgrids dominated by inverter-based resources (IBRs). By ``non-intrusive'' we mean that the approach does not require reprogramming IBRs' controllers to stabilize the microgrids. ``Decentralized'' is in the sense that the approach stabilizes the microgrids without communication among IBRs. Implementing the approach only requires very minimal information of IBR dynamics, \textcolor{black}{i.e., the $\mathcal{L}_2$ gain of an IBR,} and sharing such information with the non-IBR-manufacturer parties does not cause any concerns on intellectual property privacy.
The approach allows for plug-and-play operation of IBRs, while maintaining microgrid stability.
The proposed approach is tested by simulating 2-IBR and \textcolor{black}{10-IBR} microgrids where lines and IBRs are modeled in the electromagnetic transient time scale. Simulations show that oscillations with increasing amplitudes may occur, when two stable microgrids are networked. Simulations also suggest that the proposed approach can mitigate such a system-level symptom.
\end{abstract}


\begin{highlights}
\item The proposed approach stabilizes AC microgrids dominated by inverter-based resources (IBRs) in a non-intrusive, and decentralized fashion, through a power-electronics (PE) interface.

\item Designing the PE interface only needs a scalar that encapsulates input-output dynamics of an IBR, and does not require the detailed control schemes of the IBR \textcolor{black}{or network topology information}. 

\item The proposed approach can address the high-order dynamics due to the tight interaction among voltage and current controllers, and network dynamics in the EMT time scale.
\end{highlights}

\begin{keywords}
Microgrid stability \sep Inverter-based resources (IBRs) \sep Integration of Distributed Energy Resources (DERs)  \sep Resilient control
\end{keywords}

\maketitle

\section{Introduction}
As many countries are decarbonizing their energy infrastructure, a growing number of Inverter-based Resources (IBRs\footnote{The explanation of acronyms in this paper is listed in Appendix \ref{app:aronyms}}), including energy storage, rooftop solar panels, and electric vehicle charging stations, are emerging in distribution grids \cite{XIE2022100640}. 
However, integrating large-scale IBRs will pose unprecedented challenges to distribution grid management, since today's distribution grids are not designed for hosting massive IBRs, and distribution system operators (DSOs) generally cannot directly control IBRs at grid edges. With the concept of microgrids \cite{985003}, a large amount of IBRs in a distribution grid can be managed via a ``divide-and-conquer'' strategy: the distribution grid can be divided into several networked microgrids, and each microgrid manages its own generation and loads \cite{8536454}. With such an architecture, the management complexity for DSOs is significantly reduced, as the DSOs only need to coordinate several microgrids, instead of controlling massive IBRs in a centralized manner \cite{THneuralTSG}. A microgrid has three operational modes: a grid-connected mode \cite{985003}, an islanded mode \cite{985003}, and a hybrid mode \cite{huangilicTSG}. Under normal conditions, a microgrid can enter the grid-connected mode where the loads in the microgrid can be balanced by the energy from both local generation and the host distribution system. When the host distribution grid fails to deliver energy, a microgrid can either balance its load autonomously by its local generation (i.e., the islanded mode), or network with its neighboring microgrids and balance loads collaboratively (i.e., the hybrid mode) \cite{huangilicTSG}.

One key challenge of operating microgrids in the islanded or hybrid mode is how to ensure the microgrid stability \cite{MG_def}. Compared with large-scale transmission systems whose dynamics are governed by thousands of giant rotating machines, the microgrids powered by IBRs are more sensitive to disturbances that include connection or disconnection of IBRs, renewable fluctuations, and line faults, due to lack of physical inertia in generation resources and the small scale of the microgrids. As a result, the disturbances may compromise the quality of electricity services by incurring sustained oscillations or even instability. Exacerbating the challenge, today's IBR manufacturers tune their IBRs at a device level without much consideration of system-level performance of networked IBRs. However, the non-manufacturer parties (NMPs), e.g., DSOs, microgrid operators (\textcolor{black}{MGOs}), and IBR owners, who \textcolor{black}{are concerned about} security of networked IBRs, typically do not know the detailed control schemes of IBRs and cannot reprogram the IBRs' controllers. This is because the manufacturers are reluctant to share their detailed control schemes with the NMPs due to concerns on intellectual property (IP) privacy. Without the consideration of the system-level performance, IBRs might fight with other, causing undesirable oscillations or instability. Such incidences occurred in transmission systems, e.g., the sub-synchronous control interactions (SSCI) in Texas \cite{ERCOT_SSO} and oscillations in High Voltage DC systems that contain multiple converters \cite{yin2019review}. In the context of microgrids, it is possible that networking two stable microgrids leads to oscillations with increasing amplitudes (shall be shown in Section \ref{sec:case_study}). Therefore, as more and more IBRs are emerging at grid edges, it is imperative to develop technologies that certify system-level stability of networked IBRs.

Existing approaches to stability certification for electrical energy systems can be classified into three categories: centralized, impedance-based, and passivity-based approaches. In the \emph{centralized} approaches, system operators (SOs) are assumed to be able to collect dynamical models of key components in the systems, and they \emph{assess} the system stability by performing time-domain simulations \cite{1338120}, by conducting small-signal analysis \cite{6880421}, or by searching for system behavior-summary functions, e.g., the Lyapunov functions \cite{THneuralTSG,Kabalan2017}, and energy functions \cite{481632,FOUAD1988233}. The drawbacks of these centralized approaches are listed as follows: 1) IBR manufacturers can only share a ``black-box'' model with SOs for simulation purposes, due to concerns on IP privacy. Consequently, detailed IBRs' models are not available for performing analytical stability assessment \cite{THneuralTSG,481632,FOUAD1988233,Kabalan2017}. 2) Some approaches \cite{huang2021neural,1338120} are computationally intractable when addressing high-order systems. For an IBR-rich microgrid, wide-range behaviors of interested lie in the electromagnetic transient (EMT) time scale, and they are described by high-order dynamics. 
3) Most approaches \cite{481632,FOUAD1988233,Kabalan2017} cannot provide SOs with actionable guidance of enforcing system stability. Beyond stability analysis, controls enforcing stability are much needed.

The impedance-based and passivity-based approaches address the drawbacks of the centralized approaches by developing device-level stability conditions that each IBR needs to satisfy locally to ensure the stability of its host system. One way to design such conditions is by checking if the impedance ratio satisfies the Nyquist stability criterion, where the impedance ratio is defined by the IBR output impedance and the equivalent impedance of the host grid. For example, reference \cite{FLEE_02} proposes impedance specifications for stable DC resources and a data-driven way to measure the specifications. Reference \cite{Sun_09} reviews impedance specifications for stability assessment of AC generation resources. Reference \cite{Sun_11} points out that different impedance-based criteria should be used for assessing stability of voltage-source systems and current-source systems. Reference \cite{XWang_14} generalizes the impedance-based stability criteria from a single-converter-infinity-bus system to a network with multiple converters. Based on the impedance-based analysis, reference \cite{XWang_22} proposes a participation function that aims to pinpoint the root causes of instability. Reference \cite{bikash_assessment} performs the impedance-stability assessment with black-box converter models. In addition to stability assessment, there is a large body of literature that enforces the impedance-based stability conditions by tuning IBR control parameters \cite{Control_para_17,apparent_impedance_para_tuning}, and adding active dampers \cite{active_damper}. The passivity theory is another common tool for designing the device-level stability conditions. For example, reference \cite{self_displ} introduces the concept of self-disciplined stabilization in the context of DC microgrids. The stability condition for each IBR is the passivity of the single-input-single-output (SISO) transfer function of the IBR. 
Reference \cite{tsinghua_condition} proposes the distributed, passivity-like stability condition based on low-order nodal dynamics and power flow equations. 
Reference \cite{UCF_paper} develops the stability condition for conventional generators in transmission systems based on the passivity shortage framework. Reference \cite{amit_dist} learns a neural network-structured storage function for each IBR and leverages the storage function as stability conditions to certify microgrid stability. Reference \cite{IIT_paper} presents the passivity-based stability condition for IBRs to assess small-signal stability of both fast and slow behaviors of IBR interconnections. 

Unfortunately, the existing impedance/passivity-based approaches have the following limitations:
1) In references \cite{Control_para_17,IIT_paper, self_displ,apparent_impedance_para_tuning,UCF_paper,tsinghua_condition}, the stability conditions are enforced in an \emph{intrusive} manner, i.e., one has to reprogram the controllers of generation resources to enforce the conditions. This is undesirable for both NMPs and IBR manufacturers. The IBR controllers are typically packaged into the inverters and cannot be reprogrammed by the NMPs, for protecting IP privacy and reducing IBRs' vulnerability to cyberattacks. The control schemes of commercial inverters are typically deliberately designed and extensively tested by IBR manufacturers for achieving certain functions, such as voltage and current regulation. Hence, the IBR manufacturers might be reluctant to completely abandon or radically change their mature control schemes for enforcing the stability condition \cite{IIT_paper}. Besides, since many IBRs have been installed in the grid, it is costly or even infeasible to reprogram the controllers of these existing IBRs.
2) The complexity of dynamics of IBR-dominated, AC microgrids is ignored by \cite{self_displ, amit_dist, UCF_paper, tsinghua_condition}.
    For example, reference \cite{self_displ} only considers the SISO dynamics of converter interfaces in DC microgrids, while the IBR's dynamics in an AC microgrid can have multiple inputs and outputs. References \cite{ amit_dist,UCF_paper,tsinghua_condition} only address the slow dynamics of generation units but ignores the interactions among network dynamics and fast IBR controllers in the EMT time scale. Modelling full-order network dynamics is necessary in an IBR-rich microgrid, as some inverters may have high-frequency dynamics \cite{TimGreenModel}. 
3) References \cite{bikash_assessment, IIT_paper,FLEE_02,amit_dist,XWang_22,Sun_09,Sun_11,XWang_14} only address stability assessment in a distributed manner without providing guidance of how to stabilize an unstable microgrid. 4) Some impedance-based approaches \cite{FLEE_02,Sun_09,Sun_11} simplify the dynamics of the host systems of an IBR as an ideal voltage source in series with impedance. Such a simplification is valid when the IBR connects to a strong grid (e.g., a large-scale transmission/distribution system). However, when an IBR connects to a microgrid, the complexity of dynamics of its host microgrid cannot be ignored. 5) While developed based on the ``black-box'' IBR models, some impedance-based approaches \cite{active_damper} require topology information of the host grid including line parameters and network connectivity. However, since the topology information can change dynamically due to potentially open boundaries among microgrids, stability assessment results and stability enforcement performance may change accordingly, making it challenging to achieve the plug-and-play operation of IBRs.

This paper introduces a \emph{first-of-its-kind}, \emph{non-intrusive}, and \emph{decentralized} approach to stabilizing the IBR-dominated AC microgrids. The paper's contribution is summarized as follows:

\noindent 1) The approach stabilizes the IBR-dominated AC microgrids in a non-intrusive, and decentralized fashion. The ``non-intrusive, and decentralized'' is in the sense that the design and operation of the PE interface do not require reprogramming IBR controllers, the topology information, or communications among IBRs. This allows the NMPs to stabilize microgrids in the EMT time scale and achieve plug-and-play operation of IBRs. \textcolor{black}{The non-intrusive feature cannot be achieved by the methods in \cite{Control_para_17, IIT_paper,self_displ,apparent_impedance_para_tuning, amit_dist,tsinghua_condition,UCF_paper}}.

\noindent 2) Designing the PE interface only needs a scalar that encapsulates input-output dynamics of an IBR, and does not require the detailed control schemes of the IBR \textcolor{black}{or network topology information}. Exposing such a scalar to NMPs will not cause any IP concerns for IBR manufacturers, as the detailed IBR control schemes cannot be inferred only based on the scalar. \textcolor{black}{Compared with our approach, some existing methods require either the detailed IBR models \cite{Control_para_17,self_displ,apparent_impedance_para_tuning, UCF_paper,tsinghua_condition} or the topology information \cite{active_damper} to enforce stability.}

\noindent 3) The proposed approach can address the high-order dynamics due to the tight interaction among voltage and current controllers, and network dynamics in the EMT time scale\textcolor{black}{, whereas such complexity of dynamics of the IBR network is ignored by some existing methods \cite{self_displ, amit_dist, UCF_paper, tsinghua_condition}.}



\section{Microgrid Dynamics} \label{sec:formulation}
This section describes the nodal and network dynamics of the microgrid with $N$ IBRs. The microgrid dynamics is organized into a feedback architecture lending itself to developing a device-level stability condition.
\subsection{Dynamics of IBRs} \label{subsection:IBR-dynamics}
This paper considers two types of IBRs: grid-forming (GFM) and grid-following (GFL) IBRs. Figures \ref{fig:IBRn} and \ref{fig:IBRn-GFL} present the representative architectures of these two types of IBRs. The dynamics of the representative GFM and GFL IBRs are elaborated in Appendices \ref{app: GFM_IBR} and \ref{app: GFL_IBR}. It can be observed from Figures \ref{fig:IBRn} and \ref{fig:IBRn-GFL} that both GFM and GFL IBRs interact with the rest of the microgrid via terminal voltages $\mathbf{v}_{\text{o}n}$ and terminal currents $\mathbf{i}_{\text{o}n}$, while they are governed by different internal state vector $\textbf{x}_n$\footnote{$\mathbf{x}_n$ will be $[\phi_{\text{d}n}, \phi_{\text{q}n}, \gamma_{\text{d}n}, \gamma_{\text{q}n}, i_{\text{ld}n}, i_{\text{lq}n}, v_{\text{od}n}, v_{\text{oq}n}]^{\top}$, if the $n$-th IBR is GFM and its dynamics is presented in Appendix \ref{app: GFM_IBR}; $\mathbf{x}_n$ will be $[\gamma_{\text{d}n}, \gamma_{\text{q}n}, i_{\text{ld}n}, i_{\text{lq}n}, v_{\text{od}n}, v_{\text{oq}n}]^{\top}$, if the $n$-th IBR is GFL and its dynamics is presented in \ref{app: GFL_IBR}. Each state in $\textbf{x}_n$ is explained in Appendices \ref{app: GFM_IBR} and \ref{app: GFL_IBR}.}.
This paper focuses on stabilizing the fast, system-level dynamics of microgrids in the EMT time scale where the setpoints dispatched by secondary controllers are assumed to be constant.
The small-signal dynamics of an IBR in such a time scale can be described by
\begin{subequations} \label{eq:compact_deviation}
  \begin{align}
  &\Delta \dot{\mathbf{x}}_n = A_n \Delta \mathbf{x}_n + B_n \Delta \mathbf{i}_{\text{odq}n}\\
  & \Delta \mathbf{v}_{\text{odq}n} = C_n \Delta \mathbf{x}_n
  \end{align}
\end{subequations}
where the ``$\Delta$'' variables are the deviations of the corresponding variables from their steady states; $\textbf{i}_{\text{odq}n}$ ($:= [i_{\text{od}n}, i_{\text{oq}n}]^{\top}$) is the terminal current $\textbf{i}_{\text{o}n}$ represented in the direct-quadrature (d-q) reference frame of the $n$-th IBR; $\mathbf{v}_{\text{odq}n}$ ($:= [v_{\text{od}n}, v_{\text{oq}n}]^{\top}$) is the terminal voltage $\textbf{v}_{\text{o}n}$ represented in the d-q frame; and matrices $A_n$, $B_n$, and $C_n$ are derived from the IBR dynamics presented in Appendices \ref{app: GFM_IBR} and \ref{app: GFL_IBR}. The input-output relationship of the dynamics of IBR $n$ is shown in the central block of Figure \ref{fig:IO-IBR}-(a). The input $\Delta \mathbf{i}_{\text{odq}n}$ and output $\Delta \mathbf{v}_{\text{odq}n}$ interact with the rest of the microgrid in a common reference frame (i.e., D-Q frame). Next, we present the reference frame transformation \cite{TimGreenModel,sauer2017power} that converts the d-q variables to the D-Q frame.

\begin{figure}
    \centering
    \includegraphics[width = 0.85\linewidth]{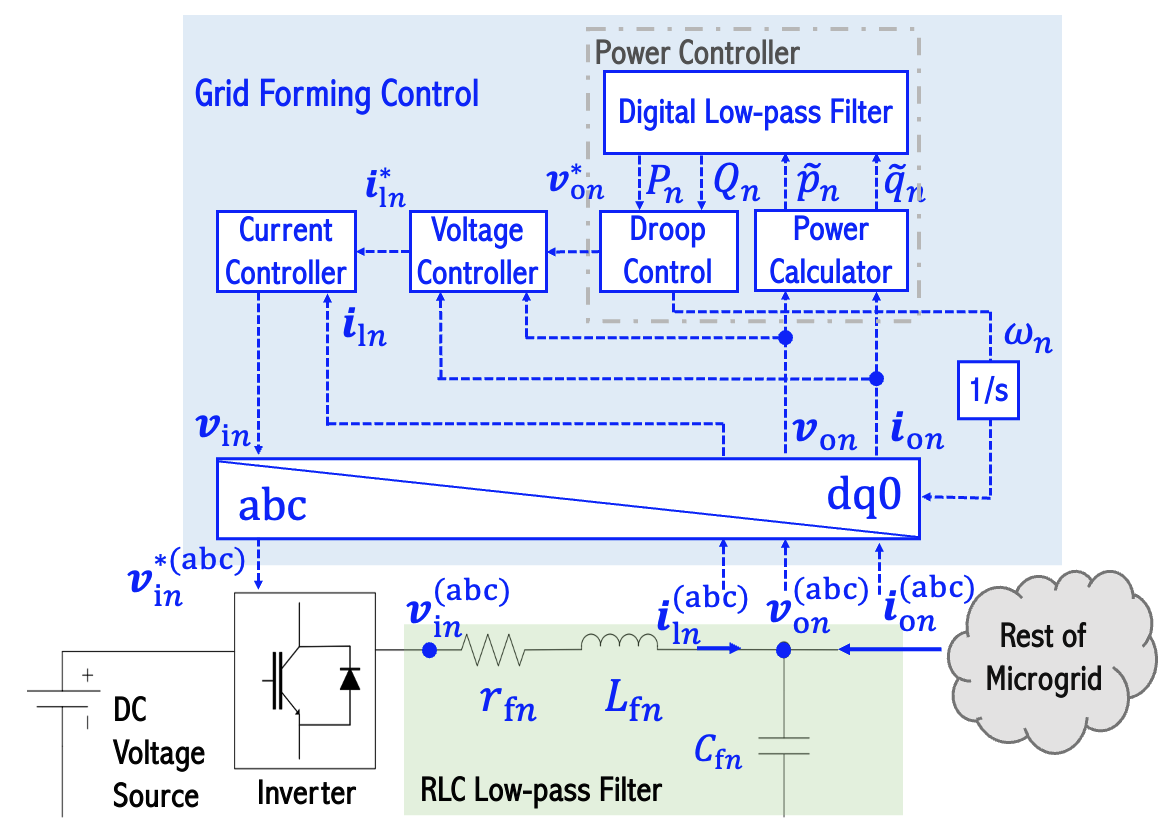}
    \caption{Cyber and physical architecture of a grid-forming IBR.}
    \label{fig:IBRn}
\end{figure}

\begin{figure}
    \centering
    \includegraphics[width = 0.85\linewidth]{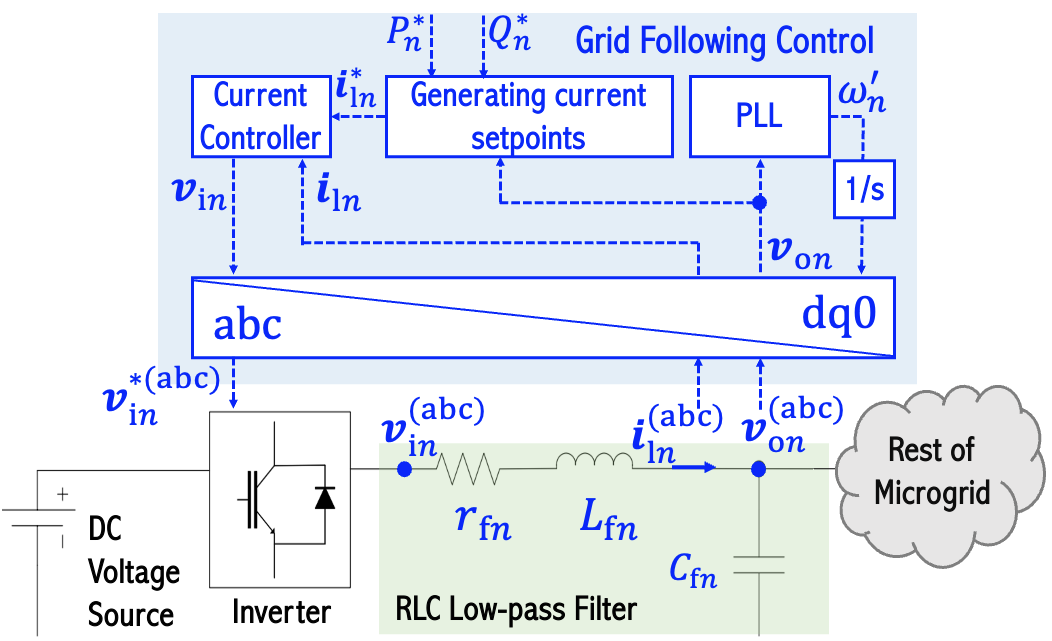}
    \caption{Cyber and physical architecture of a grid-following IBR.}
    \label{fig:IBRn-GFL}
\end{figure}

\begin{figure}
    \centering
    \includegraphics[width = \linewidth]{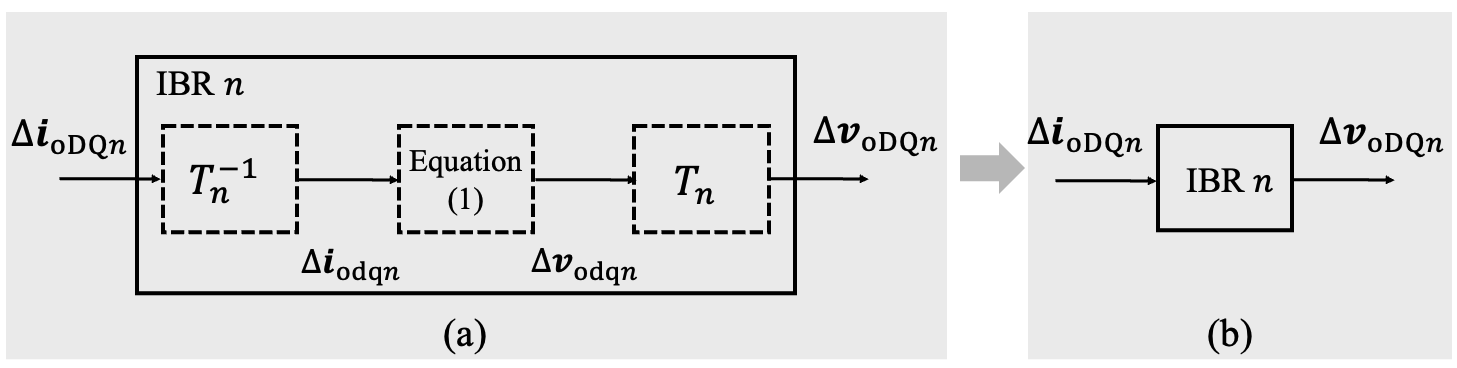}
    \caption{(a) Reference frame transformation; and (b) an input-output perspective of IBR dynamics}
    \label{fig:IO-IBR}
\end{figure}


In Figure \ref{fig:IO-IBR}-(a), the output $\Delta \mathbf{v}_{\text{oDQ}n}:= [\Delta v_{\text{oD}n},
        \Delta v_{\text{oQ}n}]^{\top}$ is obtained by
$\Delta \mathbf{v}_{\text{oDQ}n} =
    T_n
    \Delta \mathbf{v}_{\text{odq}n}$
where
\begin{equation} \label{eq:dq2DQ}
    T_n =
    \begin{bmatrix}
        \cos{\delta_n} & -\sin{\delta_n}\\
        \sin{\delta_n} & \cos{\delta_n}
    \end{bmatrix}.
\end{equation}
Note that $\delta_n$ is assumed to be a constant, since it changes much slower than the states $\mathbf{x}_n$ in the time scale of interest.
Similarly, the relationship between $\Delta \mathbf{i}_{\text{oDQ}n}:=[\Delta i_{\text{oD}n}, \Delta i_{\text{oQ}n}]^{\top}$ and $\Delta \mathbf{i}_{\text{odq}n}$ are described by
$\Delta \mathbf{i}_{\text{odq}n} = T_n^{-1} \Delta \mathbf{i}_{\text{oDQ}n}$.     
With the above definitions, IBR $n$ can be viewed as a dynamic system that is driven by $\Delta \mathbf{i}_{\text{oDQ}n}$ while outputting $\Delta \mathbf{v}_{\text{oDQ}n}$ (Figure \ref{fig:IO-IBR}-(b)).  

\subsection{Dynamics of Microgrid Network}
Assume that the microgrid with $N$ IBRs is three-phase balanced and hosts constant-impedance load. By the Kron reduction technique, the microgrid network can be reduced to a network with $N+1$ node and $M$ branches. One of the $N+1$ node is the neutral/reference point of the microgrid. Let set $\mathcal{N}:=\{0,1,2,\ldots, N\}$ collect the nodal indices of the Kron-reduced network where ``$0$'' denotes the nodal index for the neutral point. Let set $\mathcal{M}:=\{1,2,\ldots,M\}$ collect branch indices of the reduced network. Another way to represent branch $m$ is to use a pair $(i,j)_m$ where $i,j\in\mathcal{N}$ correspond to the two nodes of the two terminals of branch $m$. Suppose that $i<j$, we define the positive direction assigned to branch $m$ is from node $i$ to $j$. 

The $M$ branches in the Kron-reduced network can be divided into two categories. Let $\mathcal{E}_1$ collect the branches connecting to the neutral point via an IBR, while set $\mathcal{E}_2$ collects the rest of the branches. The dynamics of branches in $\mathcal{E}_1$ are governed by equations presented in Section \ref{subsection:IBR-dynamics}, whereas the dynamic behaviors of the branches in $\mathcal{E}_2$ are modeled by RL circuits with resistor $r_{\text{b}m}$ and inductance $L_{\text{b}m}$:
\begin{subequations} \label{eq:network_dyn}
	\begin{align}
		&L_{\text{b}m}\dot{i}_{\text{bD}m} = -r_{\text{b}m}i_{\text{bD}m} + \omega_0 L_{\text{b}m} i_{\text{bQ}m} + v_{{\text{bD}}m} \\
		&L_{\text{b}m}\dot{i}_{\text{bQ}m} = -r_{\text{b}m}i_{\text{bQ}j} - \omega_0 L_{\text{b}m} i_{\text{bD}m} + v_{{\text{bQ}}m} 
	\end{align}
\end{subequations}
where $m \in \mathcal{E}_2$; the subscript ``b'' reminds readers that the corresponding variables are used for describe branches without IBRs; the subscripts ``D'' and ``Q'' suggest the corresponding variables are in the common reference frame (the D-Q frame);
$v_{{\text{bD}}m}$ and $v_{{\text{bQ}}m}$ are the bus voltage differences of branch $(i,j)_m$ in the D- and Q- axis, i.e., $v_{{\text{bD}}m} = v_{\text{D}i}-v_{\text{D}j}$ and $v_{{\text{bQ}}m} = v_{\text{Q}i}-v_{\text{Q}j}$.

\begin{figure}
    \centering
    \includegraphics[width = 0.35\linewidth]{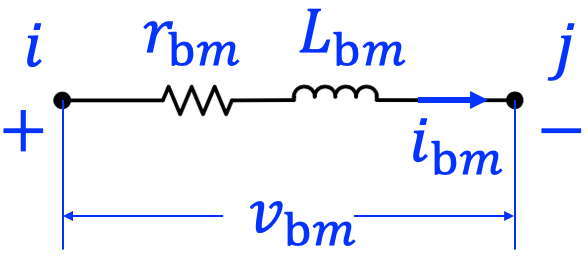}
    \caption{Branch $(i,j)_m$ in $\mathcal{E}_2$}
    \label{fig:branch-m}
\end{figure}

To characterize the relationship between branch currents $i_{\text{b}m}$ for $m\in \mathcal{M}$, we introduce a \emph{reduced incidence matrix} $C'\in \mathbb{R}^{N\times M}$ whose entries are $c'_{n,m}$ with $n\in \mathcal{N} \backslash \{0\}$ and $m\in \mathcal{M}$. Each entry $c'_{n,m}$ in matrix $C'$ is defined as follows: $c'_{n,m}=1$ if branch $m$ is incident at node $n$, and the reference direction of branch $m$ is away from node $n$; $c'_{n,m}=-1$ if branch $m$ is incident at node $n$, and the reference direction of branch $m$ is toward to node $n$; and $c'_{n,m}=0$ if branch $m$ is not incident at node $n$.


With the reference direction defined before, one can assign indices of nodes and branches such that the reduced incidence matrix $C'$ has the following structure \cite{ilic2000dynamics}
\begin{equation} \label{eq: Cp}
    C' =
    \begin{bmatrix}
         C_0 & -I_N
    \end{bmatrix}
\end{equation}
where $C_0$ is the first $M-N$ columns of matrix $C'$; and $I_N$ is a $N$-dimension identity matrix.

Next, we present the compact form of Kirchhoff’s Current Law (KCL), with the incident matrix $C'$. Let $M'$ be $M-N$. The KCL of the microgrid network in terms of direct/quadrature current leads to
\begin{equation} \label{eq:KCL}
	C'\mathbf{i}_{\text{D}} = \mathbf{0}; \quad C'\mathbf{i}_{\text{Q}} = \mathbf{0}
\end{equation}
where $\mathbf{i}_{\text{D}} = [\mathbf{i}_{\text{bD}}^{\top},\mathbf{i}_{\text{sD}}^{\top}]^{\top}$ with $\mathbf{i}_{\text{bD}} = [i_{\text{bD}1}, \ldots, i_{\text{bD}M'}]^{\top}$, $\mathbf{i}_{\text{sD}} = [i_{\text{sD}1},\ldots, i_{\text{sD}N}]^{\top}$; and $\mathbf{i}_{\text{Q}} = [\mathbf{i}_{\text{bQ}}^{\top},\mathbf{i}_{\text{sQ}}^{\top}]^{\top}$ with $\mathbf{i}_{\text{bQ}} = [i_{\text{bQ}1}, \ldots, i_{\text{bQ}N}]^{\top}$, $\mathbf{i}_{\text{sQ}} = [i_{\text{sQ}1},\ldots, i_{\text{sQ}N}]^{\top}$. 
Plugging \eqref{eq: Cp} into \eqref{eq:KCL} leads to
\begin{equation} \label{eq:current}
	\mathbf{i}_{\text{sD}}=C_0 \mathbf{i}_{\text{bD}};  \quad \mathbf{i}_{\text{sQ}}=C_0 \mathbf{i}_{\text{bQ}}.
\end{equation}

Moreover, the relationship between the voltages across branches and the nodal voltages can be described by
\begin{equation} \label{eq:V_branch}
	\mathbf{v}_{\text{D}} = C'^{\top}\mathbf{v}_{\text{oD}}; \quad \mathbf{v}_{\text{Q}} = C'^{\top}\mathbf{v}_{\text{oQ}}
\end{equation}
In \eqref{eq:V_branch}, $\mathbf{v}_{\text{D}} = [\mathbf{v}_{\text{bD}}^{\top}, \mathbf{v}_{\text{oD}}^{\top}]^{\top}$ and $\mathbf{v}_{\text{Q}} = [\mathbf{v}_{\text{bQ}}^{\top}, \mathbf{v}_{\text{oQ}}^{\top}]^{\top}$, where the voltages across branches $\mathbf{v}_{\text{bD}} = [v_{{\text{bD}}1}, \ldots, v_{{\text{bD}}M'}]^{\top}$; $\mathbf{v}_{\text{bQ}} = [v_{{\text{bQ}}1}, \ldots, v_{{\text{bQ}}M'}]^{\top}$; and nodal voltages $\mathbf{v}_{\text{oD}} = [v_{\text{oD}1}, \ldots, v_{\text{oD}M'}]^{\top}$, and $\mathbf{v}_{\text{oQ}} = [v_{\text{oQ}1}, \ldots, v_{\text{oQ}M'}]^{\top}$ where $v_{\text{oD}m}$ and $v_{\text{oQ}m}$ are obtained by casting $v_{\text{od}m}$ and $v_{\text{oq}m}$ to the D-Q frame by \eqref{eq:dq2DQ}.

Plugging \eqref{eq: Cp} into \eqref{eq:V_branch} leads to \cite{ilic2000dynamics}
\begin{equation}\label{eq:voltage}
	\mathbf{v}_{\text{bD}} = C_0^{\top}\mathbf{v}_{\text{oD}}; \quad \mathbf{v}_{\text{bQ}} = C_0^{\top}\mathbf{v}_{\text{oQ}}
\end{equation}
Define the following vectors:
\begin{equation}
	\mathbf{i}_{\text{sDQ}} =
	\begin{bmatrix}
		\mathbf{i}_{\text{sD}}\\
		\mathbf{i}_{\text{sQ}}
	\end{bmatrix}; 
	\mathbf{i}_{\text{bDQ}} =
	\begin{bmatrix}
		\mathbf{i}_{\text{bD}}\\
		\mathbf{i}_{\text{bQ}}
	\end{bmatrix};
	\mathbf{v}_{\text{oDQ}} = 
	\begin{bmatrix}
		\mathbf{v}_{\text{oD}}\\
		\mathbf{v}_{\text{oQ}}
	\end{bmatrix}.
\end{equation}
The branch dynamics \eqref{eq:network_dyn} can be organized into
\begin{subequations} \label{eq:systme_H}
	\begin{align}
		&L\dot{\mathbf{i}}_{\text{bDQ}} = -R\mathbf{i}_{\text{bDQ}} + W \mathbf{i}_{\text{bDQ}} + C^{\top}\mathbf{v}_{\text{oDQ}}\\
		&\mathbf{i}_{\text{sDQ}} = C\mathbf{i}_{\text{bDQ}}
		\end{align}
\end{subequations}
where 
\begin{subequations} 
    \begin{align}
        & L = \text{diag}(L_1, \ldots, L_{M'}, L_1, \ldots, L_{M'})\\
        & R = \text{diag}(r_1, \ldots, r_{M'}, r_1, \ldots, r_{M'}) \\
        & W = \begin{bmatrix}
		0_{M'\times M'} & \omega_0 I_{M'}\\
		-\omega_0 I_{M'} & 0_{{M'}\times {M'}}
	\end{bmatrix}.
    \end{align}
\end{subequations}
Since \eqref{eq:systme_H} is linear, the following equations also hold:
\begin{subequations} \label{eq:systme_H}
	\begin{align}
		&L\Delta\dot{\mathbf{i}}_{\text{bDQ}} = -R\Delta\mathbf{i}_{\text{bDQ}} + W \Delta\mathbf{i}_{\text{bDQ}} + C^{\top}\Delta\mathbf{v}_{\text{oDQ}}\\
		&\Delta\mathbf{i}_{\text{sDQ}} = C\Delta\mathbf{i}_{\text{bDQ}}
		\end{align}
\end{subequations}
where the ``$\Delta$'' variables are the deviations of the original variables from their steady states.

\subsection{A Feedback Perspective of Microgrid Dynamics}

The interaction between the IBRs and the microgrid network can be interpreted from a feedback perspective shown in Figure \ref{fig:feedback}. The IBR dynamics $\mathcal{F}_n$ ($n$= 1, 2, \ldots, N) constitute the feed-forward loop $\mathcal{F}$, whereas the feedback loop $\mathcal{B}$ results from the network dynamics \eqref{eq:systme_H}. The input of $\mathcal{F}$ is $\Delta \mathbf{i}_{\text{oDQ}}$ defined by
$\Delta \mathbf{i}_{\text{oDQ}} = -\Delta \mathbf{i}_{\text{sDQ}}$
where the negative \textcolor{black}{sign} results from the reference directions of $i_{\text{o}n}$ and $i_{\text{s}n}$ defined before: recall that the positive reference direction of $i_{\text{o}n}$ points into the IBR $n$, while the positive reference direction of $i_{\text{s}n}$ points into the network. The output of $\mathcal{F}$ is $\Delta \mathbf{v}_{\text{oDQ}}$ which drives the network dynamics \eqref{eq:systme_H}.

With Figure \ref{fig:feedback}, the dynamics of the microgrid with $N$ IBRs can be interpreted as follows. At time step $k$, current $\Delta \mathbf{i}_{\text{oDQ}n}[k]$ for $n = 1, 2, \ldots, N$ drives the dynamics of system $\mathcal{F}$ which updates the internal state variables $\Delta \mathbf{x}_n[k+1]$ and outputs
voltage $\Delta v_{\text{oDQ}n}[k]$. The voltages $\Delta \mathbf{v}_{\text{oDQ}n}[k]$ further drive the dynamics of the microgrid network to update the internal state variables of the network and produce $\Delta \mathbf{i}_{\text{sDQ}}[k+1]$. The updated currents $\Delta \mathbf{i}_{\text{sDQ}}[k+1]$ drives the dynamics of the IBRs, and the process described above repeats. Such a feedback perspective lends itself to introducing an IBR-level stability condition based on the passivity theory.

\begin{figure} 
    \centering
    \includegraphics[width = 0.6\linewidth]{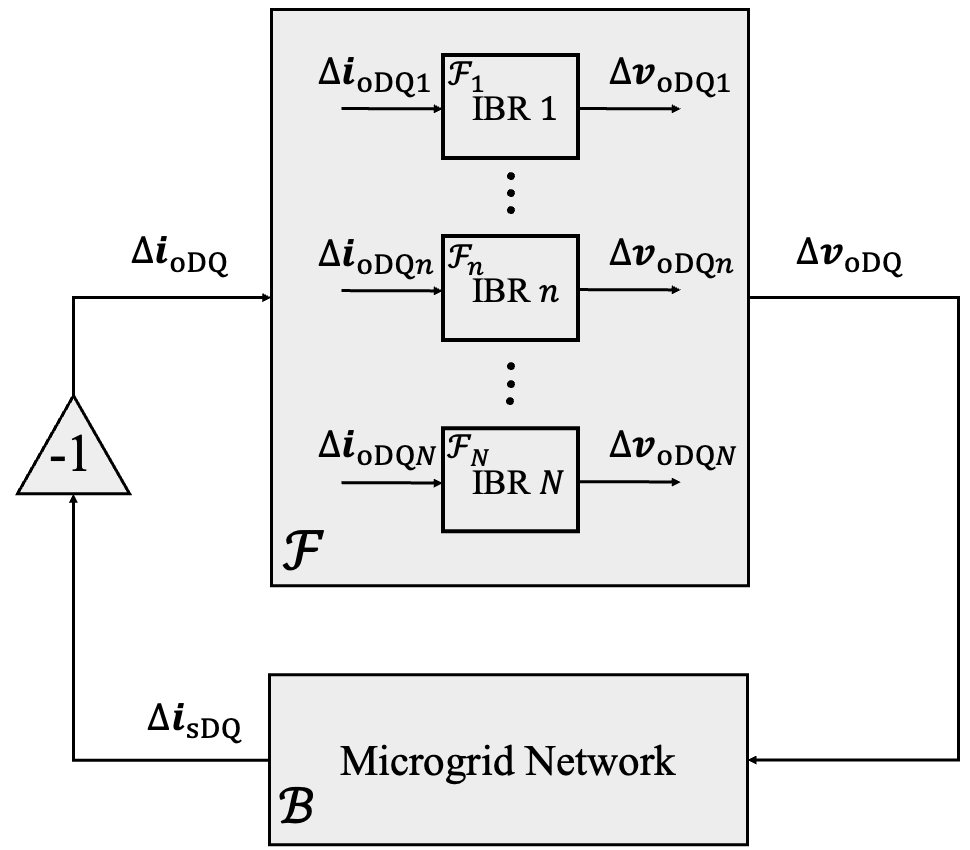}
    \caption{A feedback perspective of microgrid dynamics}
    \label{fig:feedback}
\end{figure}
\section{Decentralized Stability Condition} \label{sec:protocol}
This section answers the question of what condition each IBR should satisfy such that they establish a stable microgrid. 
\subsection{Stability of Interconnected Systems}
We proceed to present several definitions necessary for introducing the stability condition. For a system $\mathcal{H}$ with input $\mathbf{u} \in \mathbb{R}^d$ and output $\mathbf{y} \in \mathbb{R}^d$, the next two definitions examine the input-output properties of $\mathcal{H}$:
\begin{definition} \label{def:OFP}
    (OFP \cite{hill1977stability}) The system $\mathcal{H}:\mathbf{u} \rightarrow \mathbf{y}$ is output feedback passive (OFP), if for all square integrable $\mathbf{u}(t)$ and some $\sigma>0$,
    \begin{equation} \label{eq:OFP_ineq}
        \int_0^t\mathbf{u}(\tau)^{\top} \mathbf{y}(\tau) d\tau -\sigma \int_0^t\mathbf{y}(\tau)^{\top} \mathbf{y}(\tau) d\tau  \ge 0,
    \end{equation}
    with a zero initial condition. Moreover, $\sigma$ is called the passivity index.
\end{definition}
\begin{definition}
    ($\mathcal{L}_2$ Gain \cite{hill1977stability}) The system $\mathcal{H}: \mathbf{u}\rightarrow \mathbf{y}$ has finite $\mathcal{L}_2$ gain $\gamma>0$ if for all square integrable $\mathbf{u}$
    \begin{equation}
         \int_0^t\mathbf{y}(\tau)^{\top} \mathbf{y}(\tau) d\tau \le \gamma^2 \int_0^t\mathbf{u}(\tau)^{\top} \mathbf{u}(\tau) d\tau,
    \end{equation}
    with a zero initial condition.
\end{definition}

The link between asymptotic stability and the output feedback passivity is established by the following lemma \cite{hill1977stability}:
\begin{lemma} \label{lemma:DHill}
    (Corollary 1 in \cite{hill1977stability}) The equilibrium point $\mathbf{o}$ of the closed-loop system in Figure \ref{fig:feedback} is asymptotically stable, if both subsystems $\mathcal{F}$ and $\mathcal{B}$ are output feedback passive.
\end{lemma}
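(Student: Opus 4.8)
The plan is to build a Lyapunov function for the closed-loop system \eqref{eq: fx} by \emph{adding} the storage functions that certify output feedback passivity of $\mathcal{F}$ and $\mathcal{B}$, and then to exploit the feedback interconnection structure of Figure \ref{fig:feedback} to show this sum is non-increasing along trajectories. First I would recall the dissipative-systems characterization underlying Definition \ref{def:OFP}: if a system $\mathcal{H}:\mathbf{u}\to\mathbf{y}$ is OFP with index $\sigma$ (and is reachable from the origin), then there is a $C^1$ storage function $V_{\mathcal{H}}\ge 0$ with $V_{\mathcal{H}}(\mathbf{o})=0$ whose derivative along trajectories satisfies the infinitesimal dissipation inequality $\dot V_{\mathcal{H}}\le \mathbf{u}^{\top}\mathbf{y}-\sigma\,\mathbf{y}^{\top}\mathbf{y}$. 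Applying this to $\mathcal{F}$ (input $\Delta\mathbf{i}_{\text{oDQ}}$, output $\Delta\mathbf{v}_{\text{oDQ}}$, index $\sigma_{\mathcal{F}}>0$) and to $\mathcal{B}$ (input $\Delta\mathbf{v}_{\text{oDQ}}$, output $\Delta\mathbf{i}_{\text{sDQ}}$, index $\sigma_{\mathcal{B}}>0$) gives storage functions $V_{\mathcal{F}}$ in the IBR states $\Delta\mathbf{x}_n$ and $V_{\mathcal{B}}$ in the network states $\Delta\mathbf{i}_{\text{bDQ}}$.

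Next I would take $V:=V_{\mathcal{F}}+V_{\mathcal{B}}$ as the candidate Lyapunov function for \eqref{eq: fx}, which is nonnegative and vanishes at $\mathbf{o}$. Differentiating and summing the two dissipation inequalities yields $\dot V \le \Delta\mathbf{i}_{\text{oDQ}}^{\top}\Delta\mathbf{v}_{\text{oDQ}} + \Delta\mathbf{v}_{\text{oDQ}}^{\top}\Delta\mathbf{i}_{\text{sDQ}} - \sigma_{\mathcal{F}}\norm{\Delta\mathbf{v}_{\text{oDQ}}}^{2} - \sigma_{\mathcal{B}}\norm{\Delta\mathbf{i}_{\text{sDQ}}}^{2}$. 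Substituting the interconnection constraint $\Delta\mathbf{i}_{\text{oDQ}}=-\Delta\mathbf{i}_{\text{sDQ}}$ makes the two supply-rate cross terms cancel exactly, leaving $\dot V \le -\sigma_{\mathcal{F}}\norm{\Delta\mathbf{v}_{\text{oDQ}}}^{2}-\sigma_{\mathcal{B}}\norm{\Delta\mathbf{i}_{\text{sDQ}}}^{2}\le 0$. Thus $V$ is non-increasing, which already delivers Lyapunov stability of $\mathbf{o}$. To upgrade to asymptotic stability I would apply LaSalle's invariance principle: trajectories converge to the largest invariant set inside $\{\mathbf{x}:\dot V=0\}$, on which necessarily $\Delta\mathbf{v}_{\text{oDQ}}\equiv\mathbf{0}$ and $\Delta\mathbf{i}_{\text{sDQ}}\equiv\mathbf{0}$ (hence $\Delta\mathbf{i}_{\text{oDQ}}\equiv\mathbf{0}$ as well); zero-state detectability of $\mathcal{F}$ and $\mathcal{B}$ — identically zero input and output forcing the internal states to the origin — then collapses this set to $\{\mathbf{o}\}$.

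I expect the main obstacle to be this final step rather than the algebra: the dissipation inequalities alone certify only stability, and asymptotic stability additionally needs either positive-definiteness of $V$ or the detectability/observability property implicit in the cited Corollary~1 of \cite{hill1977stability}. Secondary technical points to be careful about are that the storage functions are genuine (nonnegative, $C^1$, zero at the origin and locally finite), and that the sign convention of the feedback in Figure \ref{fig:feedback} — fixed by $\Delta\mathbf{i}_{\text{oDQ}}=-\Delta\mathbf{i}_{\text{sDQ}}$ — is precisely the one that cancels the cross terms rather than reinforcing them; since the statement is quoted verbatim from \cite{hill1977stability}, the remaining regularity conditions can be imported from that reference.
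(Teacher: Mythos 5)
Your proposal is correct and is essentially the standard Hill--Moylan storage-function argument that underlies the cited Corollary~1 of \cite{hill1977stability}; the paper itself offers no proof of this lemma and simply imports it from that reference, so there is nothing in the paper to diverge from. You correctly handle the sign convention $\Delta\mathbf{i}_{\text{oDQ}}=-\Delta\mathbf{i}_{\text{sDQ}}$ so that the supply-rate cross terms cancel, and you rightly flag the one point that the bare dissipation inequalities do not deliver: asymptotic (rather than merely Lyapunov) stability requires the positive-definiteness of the summed storage function or a zero-state detectability hypothesis, which is exactly the extra structure assumed in the cited corollary.
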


Lemma \ref{lemma:DHill} guides one to design a decentralized condition for each IBR to ensure system-level stability. Subsection \ref{subs: OFP_network} examines the OFP property of the feedback loop $\mathcal{B}$ in Figure \ref{fig:feedback}. Subsection \ref{subs: Protocol} introduces the condition that ensures the OFP property of the feed-forward loop $\mathcal{F}$.

\subsection{Output Feedback Passivity of Microgrid Networks} \label{subs: OFP_network}
To establish the asymptotic stability, Lemma \ref{lemma:DHill} requires the RL network $\mathcal{B}$ to be OFP. While it is well known that a RL network is passive, how to quantify the extent that the RL network is passive has not been well studied yet in the power and energy community. The OFP property of \eqref{eq:systme_H} in the DQ frame is established by the following theorem:
\begin{theorem}
 \label{thm:network}
	(Network Passivity Index) The microgrid network dynamics \eqref{eq:systme_H} is OFP with input $\Delta \mathbf{v}_{\text{oDQ}}$ and output $\Delta \mathbf{i}_{\text{sDQ}}$, if matrix $C^{\top}C$ has at least one positive eigenvalue.
\end{theorem}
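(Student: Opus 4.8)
The plan is to exhibit a quadratic storage function for the network subsystem \eqref{eq:systme_H} and to read off a valid passivity index from the resulting dissipation inequality. Since $L$ is diagonal with strictly positive entries, the natural candidate is the magnetic (co-)energy $V(\Delta\mathbf{i}_{\text{bDQ}}) = \tfrac12\,\Delta\mathbf{i}_{\text{bDQ}}^{\top}L\,\Delta\mathbf{i}_{\text{bDQ}} \ge 0$, which vanishes at the zero initial condition stipulated in Definition \ref{def:OFP}.

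First I would differentiate $V$ along trajectories of \eqref{eq:systme_H}. Substituting $L\Delta\dot{\mathbf{i}}_{\text{bDQ}} = -R\Delta\mathbf{i}_{\text{bDQ}} + W\Delta\mathbf{i}_{\text{bDQ}} + C^{\top}\Delta\mathbf{v}_{\text{oDQ}}$ gives $\dot V = -\Delta\mathbf{i}_{\text{bDQ}}^{\top}R\Delta\mathbf{i}_{\text{bDQ}} + \Delta\mathbf{i}_{\text{bDQ}}^{\top}W\Delta\mathbf{i}_{\text{bDQ}} + \Delta\mathbf{i}_{\text{bDQ}}^{\top}C^{\top}\Delta\mathbf{v}_{\text{oDQ}}$. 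The rotational coupling term drops out because $W$ is skew-symmetric ($W^{\top}=-W$), so $\Delta\mathbf{i}_{\text{bDQ}}^{\top}W\Delta\mathbf{i}_{\text{bDQ}}=0$. Recognizing $C\Delta\mathbf{i}_{\text{bDQ}} = \Delta\mathbf{i}_{\text{sDQ}}$ as the declared output, the last term is the supply rate $\Delta\mathbf{v}_{\text{oDQ}}^{\top}\Delta\mathbf{i}_{\text{sDQ}}$, leaving $\dot V = \Delta\mathbf{v}_{\text{oDQ}}^{\top}\Delta\mathbf{i}_{\text{sDQ}} - \Delta\mathbf{i}_{\text{bDQ}}^{\top}R\Delta\mathbf{i}_{\text{bDQ}}$.

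The remaining step is to dominate the dissipation term by the output energy. I would choose $\sigma = \lambda_{\min}(R)/\lambda_{\max}(C^{\top}C)$; here the hypothesis enters, since $C^{\top}C$ having at least one positive eigenvalue is exactly what guarantees $\lambda_{\max}(C^{\top}C)>0$, while $\lambda_{\min}(R)>0$ because every branch resistance is positive. With this choice $R \succeq \sigma\,C^{\top}C$, hence $\Delta\mathbf{i}_{\text{bDQ}}^{\top}R\Delta\mathbf{i}_{\text{bDQ}} \ge \sigma\,\Delta\mathbf{i}_{\text{bDQ}}^{\top}C^{\top}C\,\Delta\mathbf{i}_{\text{bDQ}} = \sigma\,\Delta\mathbf{i}_{\text{sDQ}}^{\top}\Delta\mathbf{i}_{\text{sDQ}}$. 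Plugging this into the dissipation inequality yields $\dot V \le \Delta\mathbf{v}_{\text{oDQ}}^{\top}\Delta\mathbf{i}_{\text{sDQ}} - \sigma\,\Delta\mathbf{i}_{\text{sDQ}}^{\top}\Delta\mathbf{i}_{\text{sDQ}}$; integrating on $[0,t]$ and using $V(t)\ge0$, $V(0)=0$ gives precisely the OFP inequality \eqref{eq:OFP_ineq} with input $\Delta\mathbf{v}_{\text{oDQ}}$, output $\Delta\mathbf{i}_{\text{sDQ}}$, and index $\sigma$.

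I do not anticipate a serious obstacle; the argument is a textbook dissipativity computation. The points that need care are (i) invoking the skew-symmetry of $W$ to cancel the cross-coupling term in the energy balance, (ii) stating explicitly that $R$ is positive definite so that $\lambda_{\min}(R)>0$, and (iii) observing that the hypothesis on $C^{\top}C$ is exactly the condition that renders the candidate index strictly positive, which matters because Definition \ref{def:OFP} requires $\sigma>0$ rather than $\sigma\ge 0$.
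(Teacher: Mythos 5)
Your proposal is correct and follows essentially the same route as the paper's proof: the same storage function $V=\tfrac12\,\Delta\mathbf{i}_{\text{bDQ}}^{\top}L\,\Delta\mathbf{i}_{\text{bDQ}}$, the same cancellation via skew-symmetry of $W$, and the same passivity index $\lambda_{\min}(R)/\lambda_{\max}(C^{\top}C)$ obtained from the two eigenvalue bounds. The only difference is presentational — you state the dissipation inequality in differential form and integrate, whereas the paper manipulates the integral supply rate directly — and your explicit remark that the hypothesis on $C^{\top}C$ is what makes the index strictly positive is a welcome clarification of where that assumption is actually used.
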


\begin{proof}
By definition,
	\begin{equation*}
	\begin{aligned}
		&\int_0^t \Delta \mathbf{i}_{\text{sDQ}}^{\top}\Delta \mathbf{v}_{\text{oDQ}} d \tau = \int_0^t \Delta \mathbf{i}_{\text{bDQ}}^{\top}C^{\top}\Delta \mathbf{v}_{\text{oDQ}} d \tau\\
		&= \int_0^t \left(\Delta\mathbf{i}_{\text{bDQ}}^{\top}L\Delta\dot{\mathbf{i}}_{\text{bDQ}} + \Delta\mathbf{i}_{\text{bDQ}}^{\top}R\Delta\mathbf{i}_{\text{bDQ}} - \Delta\mathbf{i}_{\text{bDQ}}^{\top}W\Delta\mathbf{i}_{\text{bDQ}} \right) d \tau\\
	\end{aligned}
	\end{equation*}
Note that $W = -W^{\top}$ and $\Delta\mathbf{i}_{\text{bDQ}}^{\top}W\Delta\mathbf{i}_{\text{bDQ}}$ is a scalar. Then,
\begin{equation} \label{eq:proof1}
	\begin{aligned}
    \Delta\mathbf{i}_{\text{bDQ}}^{\top}W\Delta\mathbf{i}_{\text{bDQ}} &= \left(\Delta\mathbf{i}_{\text{bDQ}}^{\top}W\Delta\mathbf{i}_{\text{bDQ}}\right)^\top\\
    &= -\Delta\mathbf{i}_{\text{bDQ}}^{\top} W \Delta\mathbf{i}_{\text{bDQ}}.
    \end{aligned}
\end{equation}
Equation \eqref{eq:proof1} leads to $2\Delta\mathbf{i}_{\text{bDQ}}^{\top}W\Delta\mathbf{i}_{\text{bDQ}} = 0$, implying 
\begin{equation} \label{eq:proof2}
    \Delta\mathbf{i}_{\text{bDQ}}^{\top}W\Delta\mathbf{i}_{\text{bDQ}} = 0.
\end{equation}
Based on \eqref{eq:proof1} and \eqref{eq:proof2},
\begin{equation*}
	\begin{aligned}
		&\int_0^t \Delta \mathbf{i}_{\text{sDQ}}^{\top}\Delta \mathbf{v}_{\text{oDQ}} d \tau 
        = V(t) -  V_0 + \int_0^t\Delta\mathbf{i}_{\text{bDQ}}^{\top}R\Delta\mathbf{i}_{\text{bDQ}} d \tau\\
	\end{aligned}
	\end{equation*}
where 
\begin{subequations} 
    \begin{align}
        & V(t):=0.5\Delta\mathbf{i}_{\text{bDQ}}^{\top}(t)L\Delta\mathbf{i}_{\text{bDQ}}(t) \\
        & V_0:=0.5\Delta\mathbf{i}_{\text{bDQ}}^{\top}(0)L\Delta\mathbf{i}_{\text{bDQ}}(0).
    \end{align}
\end{subequations}

As matrices $L\succ 0$,
\begin{equation} \label{eq: long_ineq}
	\begin{aligned}
		&\int_0^t \Delta \mathbf{i}_{\text{sDQ}}^{\top}\Delta \mathbf{v}_{\text{oDQ}} d \tau 
        \ge - V_0 + \int_0^t\Delta\mathbf{i}_{\text{bDQ}}^{\top}R\Delta\mathbf{i}_{\text{bDQ}} d \tau\\
        &\ge - V_0 + \lambda_{\text{Rmin}}\int_0^t\Delta\mathbf{i}_{\text{bDQ}}^{\top}\Delta\mathbf{i}_{\text{bDQ}} d \tau\\
        &\ge - V_0 + \frac{\lambda_{\text{Rmin}}}{\lambda_{\text{Cmax}}}\int_0^t\Delta\mathbf{i}_{\text{sDQ}}^{\top}\Delta\mathbf{i}_{\text{sDQ}} d \tau
	\end{aligned}
	\end{equation}
where $\lambda_{\text{Rmin}}$ is the minimal eigenvalue of $R$; $\lambda_{\text{Cmax}}$ is the maximal eigenvalue of $C^{\top}C$; and $\lambda_{\text{Cmax}}>0$ as $C^{\top}C \succeq 0$.
The third line of \eqref{eq: long_ineq} is due to the fact that
\begin{equation}
	\begin{aligned}
    \Delta\mathbf{i}_{\text{sDQ}}^{\top}\Delta\mathbf{i}_{\text{sDQ}}&=\Delta\mathbf{i}_{\text{bDQ}}^{\top}C^{\top}C\Delta\mathbf{i}_{\text{bDQ}}\\
    & \le \lambda_{\text{Cmax}}\Delta\mathbf{i}_{\text{bDQ}}^{\top}\Delta\mathbf{i}_{\text{bDQ}.}
	\end{aligned}
\end{equation}
 The inequality \eqref{eq:OFP_ineq} is evaluated with a zero initial condition. By setting $\Delta\mathbf{i}_{\text{bDQ}}(0) = 0$, it follows that $V_0=0$ and dynamics \eqref{eq:systme_H} is OFP with passivity index $\lambda_{\text{Rmin}}/\lambda_{\text{Cmax}}$.
\end{proof}

\emph{Remark:} The proof of Theorem \ref{thm:network} reveals that the passivity index of an RL network depends not only on the minimal branch resistance, but also on the branches' connectivity.

\subsection{IBR-level Stability Condition} \label{subs: Protocol}
Theorem \ref{thm:network} suggests that the feedback loop $\mathcal{B}$ in Figure \ref{fig:feedback} is OFP. According to Lemma \ref{lemma:DHill}, the system-level asymptotic stability can be established, if the feed-forward loop $\mathcal{F}$ is OFP. This observation inspires us to design the following IBR-level stability condition that leads to the microgrid-level stability:

\noindent\textbf{Condition 1:} \emph{For $n = 1, 2, \ldots, N$, the dynamics of IBR $n$ with input $\Delta \mathbf{i}_{\text{odq}n}$ and output $\Delta \mathbf{v}_{\text{odq}n}$ is OFP.}

The ``P(assive)'' in Condition 1 should not be confused with the ``passive element'' defined in the circuit theory \cite{alexander2013fundamentals}. In the circuit theory, the passive element is an element that is ``not capable of generating energy'' \cite{alexander2013fundamentals}. However, whether an OFP component in the sense of Definition \ref{def:OFP} is capable of generating energy or not depends on the definition of its inputs and outputs. If an IBR satisfies Condition 1, it does not mean that the IBR cannot produce energy that powers its host microgrid, and it essentially means that the IBR cannot produce energy that leads disturbances to be sustained or amplified. Section \ref{sec:case_study} shows an example that an IBR satisfies Condition 1 but produces energy. Next we show satisfying Condition 1 leads to asymptotic stability.

\begin{theorem}
    The equilibrium point of the closed-loop system in Figure \ref{fig:feedback} is asymptotically stable if Condition 1 is satisfied.
\end{theorem}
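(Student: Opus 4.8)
The plan is to chain together the two results already established in this section. Theorem~\ref{thm:network} shows that the feedback loop $\mathcal{B}$ (the RL network with input $\Delta\mathbf{v}_{\text{oDQ}}$ and output $\Delta\mathbf{i}_{\text{sDQ}}$) is OFP, provided $C^\top C$ has a positive eigenvalue. Protocol~1 asserts that each IBR $n$, with input $\Delta\mathbf{i}_{\text{odq}n}$ and output $\Delta\mathbf{v}_{\text{odq}n}$, is OFP. Lemma~\ref{lemma:DHill} then says that if both $\mathcal{F}$ and $\mathcal{B}$ are OFP, the equilibrium is asymptotically stable. So the proof reduces to establishing one missing link: that the \emph{aggregate} feed-forward system $\mathcal{F}$ (with input $\Delta\mathbf{i}_{\text{oDQ}}$ and output $\Delta\mathbf{v}_{\text{oDQ}}$, i.e., all quantities in the common D-Q frame) is OFP, given that each individual IBR is OFP in its \emph{local} d-q frame.

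First I would handle the reference-frame issue. Each IBR $n$ relates its D-Q input/output to its d-q input/output through the orthogonal rotation $T_n$ of \eqref{eq:dq2DQ}: $\Delta\mathbf{v}_{\text{oDQ}n} = T_n \Delta\mathbf{v}_{\text{odq}n}$ and $\Delta\mathbf{i}_{\text{odq}n} = T_n^{-1}\Delta\mathbf{i}_{\text{oDQ}n} = T_n^\top \Delta\mathbf{i}_{\text{oDQ}n}$, since $T_n$ is a rotation and $\delta_n$ is treated as constant. Then the supply rate is frame-invariant:
\begin{equation*}
\Delta\mathbf{i}_{\text{oDQ}n}^\top \Delta\mathbf{v}_{\text{oDQ}n} = \Delta\mathbf{i}_{\text{oDQ}n}^\top T_n T_n^\top \Delta\mathbf{v}_{\text{odq}n} \cdot\!\!  \text{(reindex)} = \Delta\mathbf{i}_{\text{odq}n}^\top \Delta\mathbf{v}_{\text{odq}n},
\end{equation*}
and likewise $\Delta\mathbf{v}_{\text{oDQ}n}^\top\Delta\mathbf{v}_{\text{oDQ}n} = \Delta\mathbf{v}_{\text{odq}n}^\top\Delta\mathbf{v}_{\text{odq}n}$ because $T_n^\top T_n = I$. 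Hence the OFP inequality \eqref{eq:OFP_ineq} for IBR $n$ holds verbatim in the D-Q frame with the same index $\sigma_n>0$.

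Next I would aggregate over IBRs. Stacking the local signals into $\Delta\mathbf{i}_{\text{oDQ}} = [\Delta\mathbf{i}_{\text{oDQ}1}^\top,\ldots,\Delta\mathbf{i}_{\text{oDQ}N}^\top]^\top$ and similarly for $\Delta\mathbf{v}_{\text{oDQ}}$, the inner products decompose as $\Delta\mathbf{i}_{\text{oDQ}}^\top\Delta\mathbf{v}_{\text{oDQ}} = \sum_{n=1}^N \Delta\mathbf{i}_{\text{oDQ}n}^\top\Delta\mathbf{v}_{\text{oDQ}n}$ and $\Delta\mathbf{v}_{\text{oDQ}}^\top\Delta\mathbf{v}_{\text{oDQ}} = \sum_{n=1}^N \Delta\mathbf{v}_{\text{oDQ}n}^\top\Delta\mathbf{v}_{\text{oDQ}n}$, so summing the $N$ per-IBR OFP inequalities (each integrated over $[0,t]$ with zero initial condition) gives
\begin{equation*}
\int_0^t \Delta\mathbf{i}_{\text{oDQ}}^\top\Delta\mathbf{v}_{\text{oDQ}}\,d\tau \;\ge\; \Big(\min_{1\le n\le N}\sigma_n\Big)\int_0^t \Delta\mathbf{v}_{\text{oDQ}}^\top\Delta\mathbf{v}_{\text{oDQ}}\,d\tau \;\ge\; 0,
\end{equation*}
so $\mathcal{F}$ is OFP with index $\min_n \sigma_n > 0$. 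Finally, I would verify that the sign convention matches the feedback interconnection of Figure~\ref{fig:feedback}: $\mathcal{F}$ takes $\Delta\mathbf{i}_{\text{oDQ}} = -\Delta\mathbf{i}_{\text{sDQ}}$ as input and feeds $\Delta\mathbf{v}_{\text{oDQ}}$ into $\mathcal{B}$, exactly the standard negative-feedback configuration to which Lemma~\ref{lemma:DHill} applies; invoking Theorem~\ref{thm:network} for $\mathcal{B}$ and the just-derived OFP of $\mathcal{F}$, Lemma~\ref{lemma:DHill} yields asymptotic stability of $\mathbf{o}$.

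The only mild obstacles are bookkeeping rather than conceptual: one must confirm that $C^\top C$ indeed has a positive eigenvalue (it is positive semidefinite and nonzero whenever the network has at least one branch, so $\lambda_{\text{Cmax}}>0$ automatically, which the remark after Theorem~\ref{thm:network} already notes), and one must be careful that the block-diagonal frame transformation $T := \operatorname{diag}(T_1,\ldots,T_N)$ is orthogonal so that it commutes with the quadratic supply-rate and dissipation-rate forms without introducing cross terms between distinct IBRs. Neither of these requires real work; the substance of the theorem is entirely carried by Lemma~\ref{lemma:DHill} and Theorem~\ref{thm:network}, and the proof is essentially a two-line citation of them once the frame invariance of OFP is observed.
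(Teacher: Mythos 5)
Your proposal is correct and follows essentially the same route as the paper's proof: establish frame-invariance of the per-IBR OFP inequality via the orthogonality of $T_n$, aggregate over the $N$ IBRs using $\underline{\sigma}=\min_n\sigma_n$ to conclude that $\mathcal{F}$ is OFP, and then invoke Theorem~\ref{thm:network} and Lemma~\ref{lemma:DHill}. The extra care you take with the sign convention $\Delta\mathbf{i}_{\text{oDQ}}=-\Delta\mathbf{i}_{\text{sDQ}}$ and with $\lambda_{\text{Cmax}}>0$ is sound but not part of the paper's written argument.
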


\begin{proof}
    Condition 1 requires each IBR to be OFP, i.e., there exist $\sigma_n>0$ such that, for $n = 1,2, \ldots, N$,
    \begin{equation}
        \int_0^t \Delta \mathbf{i}_{\text{odq}n}^{\top}\Delta\mathbf{v}_{\text{odq}n} d \tau - \sigma_n \int_0^t \Delta \mathbf{v}_{\text{odq}n}^{\top}\Delta\mathbf{v}_{\text{odq}n}d \tau \ge 0
    \end{equation}
    According to Figure \ref{fig:IO-IBR}-(a), $\Delta \mathbf{i}_{\text{odq}n} = T_n^{-1} \Delta \mathbf{i}_{\text{oDQ}n}$ and $\Delta \mathbf{v}_{\text{odq}n} = T_n^{-1} \Delta \mathbf{v}_{\text{oDQ}n}$, then
    \begin{equation}
        \begin{aligned}
            &\int_0^t \Delta \mathbf{i}_{\text{oDQ}n}^{\top}(T_n^{-1})^{\top}T_n^{-1}\Delta\mathbf{v}_{\text{oDQ}n} d \tau -\\
            & \sigma_n \int_0^t \Delta \mathbf{v}_{\text{oDQ}n}^{\top}(T_n^{-1})^{\top}T_n^{-1}\Delta\mathbf{v}_{\text{oDQ}n}d \tau \ge 0
        \end{aligned}
    \end{equation}
    Note that $(T_n^{-1})^{\top}T_n^{-1}=I$. This leads to
    \begin{equation}
        \int_0^t \Delta \mathbf{i}_{\text{oDQ}n}^{\top}\Delta\mathbf{v}_{\text{oDQ}n} d \tau - \sigma_n \int_0^t \Delta \mathbf{v}_{\text{oDQ}n}^{\top}\Delta\mathbf{v}_{\text{oDQ}n}d \tau \ge 0
    \end{equation}
    Define $\underline{\sigma} := \min_{n} \sigma_n$. It follows that
    \begin{equation} \label{eq:Ninequality}
        \int_0^t \Delta \mathbf{i}_{\text{oDQ}n}^{\top}\Delta\mathbf{v}_{\text{oDQ}n} d \tau - \underline{\sigma} \int_0^t \Delta \mathbf{v}_{\text{oDQ}n}^{\top}\Delta\mathbf{v}_{\text{oDQ}n}d \tau \ge 0
    \end{equation}
    for $n = 1, 2, \ldots, N$. By summing up the $N$ inequalities in \eqref{eq:Ninequality}, we have
    \begin{equation*} 
        \sum_{n = 1}^{N}\int_0^t \Delta \mathbf{i}_{\text{oDQ}n}^{\top}\Delta\mathbf{v}_{\text{oDQ}n} d \tau - \underline{\sigma} \sum_{n = 1}^{N}\int_0^t \Delta \mathbf{v}_{\text{oDQ}n}^{\top}\Delta\mathbf{v}_{\text{oDQ}n}d \tau \ge 0.
    \end{equation*}
    Since $N$ is finite, the finite summation and integration operators can be interchanged, i.e.,
    \begin{equation*} 
        \int_0^t\sum_{n = 1}^{N} \Delta \mathbf{i}_{\text{oDQ}n}^{\top}\Delta\mathbf{v}_{\text{oDQ}n} d \tau - \underline{\sigma} \int_0^t\sum_{n = 1}^{N} \Delta \mathbf{v}_{\text{oDQ}n}^{\top}\Delta\mathbf{v}_{\text{oDQ}n}d \tau \ge 0.
    \end{equation*}
    Note that 
    \begin{subequations} 
    \begin{align}
        & \sum_{n = 1}^{N} \Delta \mathbf{i}_{\text{oDQ}n}^{\top}\Delta\mathbf{v}_{\text{oDQ}n} = \Delta \mathbf{i}_{\text{oDQ}}^{\top}\Delta\mathbf{v}_{\text{oDQ}} \\
        & \sum_{n = 1}^{N} \Delta \mathbf{v}_{\text{oDQ}n}^{\top}\Delta\mathbf{v}_{\text{oDQ}n} = \Delta \mathbf{v}_{\text{oDQ}}^{\top}\Delta\mathbf{v}_{\text{oDQ}}.
    \end{align}
\end{subequations}
This leads to
    \begin{equation*} 
        \int_0^t \Delta \mathbf{i}_{\text{oDQ}}^{\top}\Delta\mathbf{v}_{\text{oDQ}} d \tau - \underline{\sigma} \int_0^t \Delta \mathbf{v}_{\text{oDQ}}^{\top}\Delta\mathbf{v}_{\text{oDQ}}d \tau \ge 0.
    \end{equation*}
    By Definition \ref{def:OFP}, the subsystem $\mathcal{F}$ in Figure \ref{fig:feedback} is OFP with passivity index $\underline{\sigma}$. In addition, since subsystem $\mathcal{B}$ is OFP according to Theorem \ref{thm:network}, the asymptotic stability of equilibrium of the system in Figure \ref{fig:feedback} is established by Lemma \ref{lemma:DHill}.
\end{proof}

\section{Non-intrusive Stability Enforcement} \label{sec:enforcement}
In this section, we first illustrate the basic idea of enforcing Condition 1. Then we conceptualize the architecture of an interface that enforces Condition 1 in a non-intrusive way. We also define the information needed to design the interface.
\subsection{Basic Idea of Stability Enforcement}
Condition 1 at IBR $n$ can be enforced by the scheme shown in Figure \ref{fig:passivation} where $\alpha_n$, $\beta_n$, and $\kappa_n$ are tunable parameters; and $I$ is an identity matrix. The next lemma guides one to tune $\alpha_n$, $\beta_n$, and $\kappa_n$ to enforce Condition 1:
\begin{figure}
    \centering
    \includegraphics[width = 1.5in]{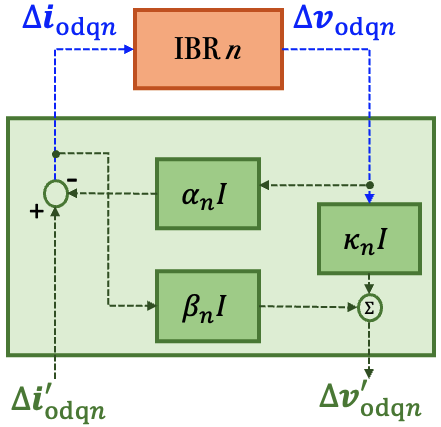}
    \caption{Basic idea of enforcing the Stability Condition}
    \label{fig:passivation}
\end{figure}

\begin{lemma} \label{lemma: Panos}
    (Theorem 4 in \cite{Panos_thm}) 
    The closed-loop system in Figure \ref{fig:passivation} with input $\Delta \mathbf{i}'_{\text{odq}n}$ and output $\Delta \mathbf{v}'_{\text{odq}n}$ is OFP with $\sigma_n = 0.5(\frac{1}{\beta_n}+\frac{\alpha_n}{\kappa_n})>0$, if
    \begin{equation} \label{eq: constraints}
        \beta_n \ge \kappa_n \gamma_n >0, \quad \kappa_n>\alpha_n \beta_n>0,
    \end{equation}
    where $\gamma_n$ is the $\mathcal{L}_2$ gain of the IBR $n$ with input $\Delta \mathbf{i}_{\text{odq}n}$ and output $\Delta \mathbf{v}_{\text{odq}n}$ in Figure \ref{fig:passivation}.
\end{lemma}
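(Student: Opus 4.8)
The plan is to verify the defining OFP inequality for the closed-loop map $\Delta\mathbf{i}'_{\text{odq}n}\mapsto\Delta\mathbf{v}'_{\text{odq}n}$ directly, using the IBR only through its $\mathcal{L}_2$-gain bound $\int_0^t\|\Delta\mathbf{v}_{\text{odq}n}\|^2\,d\tau\le\gamma_n^2\int_0^t\|\Delta\mathbf{i}_{\text{odq}n}\|^2\,d\tau$ — nothing finer, since the non-intrusive design must be blind to the IBR's internal structure. First I would read off from Figure \ref{fig:passivation} the two static relations the interface imposes among $(\Delta\mathbf{i}'_{\text{odq}n},\Delta\mathbf{v}'_{\text{odq}n})$ and the IBR port variables $(\Delta\mathbf{i}_{\text{odq}n},\Delta\mathbf{v}_{\text{odq}n})$, parameterized by $\alpha_n,\beta_n,\kappa_n$, and check well-posedness: the IBR being strictly proper no algebraic loop through it arises, but one still needs the interface's own gain combination to be invertible, which is where $\kappa_n>\alpha_n\beta_n>0$ enters. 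It is worth noting an alternative, arguably cleaner route: a finite-$\mathcal{L}_2$-gain system is input-feedforward/output-feedback passive with \emph{negative} (shortage-type) indices determined by $\gamma_n$, and the interface of Figure \ref{fig:passivation} is precisely a feedforward-plus-output-feedback-plus-scaling transformation that shifts those indices; carrying out the index arithmetic would give $\sigma_n=\tfrac12(\tfrac1{\beta_n}+\tfrac{\alpha_n}{\kappa_n})$ with the two constraints appearing as the feasibility conditions. I would present the direct verification below as self-contained and invoke \cite{Panos_thm} (Theorem 4) for the general statement, after checking the present interface is an instance of it.

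For the direct verification I would substitute the interface relations into $J(t):=\int_0^t(\Delta\mathbf{i}'_{\text{odq}n})^\top\Delta\mathbf{v}'_{\text{odq}n}\,d\tau-\sigma_n\int_0^t(\Delta\mathbf{v}'_{\text{odq}n})^\top\Delta\mathbf{v}'_{\text{odq}n}\,d\tau$, expand, and collect $J(t)$ as a quadratic form in the three scalar integrals $P:=\int_0^t\|\Delta\mathbf{i}_{\text{odq}n}\|^2\,d\tau$, $Q:=\int_0^t\|\Delta\mathbf{v}_{\text{odq}n}\|^2\,d\tau$, and the cross term $S:=\int_0^t(\Delta\mathbf{i}_{\text{odq}n})^\top\Delta\mathbf{v}_{\text{odq}n}\,d\tau$. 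If $S$ does not cancel outright against feedforward terms, it is handled by Cauchy–Schwarz and Young's inequality, $|S|\le\tfrac{\theta}{2}P+\tfrac1{2\theta}Q$ with a free $\theta>0$, after which $J(t)$ reduces to $c_1 P-c_2 Q$ with coefficients depending on $\alpha_n,\beta_n,\kappa_n,\theta$. Bounding $Q\le\gamma_n^2 P$ by the $\mathcal{L}_2$-gain hypothesis gives $J(t)\ge(c_1-c_2\gamma_n^2)P\ge 0$, and I would choose $\theta$ to make the bracket tight, at which point $c_1-c_2\gamma_n^2\ge0$ coincides exactly with the small-gain condition $\beta_n\ge\kappa_n\gamma_n$. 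The zero-initial-condition clause of Definition \ref{def:OFP} transfers cleanly: a zero closed-loop state forces the IBR state to be zero, so the $\mathcal{L}_2$-gain inequality is applied with its own zero initial condition. Together with $\sigma_n>0$ (which holds because $\kappa_n>\alpha_n\beta_n>0$ makes both $1/\beta_n$ and $\alpha_n/\kappa_n$ positive), this establishes strict OFP with the stated index.

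I expect the main obstacle to be bookkeeping rather than conceptual: transcribing the interface equations from Figure \ref{fig:passivation} exactly, and then tracking $\alpha_n,\beta_n,\kappa_n$ through the expansion so that the optimal $\theta$ in Young's inequality reproduces the boundary case $\beta_n=\kappa_n\gamma_n$ and leaves a residual constant that is precisely $\sigma_n=\tfrac12(\tfrac1{\beta_n}+\tfrac{\alpha_n}{\kappa_n})$. The subtle point to get right is the dual role of \eqref{eq: constraints}: $\beta_n\ge\kappa_n\gamma_n$ is the small-gain/contractiveness condition that makes the relevant quadratic form positive semidefinite, while $\kappa_n>\alpha_n\beta_n>0$ simultaneously keeps the interface realizable (the static loop invertible) and the passivity index strictly positive, so that the feed-forward loop $\mathcal{F}$ built from the passivated IBRs satisfies the hypothesis of Lemma \ref{lemma:DHill}.
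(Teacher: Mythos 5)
Your strategy is sound, but note that the paper never proves this lemma --- it is imported verbatim as Theorem 4 of \cite{Panos_thm} --- so your direct verification is genuinely supplementary rather than a rederivation of an argument in the text. Carrying your plan out with the interface relations the paper records in \eqref{eq:step1}, namely $\Delta\mathbf{v}'_{\text{odq}n}=\kappa_n\Delta\mathbf{v}_{\text{odq}n}+\beta_n\Delta\mathbf{i}_{\text{odq}n}$ and $\Delta\mathbf{i}'_{\text{odq}n}=\alpha_n\Delta\mathbf{v}_{\text{odq}n}+\Delta\mathbf{i}_{\text{odq}n}$, turns out cleaner than you anticipate: with $\sigma_n=\tfrac12(1/\beta_n+\alpha_n/\kappa_n)$ the cross term $S$ cancels identically, since its coefficient is $(\alpha_n\beta_n+\kappa_n)-2\sigma_n\kappa_n\beta_n=0$, so no Young's inequality or free parameter $\theta$ is needed; one is left with $J=\tfrac{\kappa_n}{2\beta_n}(\alpha_n\beta_n-\kappa_n)\,Q+\tfrac{\beta_n}{2\kappa_n}(\kappa_n-\alpha_n\beta_n)\,P$. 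Under $\kappa_n>\alpha_n\beta_n$ the $Q$-coefficient is negative, so applying $Q\le\gamma_n^2P$ yields $J\ge\tfrac{(\kappa_n-\alpha_n\beta_n)}{2\kappa_n\beta_n}\bigl(\beta_n^2-\kappa_n^2\gamma_n^2\bigr)P\ge0$ by $\beta_n\ge\kappa_n\gamma_n$. One correction to your reading of the hypotheses: $\kappa_n>\alpha_n\beta_n$ is not needed for well-posedness (the IBR map \eqref{eq:compact_deviation} is strictly proper, so there is no algebraic loop to invert) nor for $\sigma_n>0$ (positivity of the three parameters already gives that); its actual role is to fix the signs of the two residual coefficients so that the $\mathcal{L}_2$-gain bound is applied in the correct direction. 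With that adjustment your outline closes into a complete, self-contained proof of the cited result.
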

Suppose that an IBR manufacturer provides an $\mathcal{L}_2$ gain $\gamma_n$, the NMPs can leverage condition \eqref{eq: constraints} to find $\alpha_n$, $\beta_n$, and $\kappa_n$. As a result, the closed-loop system shown in Figure \ref{fig:passivation} satisfies Condition 1. The remaining question is: \emph{how does the IBR manufacturer compute $\gamma_n$?}


\subsection{$\mathcal{L}_2$ Gain for IBRs}
The following Lemma can be leveraged by IBR manufacturers to obtain $\gamma_n$:
\begin{lemma} \label{lemma: Khalil}
    \cite{nonlinear_ctr} Assume that the real part of every eigenvalue of matrix $A_n$ in \eqref{eq:compact_deviation} is strictly negative. Let $G_n(s) = C_n(sI-A_n)^{-1}B_n$. Then, the $\mathcal{L}_2$ gain of dynamics \eqref{eq:compact_deviation} is $\sup_{\omega \in \mathbb{R}} \norm{G_n(\mathbbm{j} \omega)}_2$.
\end{lemma}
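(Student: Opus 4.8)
The plan is to prove Lemma \ref{lemma: Khalil} by connecting the $\mathcal{L}_2$ gain of the linear time-invariant system \eqref{eq:compact_deviation} to the $\mathcal{H}_\infty$ norm of its transfer matrix $G_n(s)$, invoking the standard bounded-real characterization. First I would note that the hypothesis — every eigenvalue of $A_n$ has strictly negative real part — makes $A_n$ Hurwitz, so the system is internally exponentially stable, the impulse response is in $\mathcal{L}_1 \cap \mathcal{L}_2$, $G_n(s)$ is analytic in the closed right half plane, and $G_n(\mathbbm{j}\omega)$ is well defined and bounded for all real $\omega$; hence $\sup_{\omega}\norm{G_n(\mathbbm{j}\omega)}_2$ is finite. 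The quantity to be computed is the $\mathcal{L}_2$-induced gain, i.e. the smallest $\gamma$ such that $\int_0^t \Delta\mathbf{v}_{\text{odq}n}^\top \Delta\mathbf{v}_{\text{odq}n}\, d\tau \le \gamma^2 \int_0^t \Delta\mathbf{i}_{\text{odq}n}^\top \Delta\mathbf{i}_{\text{odq}n}\, d\tau$ for all square-integrable inputs with zero initial state, matching the $\mathcal{L}_2$-gain definition given earlier in the excerpt.

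The core of the argument is Parseval's theorem. With zero initial condition and input $\Delta\mathbf{i}_{\text{odq}n}\in\mathcal{L}_2$, the output satisfies $\widehat{\Delta\mathbf{v}}_{\text{odq}n}(\mathbbm{j}\omega) = G_n(\mathbbm{j}\omega)\,\widehat{\Delta\mathbf{i}}_{\text{odq}n}(\mathbbm{j}\omega)$. Applying Parseval to both sides and using the submultiplicative bound $\norm{G_n(\mathbbm{j}\omega)\widehat{\Delta\mathbf{i}}_{\text{odq}n}(\mathbbm{j}\omega)}_2 \le \norm{G_n(\mathbbm{j}\omega)}_2 \norm{\widehat{\Delta\mathbf{i}}_{\text{odq}n}(\mathbbm{j}\omega)}_2$, I would obtain
\begin{equation*}
  \int_{-\infty}^{\infty}\norm{\Delta\mathbf{v}_{\text{odq}n}(t)}_2^2\, dt \le \Big(\sup_{\omega\in\mathbb{R}}\norm{G_n(\mathbbm{j}\omega)}_2\Big)^{2}\int_{-\infty}^{\infty}\norm{\Delta\mathbf{i}_{\text{odq}n}(t)}_2^2\, dt,
\end{equation*}
which shows $\sup_{\omega}\norm{G_n(\mathbbm{j}\omega)}_2$ is an upper bound for the $\mathcal{L}_2$ gain. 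For the reverse inequality (tightness), I would construct inputs that concentrate their spectral energy near a frequency $\omega^\star$ approaching the argmax of $\norm{G_n(\mathbbm{j}\omega)}_2$, aligned with the right singular vector of $G_n(\mathbbm{j}\omega^\star)$ corresponding to its largest singular value; such inputs drive the output gain arbitrarily close to $\sup_{\omega}\norm{G_n(\mathbbm{j}\omega)}_2$, establishing that the supremum is attained and the gain equals it. Since this is a classical result, I would more realistically simply cite \cite{nonlinear_ctr} for the equality and present only the Parseval-based upper-bound computation in detail.

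The main obstacle — though a mild one, since the result is textbook — is the lower-bound / tightness direction: rigorously exhibiting input signals whose induced output-to-input energy ratio approaches the frequency-domain supremum requires a careful approximation argument (narrowband signals, continuity of $G_n(\mathbbm{j}\omega)$ in $\omega$, and care when the supremum is approached only asymptotically rather than attained at a finite $\omega$). A secondary technical point is justifying the passage between the finite-horizon integrals $\int_0^t$ used in the $\mathcal{L}_2$-gain definition and the infinite-horizon integrals $\int_0^\infty$ (equivalently $\int_{-\infty}^\infty$ for causal zero-state responses) on which Parseval operates; this is handled by a standard truncation/monotone-limit argument using causality and the fact that the finite-horizon bound with the same constant follows from the infinite-horizon one. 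Given that Lemma \ref{lemma: Khalil} is explicitly attributed to a reference, I expect the paper's own proof to be brief, and my proposal would likewise emphasize the Hurwitz-implies-finiteness step and the Parseval upper bound, deferring the converse to the cited source.
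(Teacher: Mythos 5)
The paper gives no proof of Lemma~\ref{lemma: Khalil} at all---it is stated as a cited textbook result from \cite{nonlinear_ctr}, followed only by remarks on how to compute the quantities in MATLAB. Your Parseval-based upper bound plus narrowband-input tightness argument is precisely the standard proof of this classical fact (it is essentially the argument in the cited reference), and your handling of the finite- versus infinite-horizon integrals and of the Hurwitz hypothesis is correct, so the proposal is sound and consistent with what the paper defers to its source.
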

In Lemma \ref{lemma: Khalil}, $\norm{\cdot}_2$ is the $\mathcal{L}_2$ norm; transfer functions $G_n(s)$ can be obtained by the ``$\texttt{ss2tf}$'' function in MATLAB based on matrices $A_n$, $B_n$, and $C_n$; $\mathbbm{j} = \sqrt{-1}$; and $\sup_{\omega \in \mathbb{R}} \norm{G_n(j \omega)}_2$ is the $H_{\infty}$ norm of $G_n(\mathbbm{j} \omega)$ \cite{nonlinear_ctr} which can be obtained by the ``$\texttt{hinfnorm}$'' function \textcolor{black}{\cite{bruinsma1990fast}} in MATLAB, given $G_n(s)$. Lemma \ref{lemma: Khalil} requires a stable matrix $A_n$. This is not a big assumption, as IBR control designers typically perform small-signal analysis to ensure device-level stability.


\subsection{Architecture of Stability Enforcement Interfaces (SEI)}
This subsection conceptualizes an interface that enforce Condition 1, and the theoretical result in \cite{Panos_thm} is translated into electric energy systems for the first time. The physical layer of the interface is shown in Figure \ref{fig:physical_layer}. The interface comprises a three-phase, controlled volage source, and a three-phase controlled current source. The voltage $\Delta \mathbf{v}_{\text{abc}n}:=[\Delta v_{\text{a}n}, \Delta v_{\text{b}n}, \Delta v_{\text{c}n}]^{\top}$ of the voltage source and the current $\Delta \mathbf{i}_{\text{abc}n}:=[\Delta i_{\text{a}n}, \Delta i_{\text{b}n}, \Delta i_{\text{c}n}]^{\top}$ of the current source are determined by the terminal voltage measurement $\mathbf{v}_{\text{abc}n}:= [v_{\text{a}n}, v_{\text{b}n}, v_{\text{c}n}]^{\top}$ and current measurement $\mathbf{i}_{\text{abc}n}:= [i_{\text{a}n}, i_{\text{b}n}, i_{\text{c}n}]^{\top}$ of the IBR $n$. \textcolor{black}{This paper focuses on the control law that establishes the link between $\{\mathbf{v}_{\text{abc}n}, \mathbf{i}_{\text{abc}n}\}$ and $\{\Delta \mathbf{v}_{\text{abc}n}, \Delta \mathbf{i}_{\text{abc}n}\}$; the internal design of the controlled voltage and current sources is out of the scope of this paper.}

\begin{figure}
    \centering
    \includegraphics[width = 2.3in]{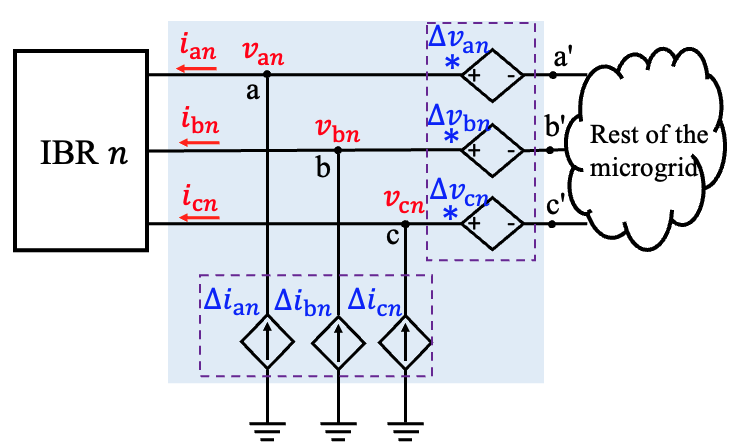}
    \caption{\textcolor{black}{Physical layer of the stability enforcement interface.}}
    \label{fig:physical_layer}
\end{figure}

\begin{figure}
    \centering
    \includegraphics[width = 3.5in]{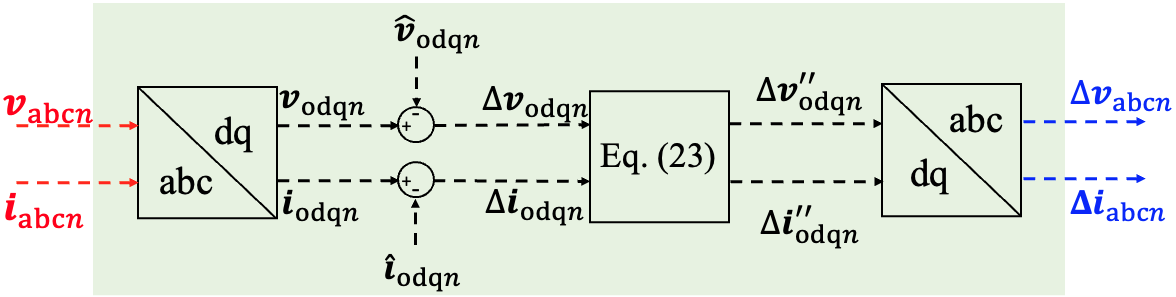}
    \caption{Cyber layer of the stability enforcement interface.}
    \label{fig:cyber_layer}
\end{figure}

Figure \ref{fig:cyber_layer} presents the cyber layer of the interface. In Figure \ref{fig:cyber_layer}, the three-phase variables $\mathbf{v}_{\text{abc}n}$ and $\mathbf{i}_{\text{abc}n}$ are first transformed into the d-q frame by the Park transformation
\cite{dq0trans}. Second, the deviation vectors $\Delta \mathbf{v}_{\text{odq}n}$ and $\Delta \mathbf{i}_{\text{odq}n}$ are obtained by subtracting the steady-state values $\hat{\mathbf{v}}_{\text{odq}n}$ and $\hat{\mathbf{i}}_{\text{odq}n}$ from $\mathbf{v}_{\text{odq}n}$ and $\mathbf{i}_{\text{odq}n}$. Third, $\Delta \mathbf{v}_{\text{odq}n}''$ and $\Delta \mathbf{i}_{\text{odq}n}''$ are computed by
\begin{subequations} \label{eq:source_control_law}
    \begin{align}
        & \Delta \mathbf{v}_{\text{odq}n}'' = (I-\kappa_nI) \Delta \mathbf{v}_{\text{odq}n} - \beta_n I \Delta \mathbf{i}_{\text{odq}n} \\
        & \Delta \mathbf{i}_{\text{odq}n}'' = -\alpha_n I \Delta \mathbf{v}_{\text{odq}n}.
    \end{align}
\end{subequations}
Finally, the vectors in the d-q frame $\Delta \mathbf{v}_{\text{odq}n}''$ and $\Delta \mathbf{i}_{\text{odq}n}''$ are transformed to the three-phase frame.

Equation \eqref{eq:source_control_law} is justified by transforming Figure \ref{fig:physical_layer} in the three-phase frame to the d-q frame. Figure \ref{fig:physical_layer_dq} presents the circuit in the d-q frame. According to Figure \ref{fig:passivation}, we have
\begin{subequations} \label{eq:step1}
    \begin{align}
        & \Delta \mathbf{v}_{\text{odq}n}' = \kappa_n I \Delta \mathbf{v}_{\text{odq}n} + \beta_n I \Delta \mathbf{i}_{\text{odq}n} \\
        & \Delta \mathbf{i}_{\text{odq}n}' = \alpha_n I \Delta \mathbf{v}_{\text{odq}n} + \Delta \mathbf{i}_{\text{odq}n}.
    \end{align}
\end{subequations}
In Figure \ref{fig:physical_layer_dq}, based on the Kirchhoff's circuit laws, we have
\begin{subequations} \label{eq:step2}
    \begin{align}
        & \Delta \mathbf{v}_{\text{odq}n}' = \Delta \mathbf{v}_{\text{odq}n} - \Delta \mathbf{v}_{\text{odq}n}'' \\
        & \Delta \mathbf{i}_{\text{odq}n}' = \Delta \mathbf{i}_{\text{odq}n} - \Delta \mathbf{i}_{\text{odq}n}''.
    \end{align}
\end{subequations}
Plugging \eqref{eq:step2} into \eqref{eq:step1} leads to \eqref{eq:source_control_law}.

\textcolor{black}{\emph{Remark 1:}} Designing the interface shown in Figures \ref{fig:physical_layer} and \ref{fig:cyber_layer} only requires an IBR manufacturer to provide the $\mathcal{L}_2$ gains of their IBRs which can be easily obtained via Lemma \ref{lemma: Khalil} by the manufacturer. The interface design does not need the information of detailed IBR control. While the IBR manufacturer may be reluctant to share such information with the NMPs due to privacy concerns on intellectual properties, revealing the $\mathcal{L}_2$ of the IBRs does not lead to such privacy issues, as it is impossible to infer the detailed control design of an IBR merely based on the $\mathcal{L}_2$ gains of the IBR. 

\textcolor{black}{\emph{Remark 2:} 
The control law \eqref{eq:source_control_law} is decentralized, as it only requires local IBR terminal voltage and current measurements. Since the control law \eqref{eq:source_control_law} only includes simple linear combination operations, and the computational effort required to determine the control command does not increase with system size, the proposed SEI is scalable to large-scale IBR-dominated systems.}
    
\begin{figure}
    \centering
    \includegraphics[width = 2.5in]{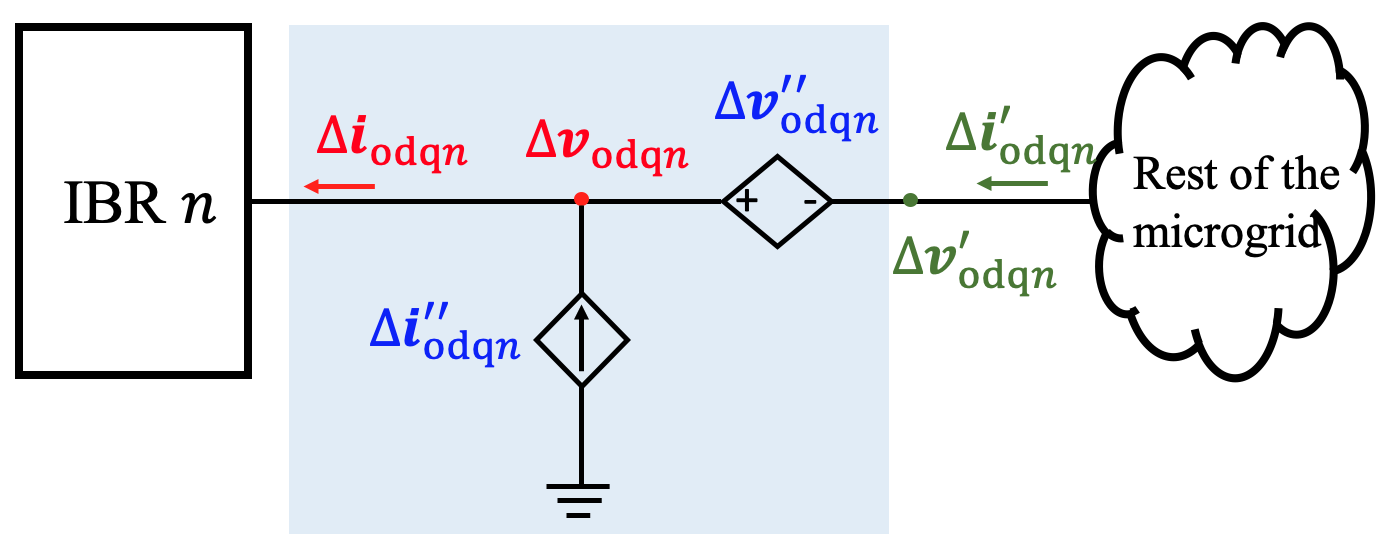}
    \caption{Physical layer of the stability enforcement interface in the d-q frame.}
    \label{fig:physical_layer_dq}
\end{figure}

\section{Case Study} \label{sec:case_study}

\subsection{A Two-IBR Microgrid}

\subsubsection{A motivating example} \label{sec:motivating_eg}
The test system in Figure \ref{fig:two_ibr}-(a) contains two IBRs. All control parameters of IBR $1$ can be found in \cite{TimGreenModel}. For IBR $2$, $k_{\text{iv}2}= 78$, and the rest of the parameters are from \cite{TimGreenModel}. The two loads are constant-impedance, and the per-phase impedances of Loads $1$ and $2$ are 25$\Omega$ and 20$\Omega$, respectively. Before time $t=0.4$s, the test system contains two microgrids operating in an islanded mode. At $t=0.4$s, the two microgrids in Figure \ref{fig:two_ibr}-(a) are networked via the tie line and they enter the hybrid mode. Figure \ref{fig: motivation_example_iodq12} visualizes the terminal currents of the two IBRs, i.e., $\mathbf{i}_{\text{odq}1}$ and $\mathbf{i}_{\text{odq}2}$ in the d-q frame, from $0.2$s to $1$s. Figures \ref{fig: motivation_example_iodq12} presents that $\mathbf{i}_{\text{odq}1}$ and $\mathbf{i}_{\text{odq}2}$ are constant before the two microgrids are networked, i.e., $t<0.4$s. This suggests the two microgrids in the islanded mode are stable. However, after the two microgrids are networked, $\mathbf{i}_{\text{odq}1}$ and $\mathbf{i}_{\text{odq}2}$ keep oscillating.

\begin{figure}
    \centering
    \includegraphics[width = 0.5\linewidth]{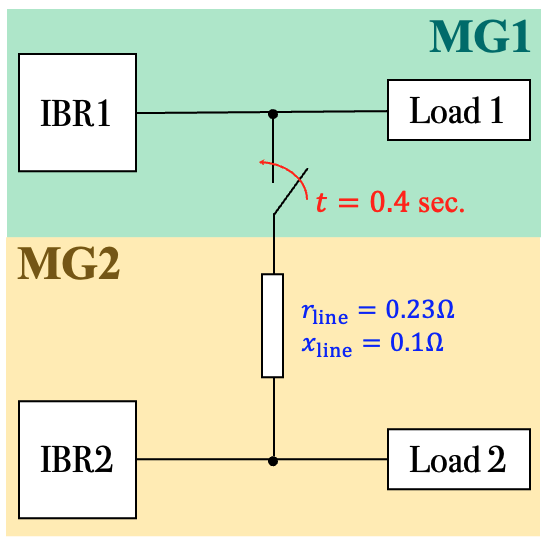}
    \caption{A 2-IBR microgrid (MG)}
    \label{fig:two_ibr}
\end{figure}

\begin{figure}
				\centering
				\subfloat[]{\includegraphics[width=0.48\linewidth]{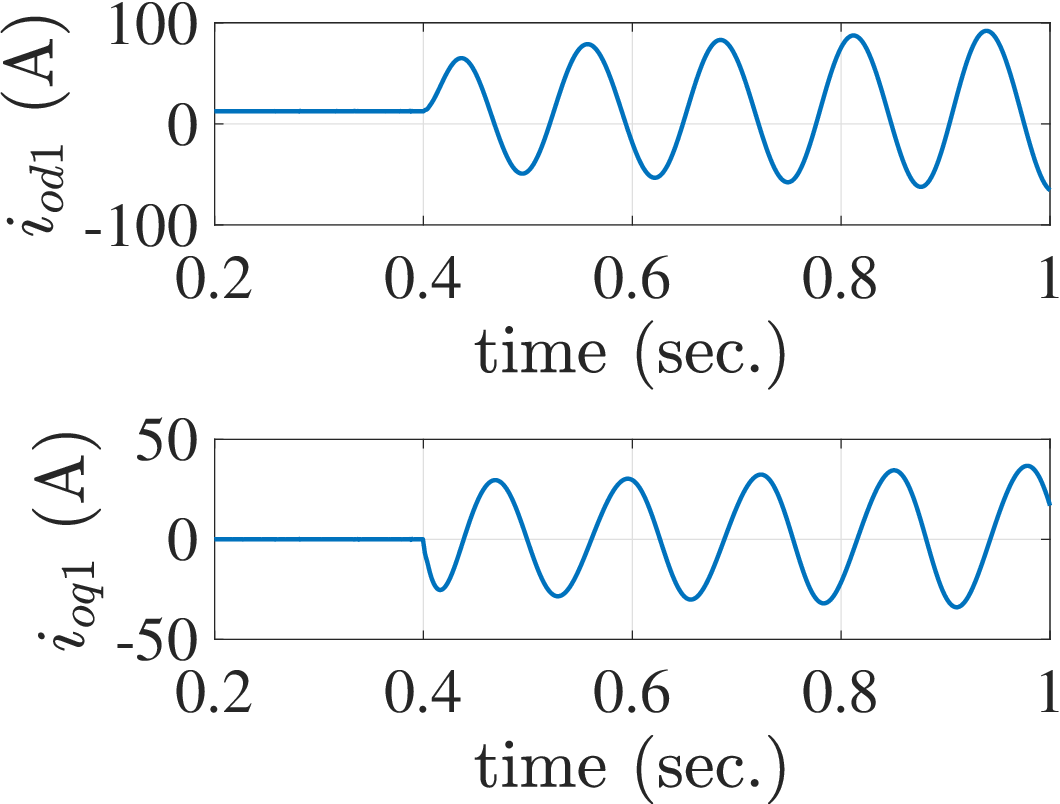}}
				\hfil
				\subfloat[]{\includegraphics[width=0.48\linewidth]{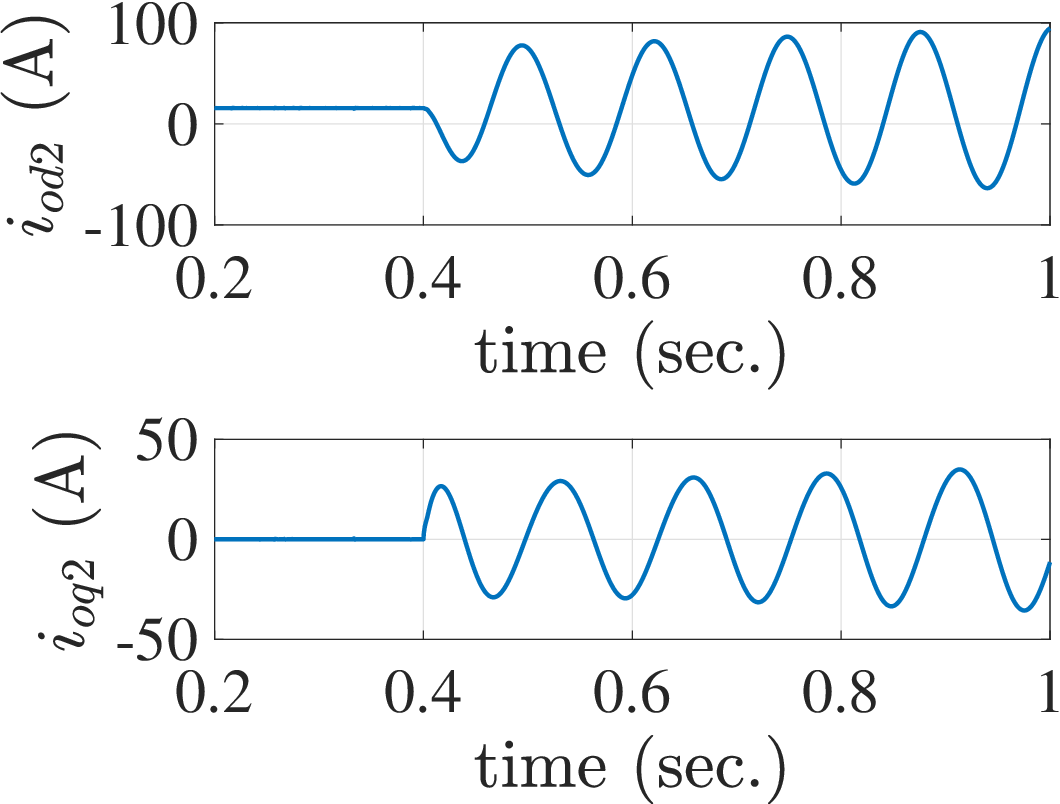}}
				\hfil
				\caption{(a) $\{i_{\text{od}1},i_{\text{oq}1}\}$ and (b) $\{i_{\text{od}2}, i_{\text{oq}2}\}$ \textcolor{black}{with constant-impedance loads}}
				\label{fig: motivation_example_iodq12}
			\end{figure}

\textcolor{black}{Next, we examine the importance of modeling fast, high-order dynamics of IBRs. If we only model the dynamics with the slow states, i.e., the droop controllers, under the disturbance at $0.4$s, the real power output $P_1$ of IBR 1 is visualized by the orange dashed waveform in Figure \ref{fig: detailVSsimple}. With the simplified model, it can be observed that the two networked IBRs are stable. However, if the \emph{dynamics of both fast and slow states} are modeled, under the same disturbance, the blue-solid curve in Figure \ref{fig: detailVSsimple} visualizes $P_1$, and it suggests the two networked IBRs are actually unstable since a growing oscillation is incurred. Such instability cannot be observed from the simulation with the simplified model. Therefore, modeling the dynamics of the fast states is also important for the stability analysis.}

\begin{figure}
    \centering
    \includegraphics[width = 1.8in]{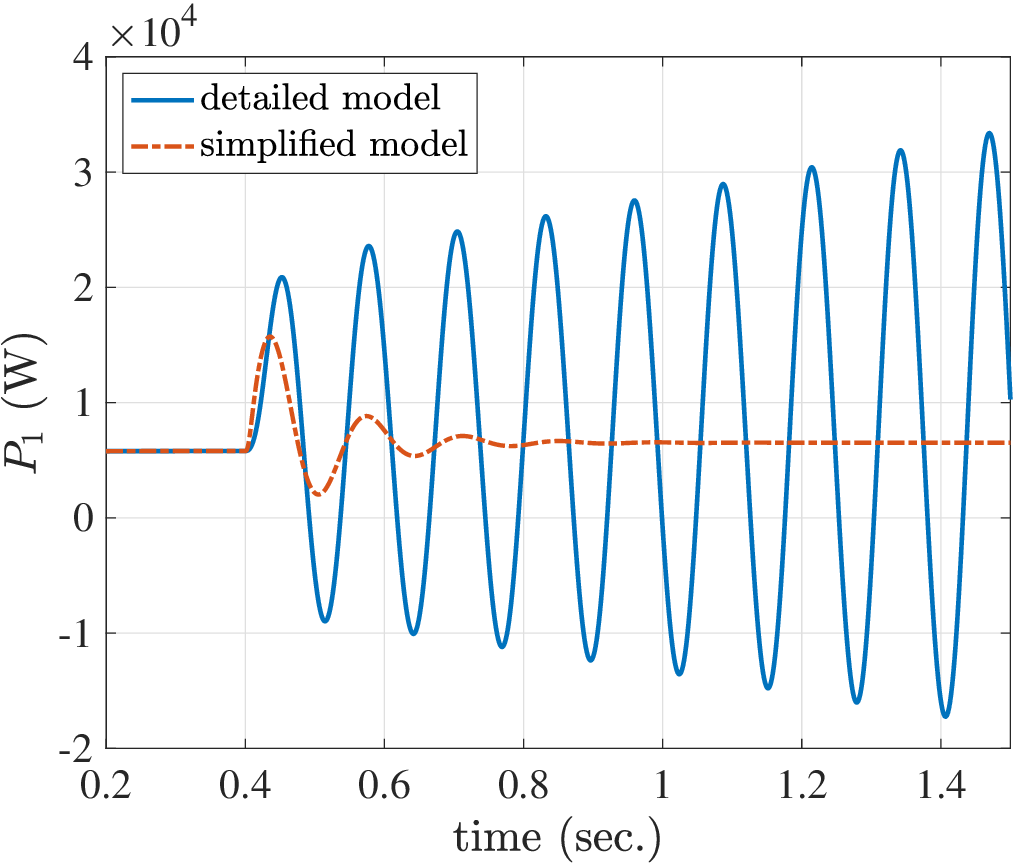}
    \caption{\textcolor{black}{Response comparison between detailed and simplified models.}}
    \label{fig: detailVSsimple}
\end{figure}

\subsubsection{System responses with stability enforcement interface} \label{ssubs:all-SEIs}
\textcolor{black}{With the same setting of Section \ref{sec:motivating_eg}, each IBR connects \textcolor{black}{to} a SEI shown in Figure \ref{fig:physical_layer}. The manufacturer of each IBR can use Lemma \ref{lemma: Khalil} to obtain the $\mathcal{L}_2$ gain $\gamma_n$ of the IBR. The $\mathcal{L}_2$ gains $\gamma_1$ and $\gamma_2$ for the two IBRs are $4.43$ and $2.9$, respectively. Based on the $\mathcal{L}_2$ gains, the parameters for the SEIs are: $\alpha_1 = 0.00045$, $\beta_1 = 1.67$, $\kappa_1=0.36$, $\alpha_2 = 0.00097$, $\beta_2 = 2.18$, and $\kappa_2=0.72$. The resulting $\sigma_1$ and $\sigma_2$ are $0.3$ and $0.23$.}


Figure \ref{fig: iodq_corrected} visualizes the d-q components $\mathbf{i}_{\text{odq}1}$ and $\mathbf{i}_{\text{odq}2}$: the SEIs can stabilize the currents at constant values after the two IBRs are networked, while both $\mathbf{i}_{\text{odq}1}$ and $\mathbf{i}_{\text{odq}2}$ would keep oscillating with increasing amplitudes if no SEI was installed. 


\begin{figure}
				\centering
				\subfloat[]{\includegraphics[width=0.5\linewidth]{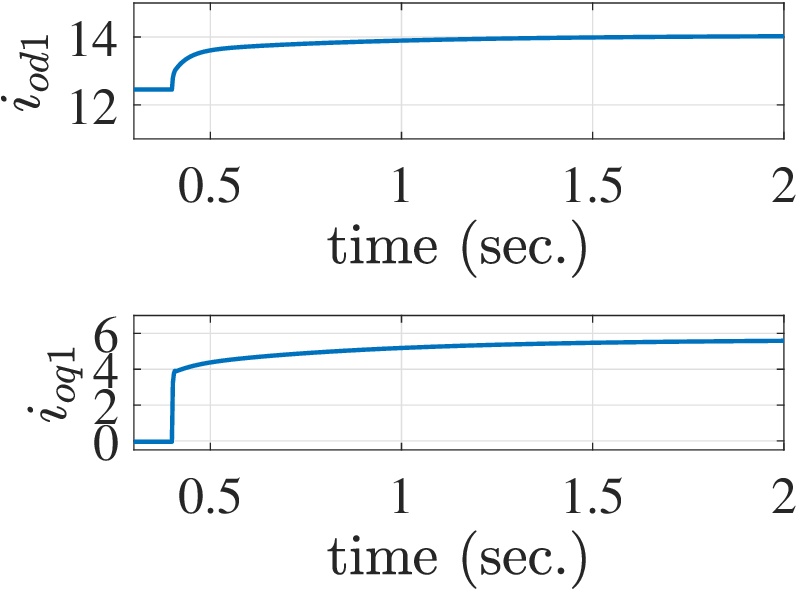}}
				\hfil
				\subfloat[]{\includegraphics[width=0.5\linewidth]{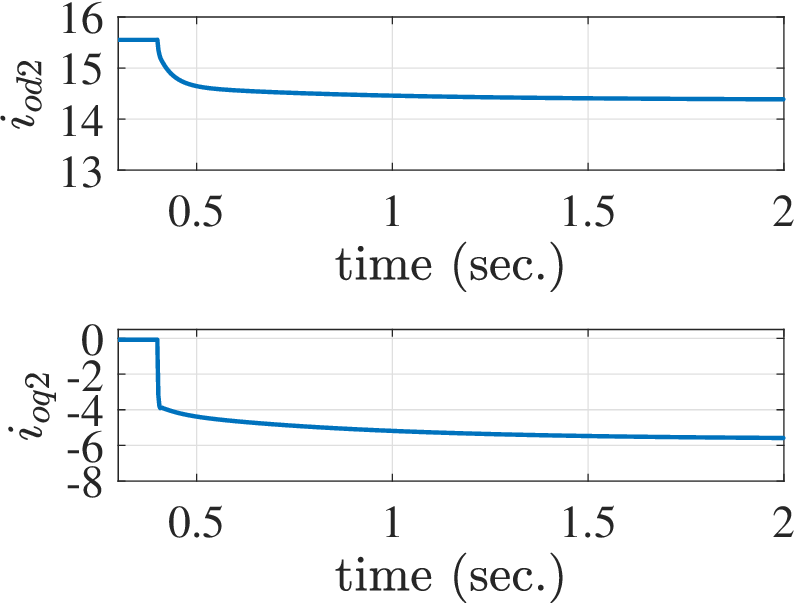}}
				\hfil
				\caption{\textcolor{black}{(a) $\textbf{i}_{\text{odq}1}$ and (b) $\textbf{i}_{\text{odq}2}$ with the SEIs.}}
				\label{fig: iodq_corrected}
			\end{figure}

\subsubsection{Energy changed by SEIs} \label{ssubs:all-SEIs-power}
\emph{Do the SEIs consume significant amount of energy to stabilize the microgrids?} We answer this question by comparing the energy consumed by the interfaces with the energy produced by the IBRs. For $n= 1,2$, denote by $P_n$, $P_{\text{c}n}$, and $P_{\text{v}n}$ the real power \emph{produced} by IBR $n$, the real power \emph{consumed} by the three-phase, shunt current source in the SEI at IBR $n$, and the real power \emph{consumed} by the three-phase, series voltage source in the SEI at IBR $n$, respectively. Denote by $E_n$, $E_{\text{c}n}$, and $E_{\text{v}n}$ the energy produced by IBR $n$, the energy consumed by the three-phase current source in the SEI at IBR $n$, and the energy consumed by the three phase voltage source in the SEI at IBR $n$. 

\textcolor{black}{Figure \ref{fig: IBR_power12} visualizes $P_n$, $P_{\text{c}n}$, and $P_{\text{v}n}$.
In Figure \ref{fig: IBR_power12}-(a), it can be observed that the real power used for stabilizing the microgrids, i.e., $P_{\text{c}1}$ and $P_{\text{v}1}$, is much less than $P_1$. By integrating $P_1$, $P_{\text{c}1}$, and $P_{\text{v}1}$ over a period, $E_1$, $E_{\text{c}1}$, and $E_{\text{v}1}$ over the period, can be computed. Table \ref{tab:energy} presents $E_1$, $E_{\text{c}1}$, and $E_{\text{v}1}$ over the transient process (i.e., the process from $0.4$s to $1.5$s) and the steady state (i.e., the process from $1.5$s to $2$s). Let $E_{\text{I}n}=E_{\text{c}1} + E_{\text{v}1}$ for $n=1,2$. The SEI at IBR 1 only takes a very small amount of energy, i.e., $2.4\%$ of total energy produced by IBR 1 during the transients, to stabilize the microgrids. In the steady state, the energy consumed by the SEI is only $2.8\%$ of the total energy produced by IBR $1$.}
Similarly, Figure \ref{fig: IBR_power12}-(b) shows that the absolute value of real power consumed by the interface at IBR $2$ is much smaller than that produced by IBR $2$. 


\begin{figure}
				\centering
				\subfloat[]{\includegraphics[width=0.8\linewidth]{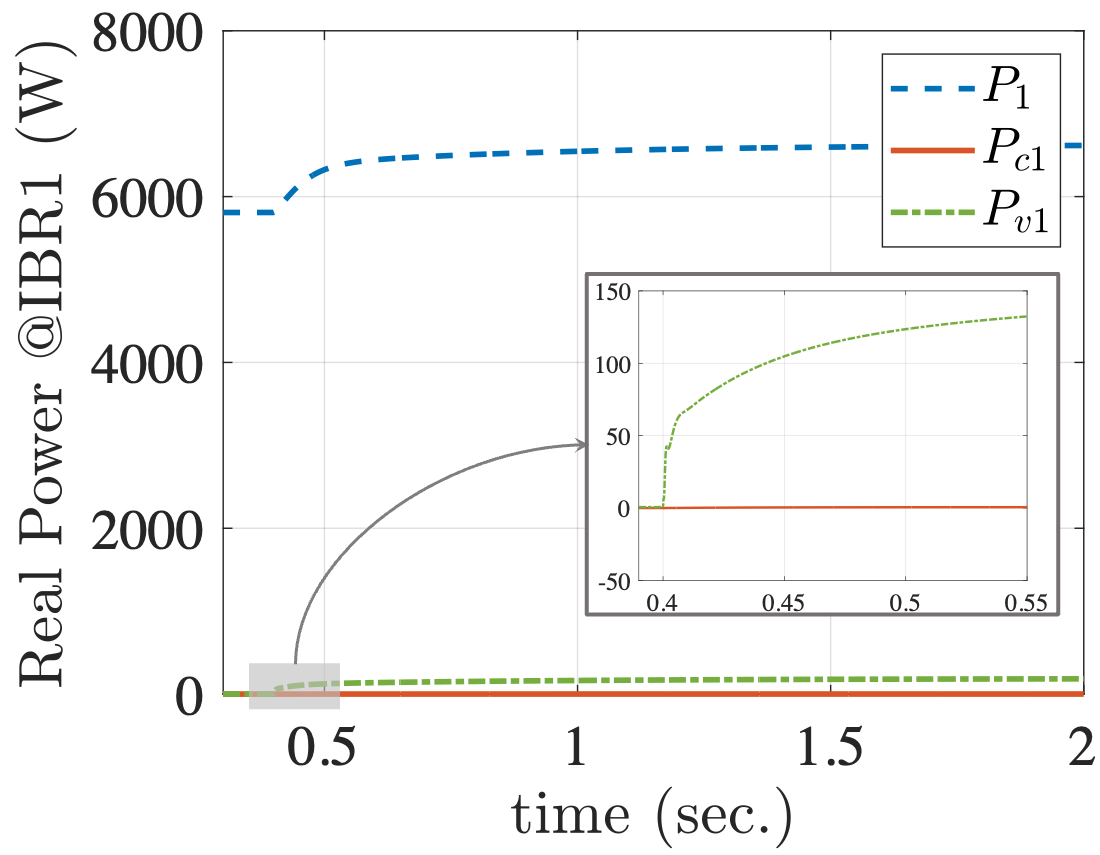}}
				\hfil
				\subfloat[]{\includegraphics[width=0.8\linewidth]{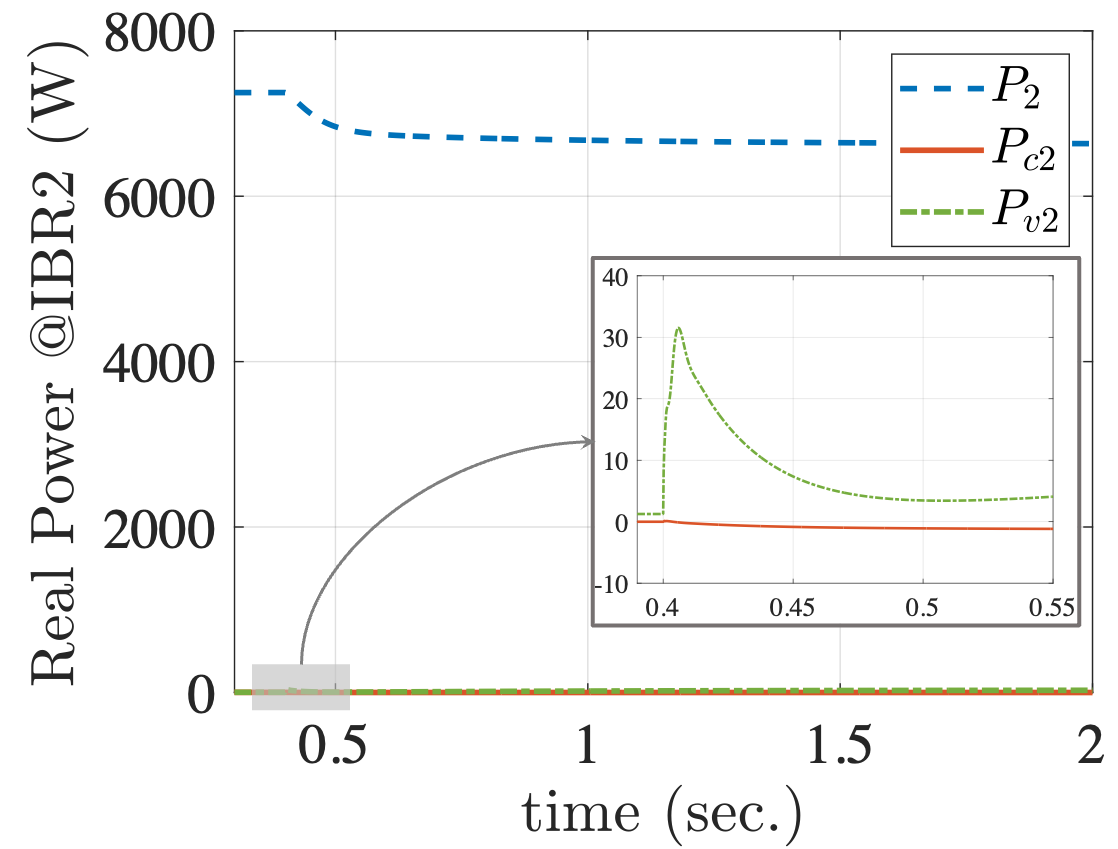}}
				\hfil
				\caption{\textcolor{black}{(a) $P_1$, $P_{\text{v}1}$, and $P_{\text{c}1}$ at IBR 1, and (b) $P_2$, $P_{\text{v}2}$, and $P_{\text{c}2}$ at IBR 2.}}
				\label{fig: IBR_power12}
			\end{figure}

 \begin{table} 
    \caption{\textcolor{black}{Energy Analysis for Networked Microgrids with Two IBRs}}
    \centering
        \begin{tabular}{c|c|c|c|c}
            \hline
            \hline
            \textbf{Period} & $E_1$ (J) & $E_{\text{c}1}$ (J) & $E_{\text{v}1}$ (J) & $\abs{E_{\text{I}1}/E_1}$ (\%)\\
            \hline
            $0.4$s - $1.5$s & $7140$ & $0.7$ & $170$ & $2.4\%$ \\
            \hline
            $1.5$s - $2$s & $3304$ & $0.4$ & $91$ & $2.8\%$\\
            \hline
            \hline
            \textbf{Period} & $E_2$ (J) & $E_{\text{c}2}$ (J) & $E_{\text{v}2}$ (J) & $\abs{E_{\text{I}2}/E_2}$ (\%)\\
            \hline
            $0.4$s - $1.5$s & $7391$ & $-1.4$ & $16$ & $0.2\%$ \\
            \hline
            $1.5$s - $2$s & $3320$ & $-0.7$ & $12$ & $0.3\%$ \\
            \hline
            \hline
        \end{tabular}
        \label{tab:energy}
    \end{table}

\subsubsection{\textcolor{black}{Performance in the presence of constant-power loads}} \label{subs:two_ibr_CPL} 
Next, we examine the performance of the SEIs in the presence of constant-power loads. The two constant-impedance loads in Section \ref{sec:motivating_eg} are replaced by two constant-power loads. In our simulation, the two constant-power loads are modeled by the Simulink block called ``Three-Phase Dynamic Load'' with the ``External Control of PQ'' option selected. The real power for Loads $1$ and $2$ is $5784$W and $7226$W, respectively; and there is no reactive power for both loads. At $t=0.4$s, the two microgrids are networked. After $0.4$s, instability similar to that in Figure \ref{fig: motivation_example_iodq12} can be observed. The figure of the instability is omitted for the sake of space. With each IBR equipped with a SEI, Figure \ref{fig: iodq_corrected_CPL} presents the terminal d-q currents of the two IBRs and suggests the system-level symptom is mitigated.

\begin{figure}
				\centering
				\subfloat[]{\includegraphics[width=0.5\linewidth]{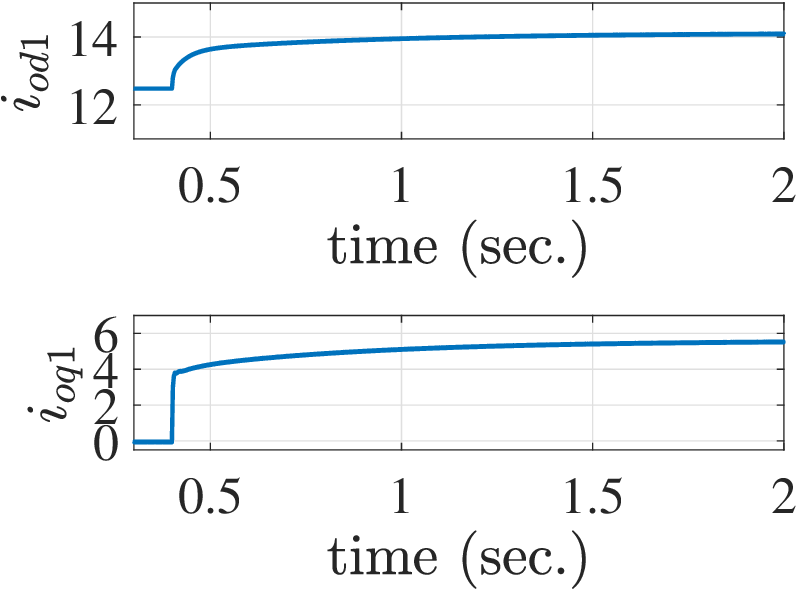}}
				\hfil
				\subfloat[]{\includegraphics[width=0.5\linewidth]{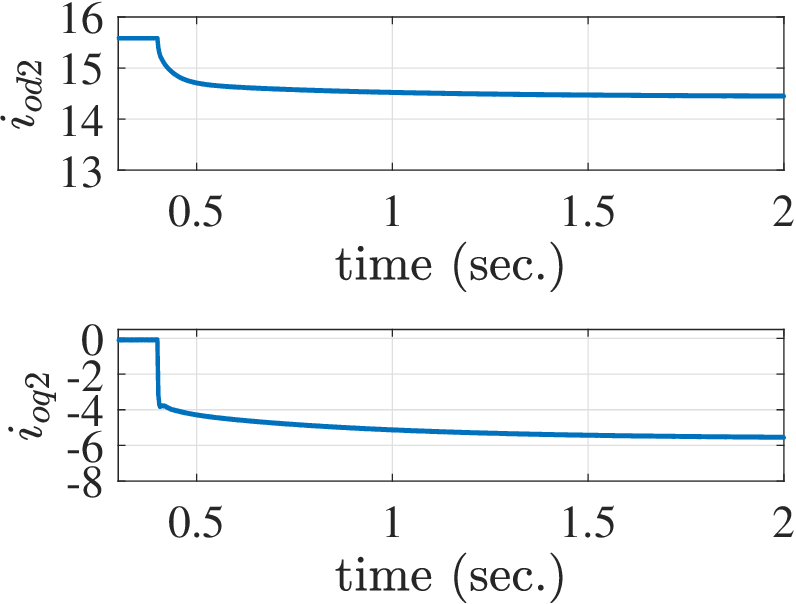}}
				\hfil
				\caption{\textcolor{black}{(a) $\textbf{i}_{\text{odq}1}$ and (b) $\textbf{i}_{\text{odq}2}$ with the SEIs and constant-power loads.}}
				\label{fig: iodq_corrected_CPL}
			\end{figure}

\subsubsection{\textcolor{black}{Comparison studies}}

\textcolor{black}{We compare the proposed approach with an existing passivity-based approach in \cite{tsinghua_condition}. Note that the approach in \cite{tsinghua_condition} requires one to reprogram the internal IBR controllers, which may be infeasible for NMPs, whereas the proposed approach can stabilize the system in a non-intrusive manner. The method in \cite{tsinghua_condition} is implemented by replacing the frequency droop controller with the angle droop controllers, and tuning the control parameters based on the condition derived in \cite{tsinghua_condition}. 
Under the disturbance, the terminal currents of the two IBRs are visualized by the orange-dashed curves in Figure \ref{fig: iodq_compare_network}. It can be observed that the method in \cite{tsinghua_condition} can stabilize the microgrids. With the SEIs, the terminal currents are presented by the blue-solid curves in Figure \ref{fig: iodq_compare_network}, suggesting that the SEIs can stabilize the microgrids with much less overshooting/undershooting. It is not surprising that the two approaches exhibit distinct behaviors under the same disturbances, due to different controllers. Figure \ref{fig: iodq_compare_network} suggests that both methods can stabilize the microgrids with a settling time of less than $0.7$s. However, the proposed SEIs achieve such a goal \emph{without reprogramming the controllers}.}

\begin{figure}
				\centering
				\subfloat[]{\includegraphics[width=0.8\linewidth]{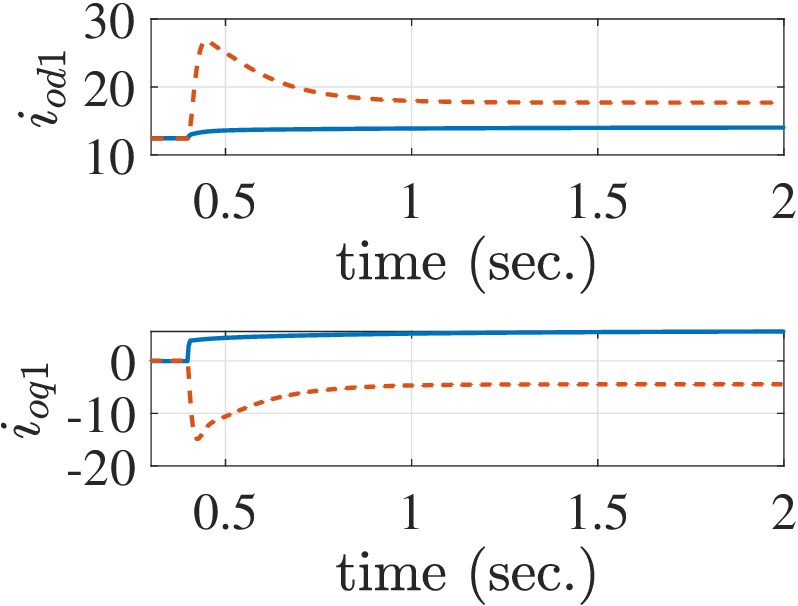}}
				\hfil
				\subfloat[]{\includegraphics[width=0.8\linewidth]{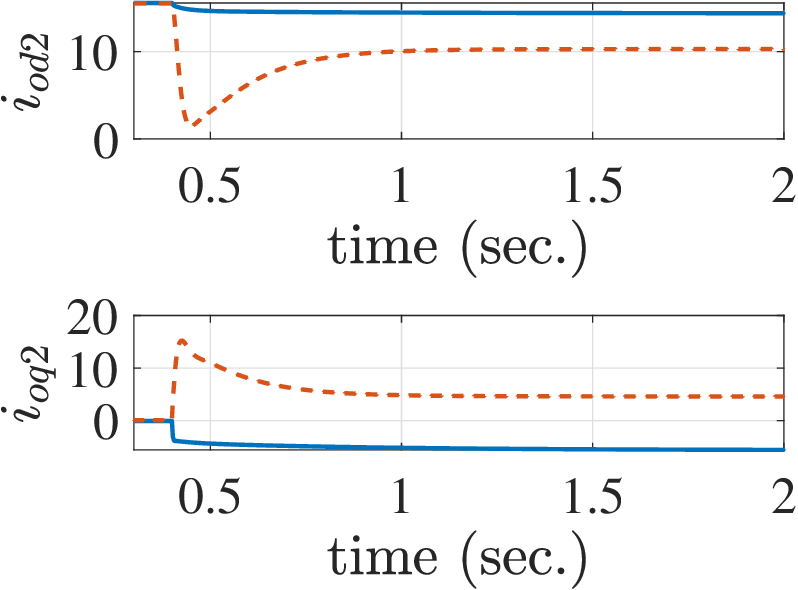}}
				\hfil
				\caption{\textcolor{black}{Comparison of the terminal currents of the proposed method (blue curves) and the intrusive method in \cite{tsinghua_condition} (orange-dashed curves) under the disturbance that the two microgrids are networked.}}
				\label{fig: iodq_compare_network}
			\end{figure}

We proceed to compare the proposed SEI with the impedance-based approach. Before the event at $t=0.4$s, the eigenvalues of the closed-loop dynamics of Microgrid 2 in Figure \ref{fig:two_ibr}-(a) are $-56100\pm \mathbbm{j}314$, $-4294\pm \mathbbm{j} 3869$, $-3883 \pm \mathbbm{j}3552$, $-364\pm \mathbbm{j}695$, and $-372\pm \mathbbm{j}697$. All of them lie in the left-half plane, suggesting the impedance ratio \cite{Sun_09,XWang_14} satisfies the Nyquist stability criterion. Therefore, the controller of IBR 2 in Figure \ref{fig:two_ibr}-(a) can be considered to be tuned by the impedance-based approach. It can be observed from Figure \ref{fig: motivation_example_iodq12} that the system indeed is stable before $0.4$s. However, after networking with Microgrid 1, IBR 2 incurs system-level instability, as shown in Figure \ref{fig: motivation_example_iodq12}. This observation should not be surprising, as the impedance-based approach requires the topology information. After $0.4$s, the microgrid topology changes. According to the impedance-based approach, one needs to re-tune IBR 2 based on the post-event topology that may not be available for an IBR at the grid edge. As discussed in Section \ref{ssubs:all-SEIs}, the proposed SEI can stabilize the microgrid \emph{without the availability of the microgrid topology information} under the same disturbance.

\subsubsection{Partial coverage of SEIs and robustness to measurement noise}
In Section \ref{ssubs:all-SEIs}, each IBR is equipped with a SEI. Here, we remove the SEI at IBR 1. The yellow-dashed curve in Figure \ref{fig: SNR} presents $i_{\text{od}2}$ under the same disturbance in Section \ref{sec:motivating_eg}, suggesting even with partial coverage of SEIs, the microgrid can be stabilized.
Furthermore, as the SEI at IBR 2 makes decisions based on the local measurements $\mathbf{v}_{\text{abc}2}$ and $\mathbf{i}_{\text{abc}2}$, we investigate the robustness of the SEI to noise in these measurements. Figure \ref{fig: SNR} shows the response of $i_{\text{od}2}$ with different noise levels in the sensors measuring $\mathbf{v}_{\text{abc}2}$ and $\mathbf{i}_{\text{abc}2}$. The noise level is quantified by the signal-to-noise ratio (SNR): a smaller SNR suggests larger measurement noise.
The SNR for an industry-graded analog-to-digital converter can be greater than $60$dB \cite{720455}. Figure \ref{fig: SNR} indicates that even with the sensors whose SNR is $50$dB or $30$dB, the SEI can still stabilize the microgrid. Note that since the SEI only uses local measurements \emph{without requiring long-distance communication}, the effect of communication delay is not investigated here.

\begin{figure}
    \centering
    \includegraphics[width = 1.8in]{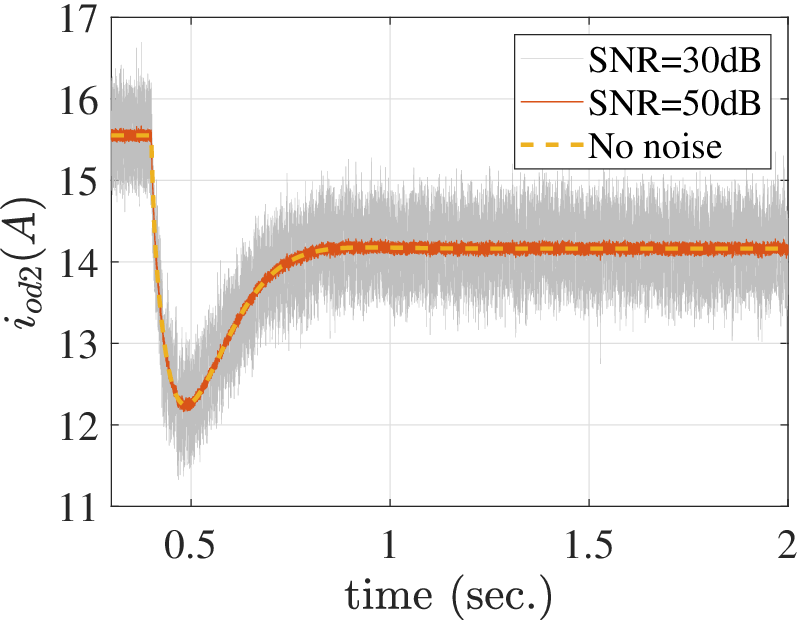}
    \caption{Visualization of $i_{\text{od}2}$ under different levels of noise.}
    \label{fig: SNR}
\end{figure}

\begin{figure}
    \centering
    \includegraphics[width = \linewidth]{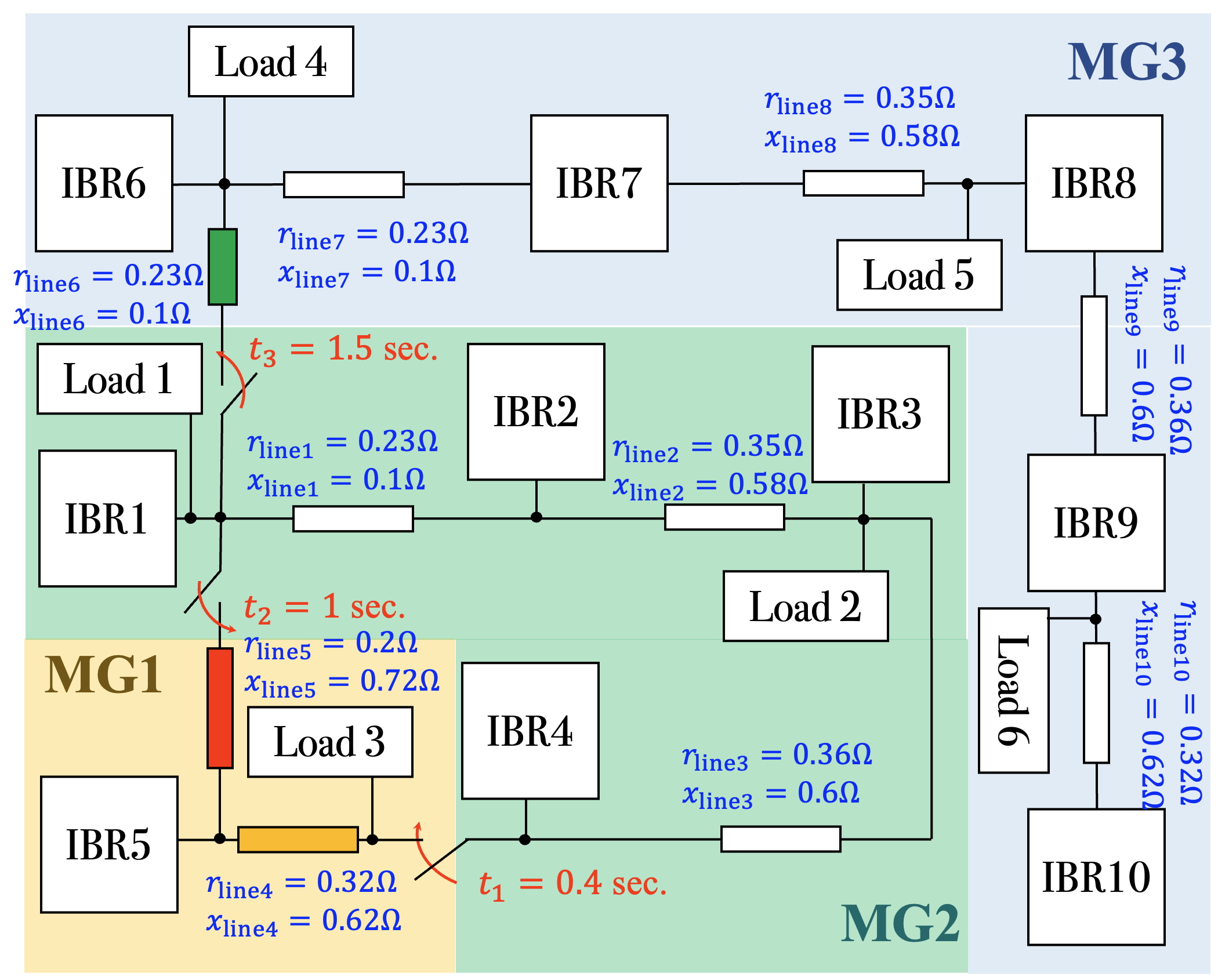}
    \caption{\textcolor{black}{Networked microgrids with 10 IBRs.}}
    \label{fig: ten_IBR_MG}
\end{figure}

\subsection{\textcolor{black}{Networked Microgrids with 10 IBRs}}

\textcolor{black}{We proceed to present the performance of the SEI in three networked microgrids with 10 IBRs in Figure \ref{fig: ten_IBR_MG}. In the simulation, the switching dynamics with a switching frequency of $8$ kHz is modeled for each IBR. For IBR $5$, $k_{\text{iv}5}= 10$, and its rest parameters are from \cite{TimGreenModel}. The control parameters of other IBRs can be found in \cite{TimGreenModel}. Before $t=0.4$s, IBR 5 supplies Load 3 autonomously, operating as an autonomous microgrid, i.e., MG1 in Figure \ref{fig: ten_IBR_MG}, and it connects to the rest of the microgrid through closing the tie line between IBRs 4 and 5 (the yellow tie line in Figure \ref{fig: ten_IBR_MG}) after $t_1=0.4$s.}

\textcolor{black}{Figure \ref{fig: FiveIBR_motivation_example_IBR1} presents the three-phase currents $\{i_{\text{a}{1}}, i_{\text{b}{1}}, i_{\text{c}{1}}\}$ at IBR 1 during the event. It shows that networking MGs 1 and 2 incurs instability: The amplitudes of the currents keep oscillating with growing magnitudes after $t=0.4$s. Note that such instability occurs throughout the microgrid, as the similar phenomenon can be observed at each IBR. The figures presenting the rest IBRs' behaviors are omitted for brevity.}

With the SEI at IBR 5, Figure \ref{fig: FiveIBR_corr_IBR1} shows that the amplitudes of the three-phase currents at IBR 1 are stabilized by the SEI after the disturbance. \textcolor{black}{As shown in Appendix \ref{app:figures_10_IBRs}, the terminal current and voltage amplitudes, and power outputs of the remaining IBRs can be stabilized,} suggesting that the SEI can stabilize the microgrid without reprogramming the IBRs. To show that the 5 IBRs in MG1 and MG2 can withstand disturbances due to topology changes, we add two additional disturbances. One disturbance is created by closing the tie line between IBRs 1 and 5 (the red tie line in Figure \ref{fig: ten_IBR_MG}) at $t_2=1$s, which changes the chain network architecture to a meshed one. Another disturbance is caused by closing the tie line between IBR 1 and IBR 6 (the green tie line in Figure \ref{fig: ten_IBR_MG}) at $t_3=1.5$s, which interconnect the ten IBRs. \textcolor{black}{Figures \ref{fig: FiveIBR_corr_IBR1}, \ref{fig: TenIBR_iabc}, \ref{fig: TenIBR_vabc}, and \ref{fig: TenIBR_P} suggest} the 10-IBR system can withstand the two new disturbances.

\begin{figure}
				\centering
				\subfloat[]{\includegraphics[width=0.7\linewidth]{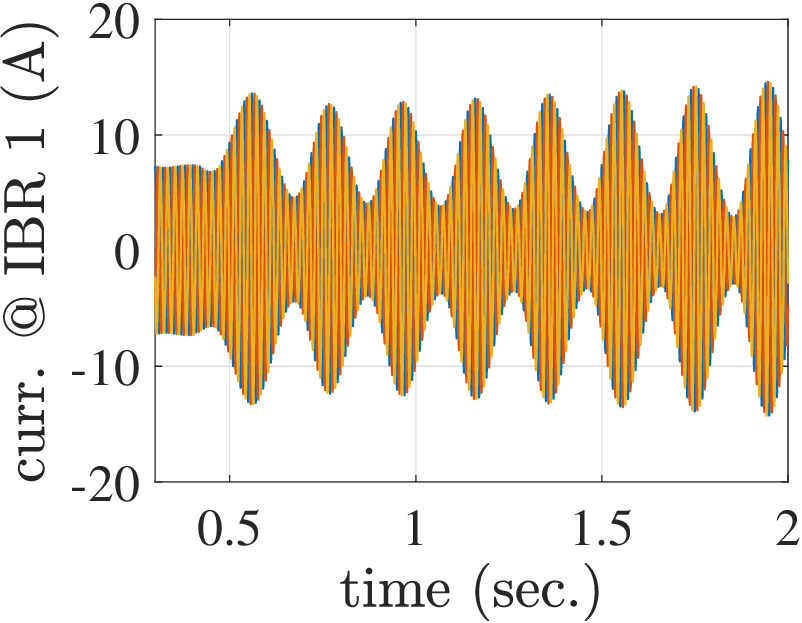}}
				\hfil
				\subfloat[]{\includegraphics[width=0.7\linewidth]{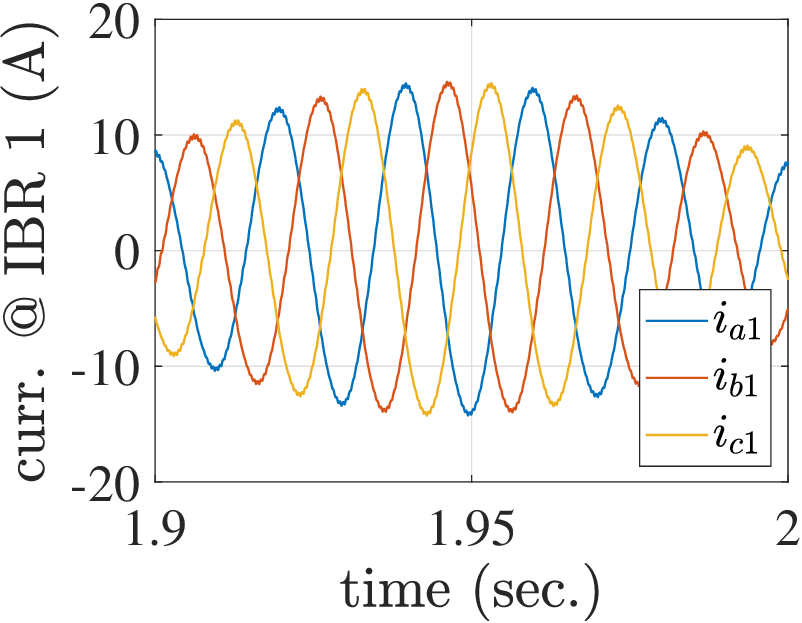}}
				\hfil
				\caption{\textcolor{black}{(a) Three-phase instantaneous terminal currents $\{i_{\text{a}{1}}, i_{\text{b}{1}}, i_{\text{c}{1}}\}$ at IBR 1 when microgrids 1 and 2 are networked; and (b) their zoom-in version from $1.9$s to $2$s without SEI.}}
				\label{fig: FiveIBR_motivation_example_IBR1}
			\end{figure}

\begin{figure}
				\centering
				\subfloat[]{\includegraphics[width=0.5\linewidth]{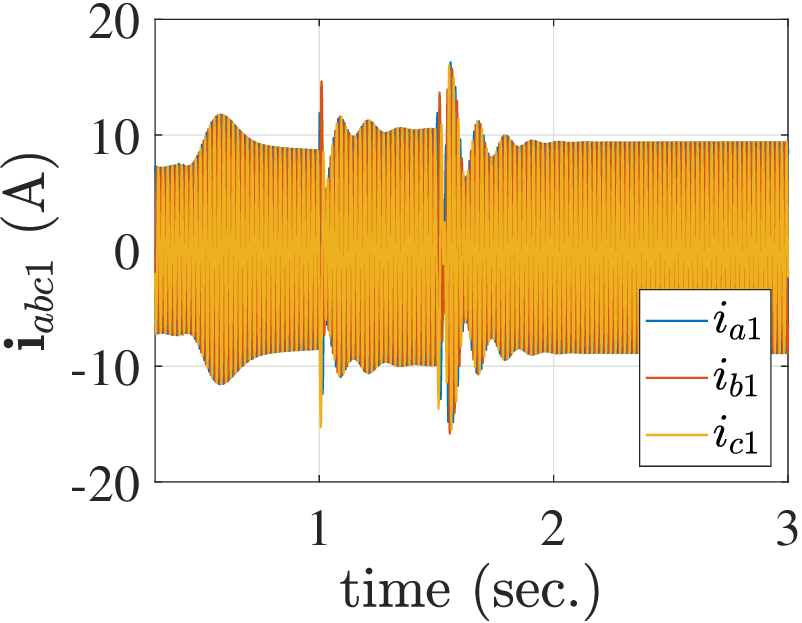}}
				\hfil
				\subfloat[]{\includegraphics[width=0.5\linewidth]{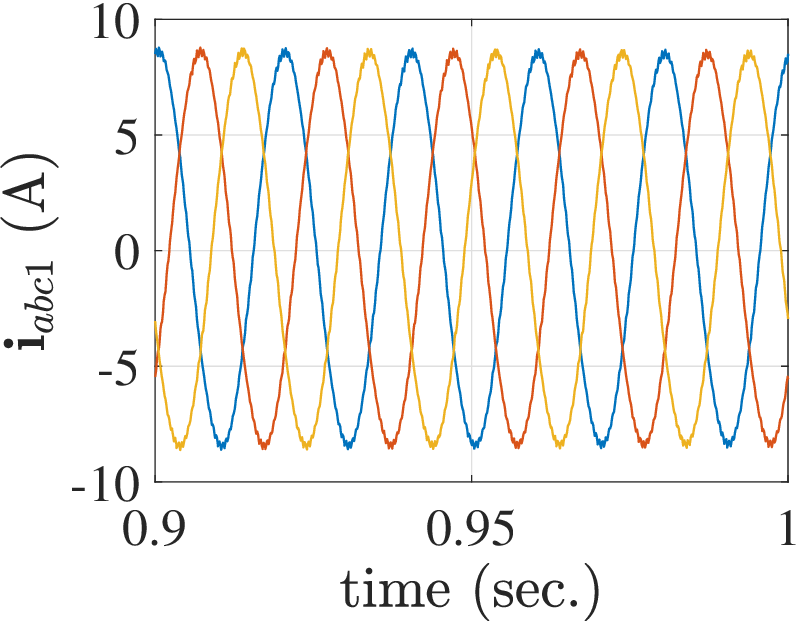}}
				\hfil
                \subfloat[]{\includegraphics[width=0.5\linewidth]{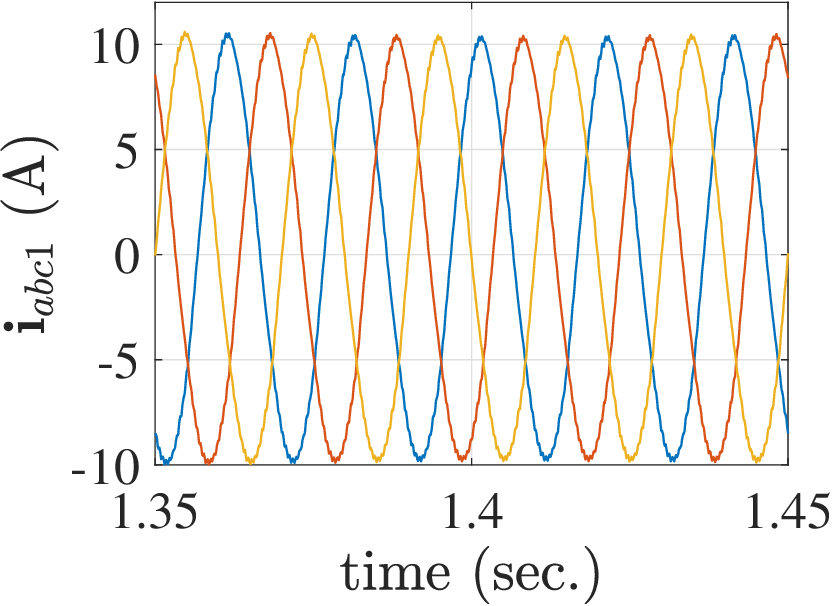}}
				\hfil
                \subfloat[]{\includegraphics[width=0.5\linewidth]{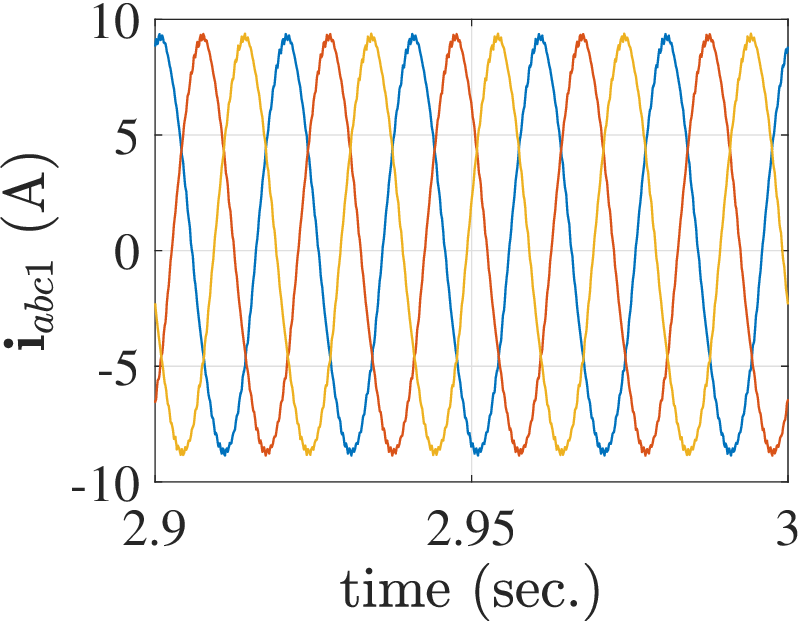}}
				\hfil
				\caption{\textcolor{black}{(a) Three-phase instantaneous terminal currents $\{i_{\text{a}{1}}, i_{\text{b}{1}}, i_{\text{c}{1}}\}$ at IBR 1 in the 10-IBR microgrid under the disturbances at $0.4$s, $1$s, and $1.5$s with the SEI; and their zoom-in version around $1$s (b), $1.4$s (c) and $3$s (d).}}
				\label{fig: FiveIBR_corr_IBR1}
			\end{figure}

\section{Conclusion} \label{sec:concl}

This paper introduces an interface that stabilizes AC microgrids dominated by IBRs in a non-intrusive, decentralized manner.
The interface has three advantages: 1) it enforces the system-level, EMT stability without the need of reprogramming IBR controllers, accessing internal state variables of IBRs, or communication among IBRs; 2) designing the interface only requires a scalar, and sharing the scalar with the NMPs does not cause any concerns on IP privacy; and 3) the interface explicitly addresses the complexity from IBR's high-order dynamics. 
The method is tested by the 2-IBR and \textcolor{black}{10-IBR} microgrids with benchmark parameters. Simulations show that growing oscillations can occur, when two stable AC microgrids are networked, and they also suggest that the interface can mitigate such a system-level symptom. Future work will theoretically quantify the minimal power required by SEIs for stabilization, develop the internal cyber and physical architectures of voltage and current sources for the SEIs, \textcolor{black}{and perform hardware-in-the-loop and purely hardware validations.}

\bibliographystyle{cas-model2-names}

\bibliography{ref.bib}

\appendix

\section{Acronym Explanation} \label{app:aronyms}
\textcolor{black}{For the convenience of readers, all acronyms are explained in Table \ref{tab:acronym}.}

\begin{table}
    \caption{\textcolor{black}{Acronym Explanation}}
    \centering
    {\color{black}
        \begin{tabular}{l|l}
            \hline\hline
            d-q frame & local direct-quadrature frame\\
            D-Q frame & common direct-quadrature frame\\
            DER & distributed energy resource \\
            DSO & distribution system operator\\
            EMT & electromagnetic transient\\
            GFL & grid-following\\
            GFM & grid-forming\\
            IBR & inverter-based resources\\
            IP & intellectual property\\
            KCL & Kirchhoff's current law\\
            MG & microgrid\\
            MGO & microgrid operator\\
            NMP & non-manufacturer party\\
            OFP & output-feedback passive\\
            PE & power-electronic\\
            PLL & phase-locked loop\\
            RLC & resistor-inductor-capacitor\\
            SEI & stability enforcement interface\\
            SISO & single-input-single-output\\
            SO & system operator\\
            SSCI & sub-synchronous control interactions\\
            \hline\hline
        \end{tabular}
    }
    \label{tab:acronym}
\end{table}

\section{\textcolor{black}{Dynamics of Grid-forming IBRs}} \label{app: GFM_IBR}
Suppose that the $n$-th IBR is grid-forming. As shown in Figure \ref{fig:IBRn}, the GFM IBR includes a DC voltage source, an inverter, a resistor-inductor-capacitor (RLC) low-pass filter, a power controller, a voltage controller, and a current controller. The dynamics of each block in Figure \ref{fig:IBRn} is introduced as follows.
\subsection{RLC filter}
The inverter connects to the rest of the microgrid via an RLC filter whose dynamics are \cite{TimGreenModel}
\begin{subequations} \label{eq:LC_filter}
  \begin{align}
    L_{\text{f}n}\dot{i}_{\text{ld}n} &= -r_{\text{f}n}i_{\text{ld}n} + L_{\text{f}n}\omega_0  i_{\text{lq}n} + v_{\text{id}n}-v_{\text{od}n}\\
    L_{\text{f}n}\dot{i}_{\text{lq}n} &= -r_{\text{f}n}i_{\text{lq}n} - L_{\text{f}n}\omega_0 i_{\text{ld}n} + v_{\text{iq}n}-v_{\text{oq}n}\\
    C_{\text{f}n}\dot{v}_{\text{od}n} &= C_{\text{f}n}\omega_0 v_{\text{oq}n} + i_{\text{ld}n}+i_{\text{od}n}\\
    C_{\text{f}n}\dot{v}_{\text{oq}n} &= -C_{\text{f}n}\omega_0 v_{\text{od}n} + i_{\text{lq}n}+i_{\text{oq}n}
  \end{align}
\end{subequations}
where $i_{\text{ld}n}$ and $i_{\text{od}n}$ ($i_{\text{lq}n}$ and $i_{\text{oq}n}$) are the direct (quadrature) component of the current $\mathbf{i}_{\text{l}n}$ and $\mathbf{i}_{\text{o}n}$ annotated in Figure \ref{fig:IBRn}; $v_{\text{id}n}$ and $v_{\text{od}n}$ ($v_{\text{iq}n}$ and $v_{\text{oq}n}$) are the direct (quadrature) components of the voltage $\mathbf{v}_{\text{i}n}$ and $\mathbf{v}_{\text{o}n}$; resistance $r_{\text{f}n}$, inductance $L_{\text{f}n}$, and capacitance $C_{\text{f}n}$ of the RLC circuit are labeled in Figure \ref{fig:IBRn}; and $\omega_{0}$ is the nominal frequency (i.e., 377 or 314 rad/s). Note that the reference positive direction of $\mathbf{i}_{\text{o}n}$ is \emph{pointing into} the IBR.

\subsection{Power controller}
A power controller contains a power calculator, a power filter, and a droop controller. The power calculator computes the instantaneous real power $\tilde{p}_n$ and reactive power $\tilde{q}_n$ \emph{injecting into} the rest of the microgrid, based on IBR $n$'s terminal voltages ($v_{\text{od}n}$ and $v_{\text{oq}n}$) and current ($i_{\text{od}n}$ and $i_{\text{oq}n}$) in the direct-quadrature (d-q) reference frame of IBR $n$. With the positive reference directions assigned to $\mathbf{v}_{\text{o}n}$ and $\mathbf{i}_{\text{o}n}$ in Figure \ref{fig:IBRn}, $\tilde{p}_n$ and $\tilde{q}_n$ are computed by 
\begin{subequations} \label{eq:inst_pq}
  \begin{align}
        \tilde{p}_n =-\frac{3}{2}(v_{\text{od}n}i_{\text{od}n} + v_{\text{oq}n}i_{\text{oq}n}) \label{eq:inst_p}\\
       \tilde{q}_n = -\frac{3}{2}(v_{\text{oq}n}i_{\text{od}n} - v_{\text{od}n}i_{\text{oq}n}) \label{eq:inst_q}.
  \end{align}
\end{subequations}
The instantaneous real and reactive power feed the power filter, i.e., a digital low-pass filter, whose dynamics is described by
\begin{subequations} \label{eq:power_sensor}
  \begin{align}
        \dot{P}_n &= -\omega_{\text{c}n}P_n + \omega_{\text{c}n} \tilde{p}_n\\
        \dot{Q}_n &= -\omega_{\text{c}n}Q_n + \omega_{\text{c}n} \tilde{q}_n
  \end{align}
\end{subequations}
where $\omega_{\text{c}n}$ is the cut-off frequency; and $P_n$ and $Q_n$ are the real and reactive power filtered by the power filter. The droop controller takes $P_n$ and $Q_n$ as inputs and it specifies frequency $\omega_n$, phase angle $\delta_n$ and voltage setpoints $v_{\text{od}n}^*$ and $v_{\text{oq}n}^*$ via
\begin{subequations} \label{eq:droop}
  \begin{align}
        & \dot{\delta}_n = \omega_n - \omega_{0}, \quad
        \omega_n = \omega_{\text{s}n} - \alpha_n P_n \\
        & v_{\text{od}n}^* = V_{\text{0}n} - \beta_n Q_n, \quad
         v_{\text{oq}n}^* = 0   
  \end{align}
\end{subequations}
where $\omega_{\text{s}n}$ is set by a secondary controller; $V_{\text{0}n}$ is a voltage setpoint; and $\alpha_n$ and $\beta_n$ are droop control parameters. The angle $\delta_n$ is used in the Park and the inverse Park transformations that bridge three-phase variables with variables in the d-q-0 frame.

\subsection{Voltage controller}
The dynamics of the voltage controller is governed by
\begin{subequations} \label{eq:v_controller}
  \begin{align}
    &\dot{\phi}_{\text{d}n} = - v_{\text{od}n} + v_{\text{od}n}^* , \quad \dot{\phi}_{\text{q}n} = - v_{\text{oq}n} + v_{\text{oq}n}^* ,\\
    &i_{\text{ld}n}^* = K_{\text{pv}n}(v_{\text{od}n}^* - v_{\text{od}n})  - F_{n}i_{\text{od}n} - \omega_{\text{0}}C_{\text{f}n}v_{\text{oq}n} +K_{\text{iv}n} \phi_{\text{d}n}\\
    &i_{\text{lq}n}^* = K_{\text{pv}n}(v_{\text{oq}n}^* - v_{\text{oq}n}) - F_{n}i_{\text{oq}n} + \omega_{\text{0}}C_{\text{f}n}v_{\text{od}n} + K_{\text{iv}n} \phi_{\text{q}n}
  \end{align}
\end{subequations}
where ${\phi}_{\text{d}n}$ and ${\phi}_{\text{q}n}$ are state variables for the voltage controller; $i_{\text{ld}n}^*$ and $i_{\text{lq}n}^*$ are setpoints of the current controller provided by the voltage controller; and $K_{\text{pv}n}$, $F_n$, and $K_{\text{iv}n}$ are control parameters.

\subsection{Current controller} The dynamics of the current controller is described by
\begin{subequations} \label{eq:I_controller}
  \begin{align}
    &\dot{\gamma}_{\text{d}n} = - i_{\text{ld}n} + i_{\text{ld}n}^* , \quad \dot{\gamma}_{\text{q}n} = - i_{\text{lq}n} + i_{\text{lq}n}^*,\\
    &v_{\text{id}n}^* = K_{\text{pc}n}(i_{\text{ld}n}^* - i_{\text{ld}n})  - \omega_{\text{0}}L_{\text{f}n}i_{\text{lq}n} + K_{\text{ic}n} \gamma_{\text{d}n}\\
    &v_{\text{iq}n}^* = K_{\text{pc}n}(i_{\text{lq}n}^* - i_{\text{lq}n})  + \omega_{\text{0}}L_{\text{f}n}i_{\text{ld}n} + K_{\text{ic}n} \gamma_{\text{q}n}
  \end{align}
where ${\gamma}_{\text{d}n}$ and ${\gamma}_{\text{q}n}$ are state variables for the current controller; and $K_{\text{pc}n}$, and $K_{\text{ic}n}$ are control parameters.

\subsection{Time scale separation} \label{sssection: time_scale_separation}
The state variables of dynamics \eqref{eq:LC_filter}, \eqref{eq:power_sensor}, \eqref{eq:droop}, \eqref{eq:v_controller}, and \eqref{eq:I_controller} include $\delta_n$, $P_n$, $Q_n$, $\phi_{\text{d}n}$, $\phi_{\text{q}n}$, $\gamma_{\text{d}n}$, $\gamma_{\text{q}n}$, $i_{\text{ld}n}$, $i_{\text{lq}n}$, $v_{\text{od}n}$, and $v_{\text{oq}n}$. Define $\mathcal{S}_n^{\text{s}}=\{\delta_n, P_n, Q_n\}$ and $\mathcal{S}_n^{\text{f}}=\{\phi_{\text{d}n}, \phi_{\text{q}n}, \gamma_{\text{d}n}, \gamma_{\text{q}n}, i_{\text{ld}n}, i_{\text{lq}n}, v_{\text{od}n}, v_{\text{oq}n}\}$. Next we show that the states in $\mathcal{S}_n^{\text{f}}$ can be stabilized much faster than those in $\mathcal{S}_n^{\text{s}}$ via simulating a grid-connected IBR with a representative parameter setting \cite{TimGreenModel}. In the simulation\footnote{Figures \ref{fig: time_scale_separation} and \ref{fig:stabilization_time} are obtained by simulating Microgrid 1 in Figure \ref{fig:two_ibr}-(a). The per-phase impedance at Load 1 changes from $25\Omega$ to $10\Omega$ at $t=0.5$s.}, the load changes at time $t=0.5$s, Figure \ref{fig: time_scale_separation} visualizes state variables $P_1$ and $\phi_{\text{d}1}$. It can be observed that it takes more than $0.15$s to stabilize $P_1$, while $\phi_{\text{d}1}$ is stabilized around $0.006$s after the disturbance occurs. Figure \ref{fig:stabilization_time} presents the stabilization time of key variables of the IBR. Figure \ref{fig:stabilization_time} suggests that $\omega_1$, $P_1$, and $Q_1$ are stabilized much slower than the states in $\mathcal{S}_n^{\text{f}}$. 


\begin{figure}
				\centering
				\subfloat[]{\includegraphics[width=0.6\linewidth]{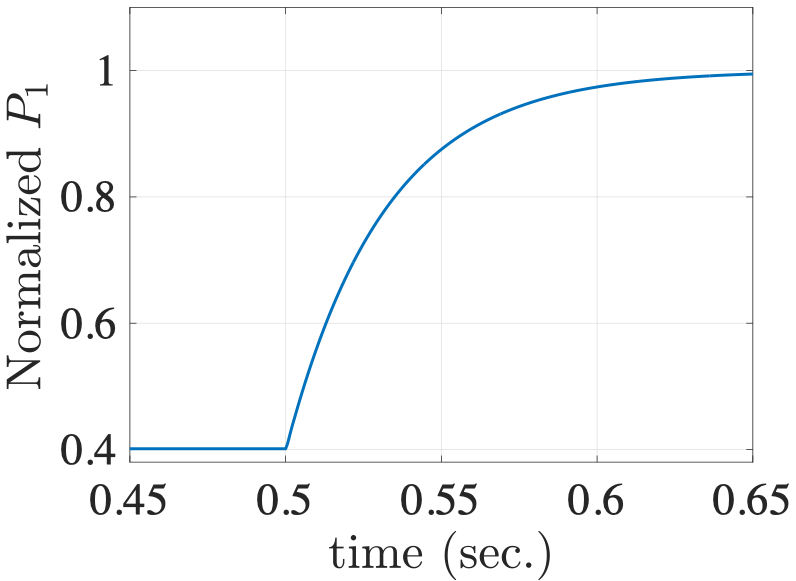}}
				\hfil
				\subfloat[]{\includegraphics[width=0.6\linewidth]{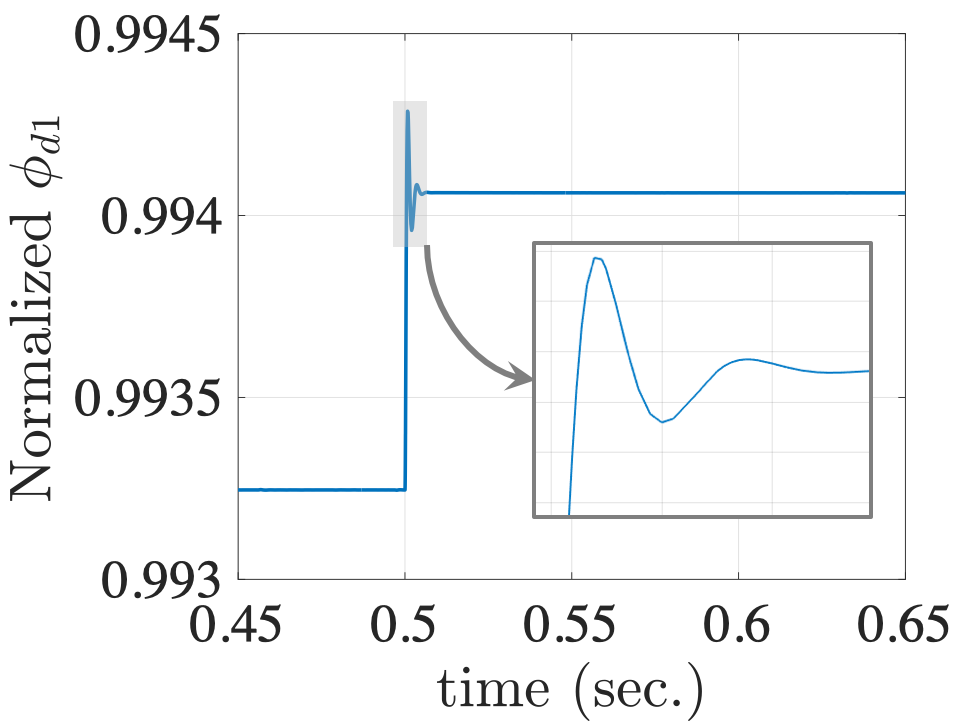}}
				\hfil
				\caption{Time-domain evolution of normalized $P_1$ and $\phi_{\text{d}1}$.}
				\label{fig: time_scale_separation}
			\end{figure}

\begin{figure}
    \centering
    \includegraphics[width = 0.8\linewidth]{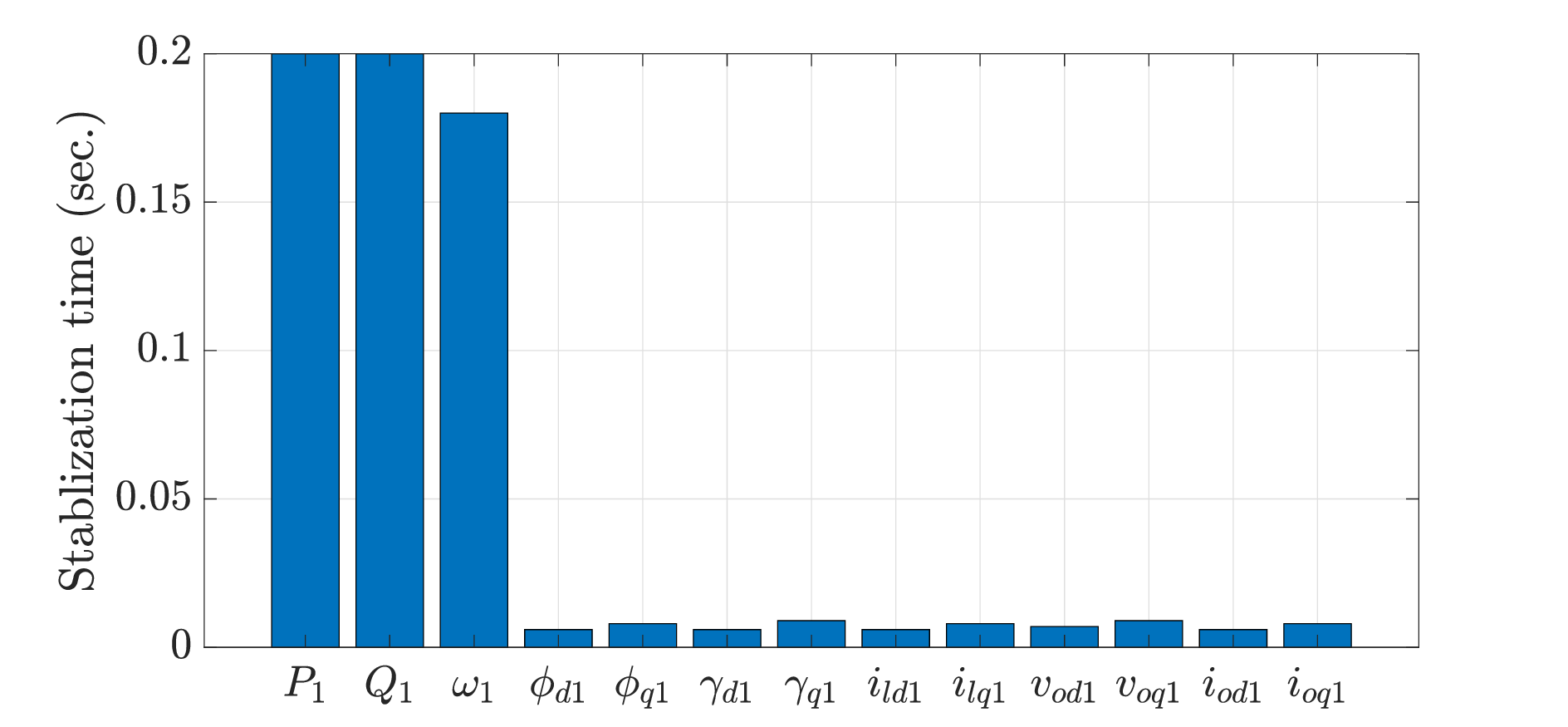}
    \caption{Stabilization time of key variables}
    \label{fig:stabilization_time}
\end{figure}

A very large body of literature (see \cite{THneuralTSG} and the references therein) studies the slow dynamics defined by the states in $\mathcal{S}_n^{\text{s}}$ by assuming that the fast states in $\mathcal{S}_n^{\text{f}}$ are stabilized fast. This paper examines the interaction among the fast states in $\mathcal{S}_n^{\text{f}}$ by assuming the states in $\mathcal{S}_n^{\text{s}}$ as constants. \textcolor{black}{With such an assumption, we exclude the dynamics of the slow states in $\mathcal{S}_n^{\text{s}}$ to derive \eqref{eq:compact_deviation}.}

\end{subequations}

\section{Dynamics of Grid-following IBRs} \label{app: GFL_IBR}
Suppose that the $n-$th IBR is grid-following (GFL). The cyber-physical architecture of the GFL IBR is summarized in Figure \ref{fig:IBRn-GFL}. The dynamics of the RLC output filter and the current controller in Figure \ref{fig:IBRn-GFL}  can be characterized by \eqref{eq:LC_filter} and \eqref{eq:I_controller}. Next, we elaborate the phase locked loop (PLL) and the block that generates the current set points for the current controller.

\subsection{Phase locked loop} The PLL aims to track the frequency of the grid that hosts the GFL IBR. This is done by a proportional-integral (PI) controller described by
\begin{subequations} \label{eq:PLL}
  \begin{align}
        & \dot{\eta}_n = K_{\text{ip}n}v_{\text{oq}n}\\
        & \omega_n = \eta_n + K_{\text{pp}n}v_{\text{oq}n} + \omega_0
  \end{align}
\end{subequations}
\textcolor{black}{where $\eta_n$ is the state variable of the PLL; and $K_{\text{ip}n}$ and $K_{\text{pp}n}$ are control parameters. The integral of $\omega_n$ is used in the Park and inverse Park transformation.} 

\subsection{\textcolor{black}{The block generating current set points}} \textcolor{black}{Given the real and reactive power set points ($P^*_{\text{n}}$ and $Q^*_{\text{n}}$), the current set points ($i_{\text{ld}n}^*$ and $i_{\text{lq}n}^*$) are produced by the following algebraic equations:}
\begin{equation} \label{eq:set_point_gen}
    \textcolor{black}{i_{\text{ld}n}^* = -\frac{2}{3}\frac{P_n^*}{v_{\text{od}n}}, \quad i_{\text{lq}n}^* = \frac{2}{3}\frac{Q_n^*}{v_{\text{od}n}}.}
\end{equation}
\textcolor{black}{Equation \eqref{eq:set_point_gen} is linearized to derive \eqref{eq:compact_deviation} for the GFL IBRs.}




\section{\textcolor{black}{Visualization of Current, Voltage, and Power in the 10-IBR System}} \label{app:figures_10_IBRs}
\textcolor{black}{Figures \ref{fig: TenIBR_iabc}, \ref{fig: TenIBR_vabc}, and \ref{fig: TenIBR_P} visualize three-phase currents, three-phase voltages, and power measured at each IBR in the 10-IBR system described in Section \ref{sec:case_study}.2.}
\begin{figure*}
				\centering
				\subfloat[]{\includegraphics[width=0.25\linewidth]{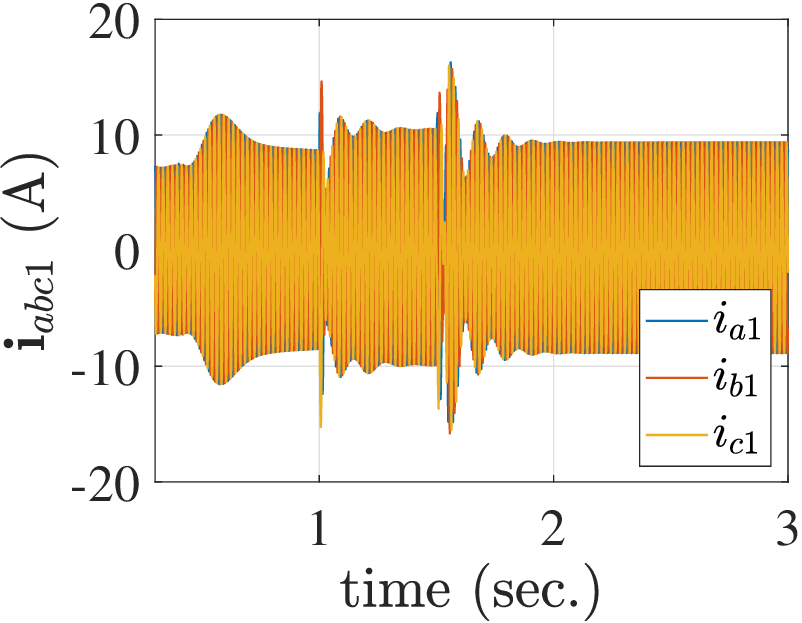}}
				\hfil
				\subfloat[]{\includegraphics[width=0.25\linewidth]{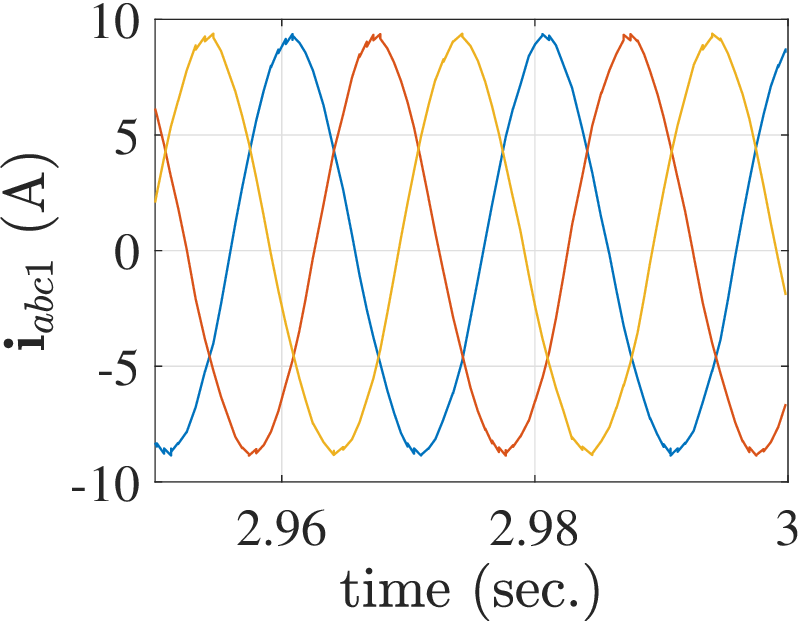}}
				\hfil
                \subfloat[]{\includegraphics[width=0.25\linewidth]{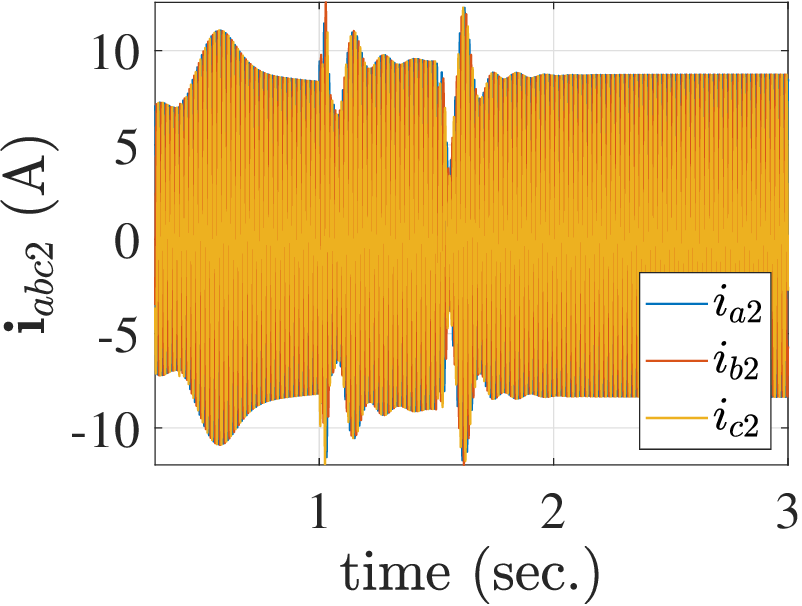}}
				\hfil
				\subfloat[]{\includegraphics[width=0.25\linewidth]{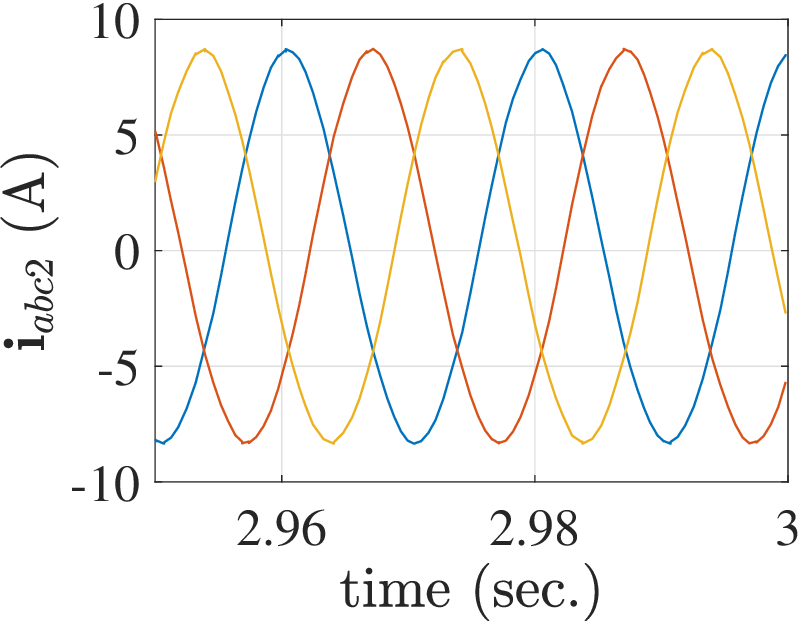}}
				\hfil
                \subfloat[]{\includegraphics[width=0.25\linewidth]{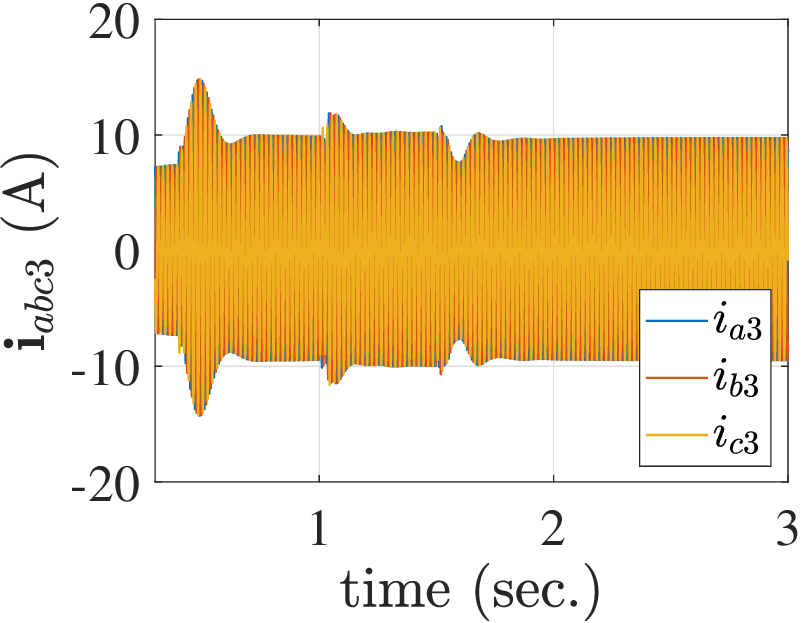}}
				\hfil
				\subfloat[]{\includegraphics[width=0.25\linewidth]{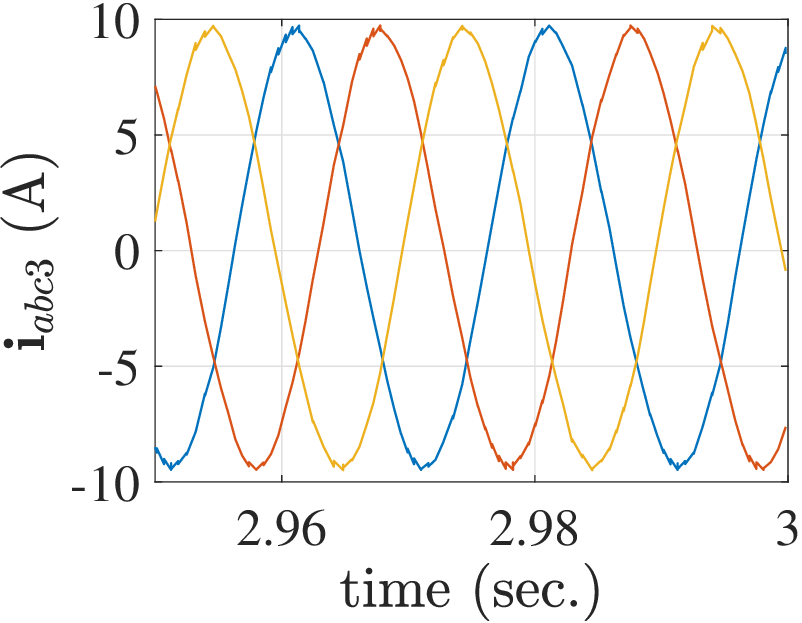}}
				\hfil
                \subfloat[]{\includegraphics[width=0.25\linewidth]{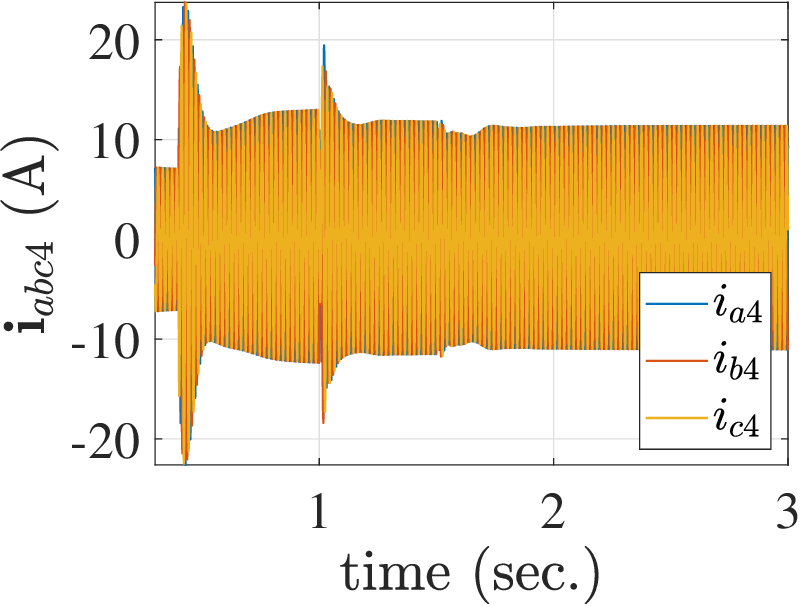}}
				\hfil
				\subfloat[]{\includegraphics[width=0.25\linewidth]{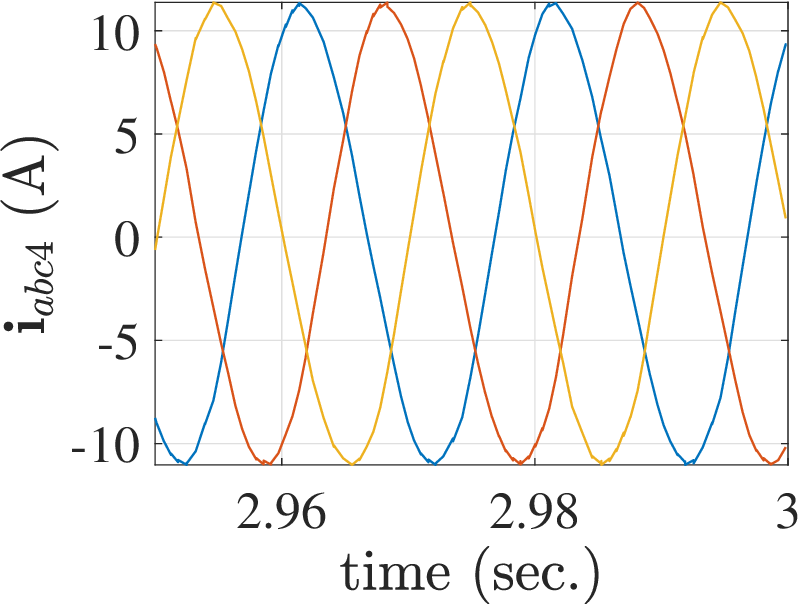}}
				\hfil
                \subfloat[]{\includegraphics[width=0.25\linewidth]{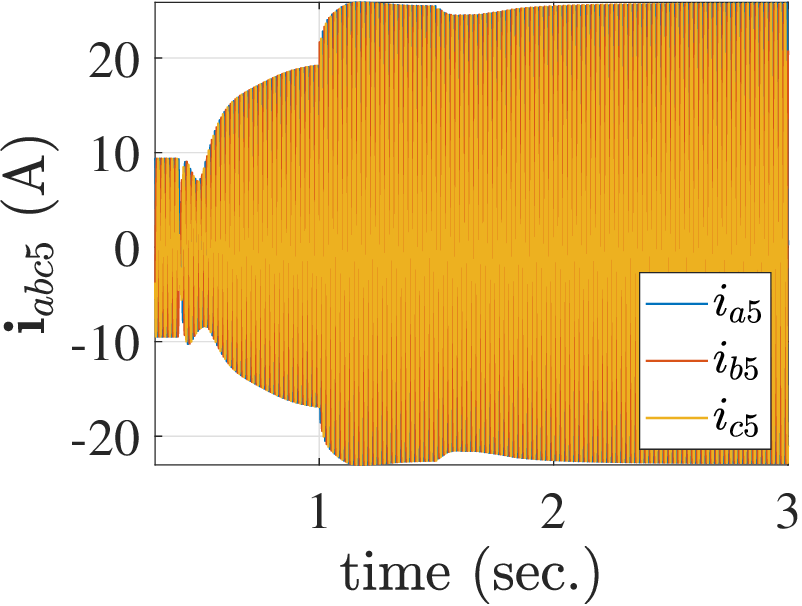}}
				\hfil
				\subfloat[]{\includegraphics[width=0.25\linewidth]{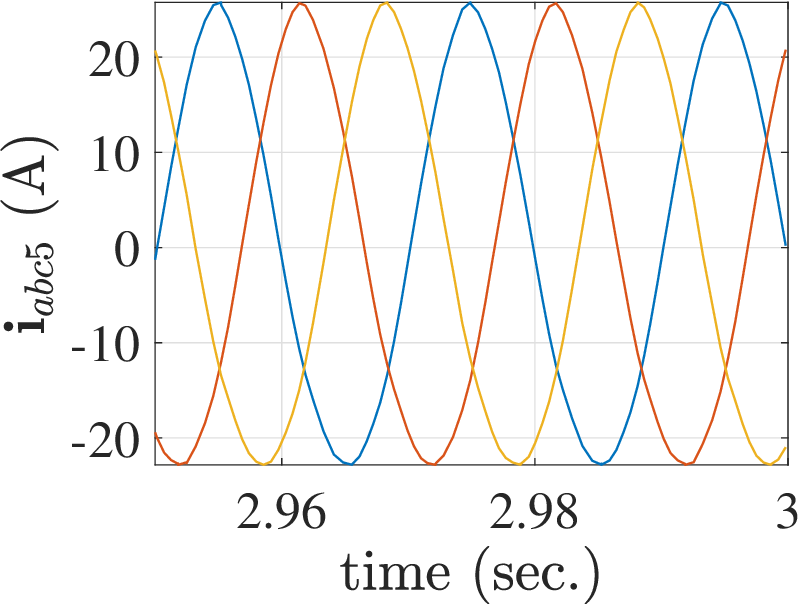}}
				\hfil
                \subfloat[]{\includegraphics[width=0.25\linewidth]{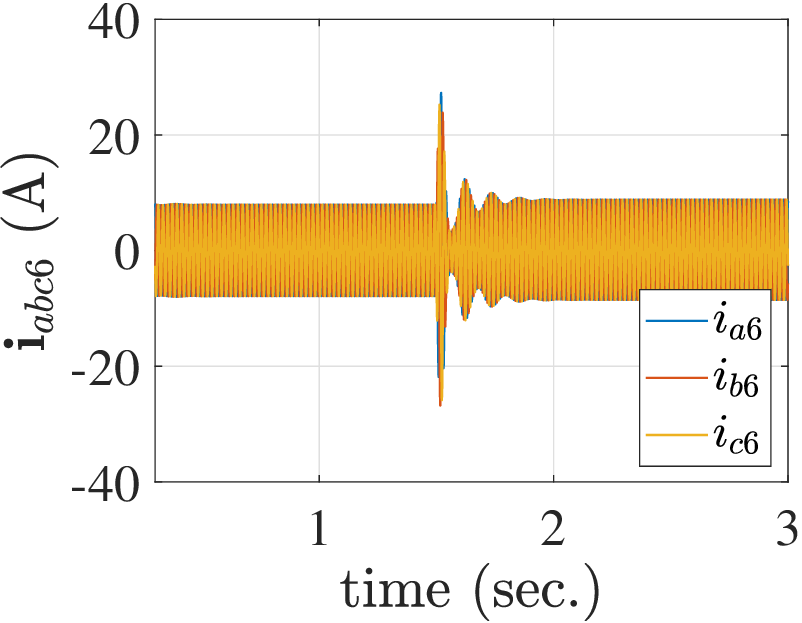}}
				\hfil
				\subfloat[]{\includegraphics[width=0.25\linewidth]{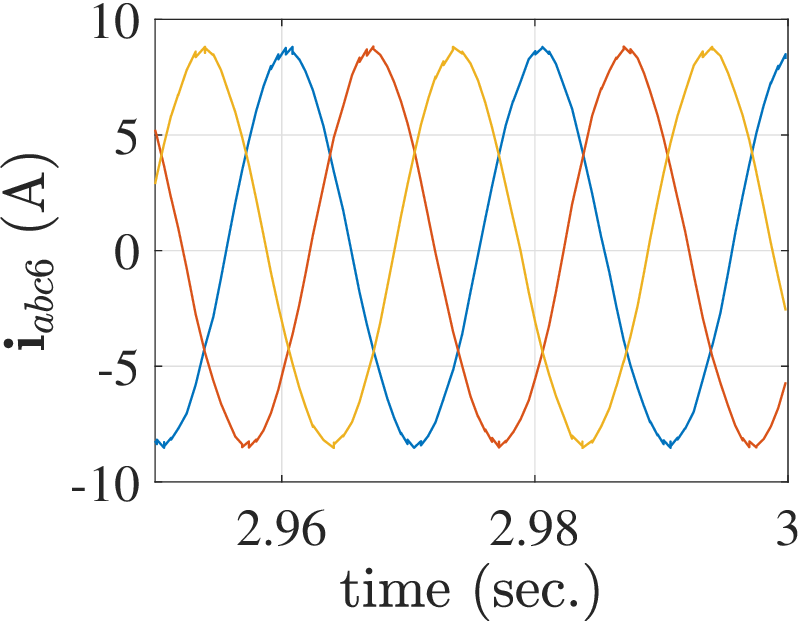}}
				\hfil
                \subfloat[]{\includegraphics[width=0.25\linewidth]{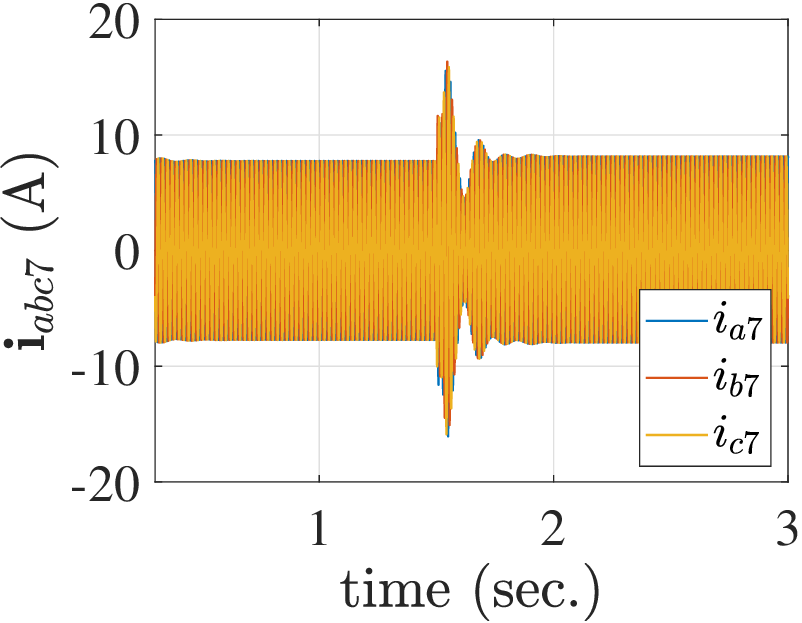}}
				\hfil
				\subfloat[]{\includegraphics[width=0.25\linewidth]{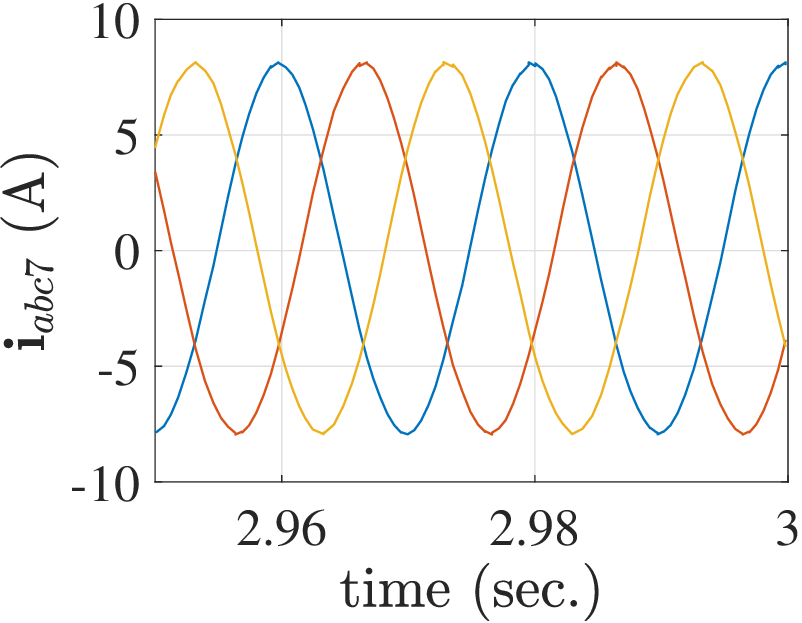}}
				\hfil
                \subfloat[]{\includegraphics[width=0.25\linewidth]{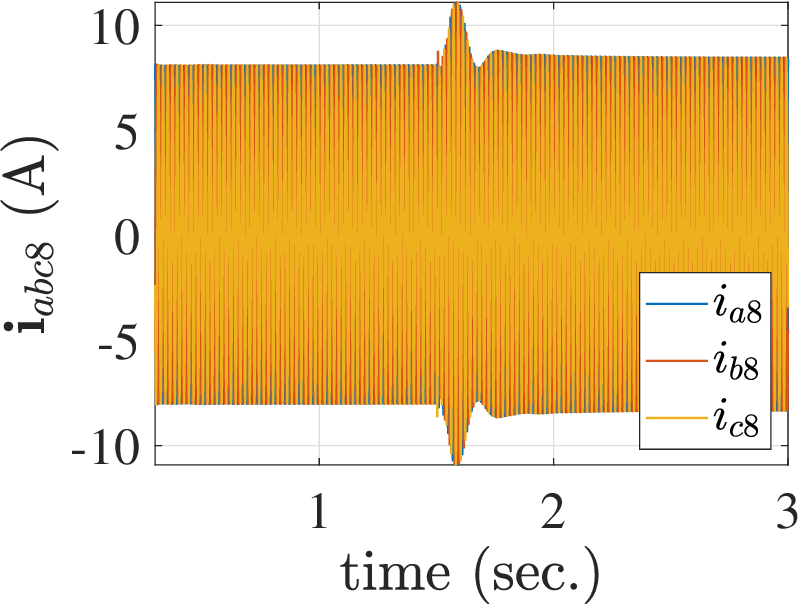}}
				\hfil
				\subfloat[]{\includegraphics[width=0.25\linewidth]{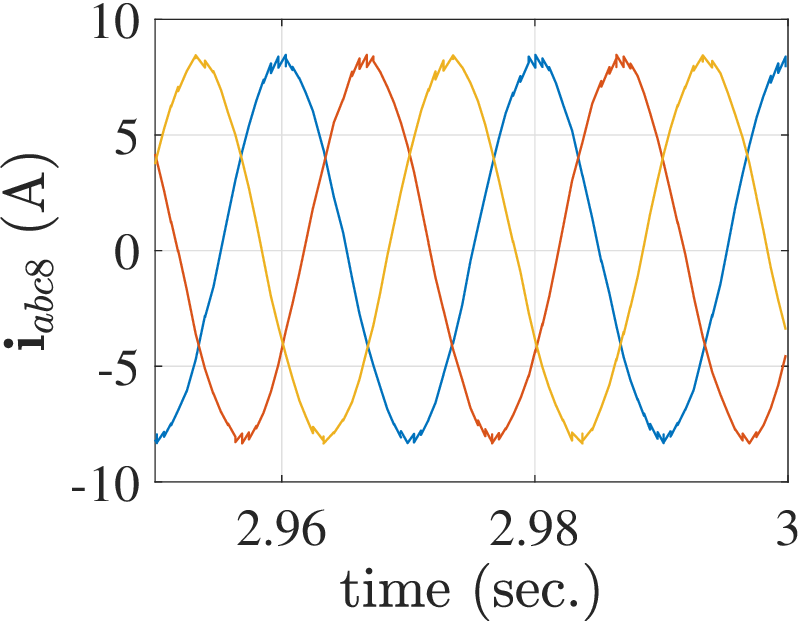}}
				\hfil
                \subfloat[]{\includegraphics[width=0.25\linewidth]{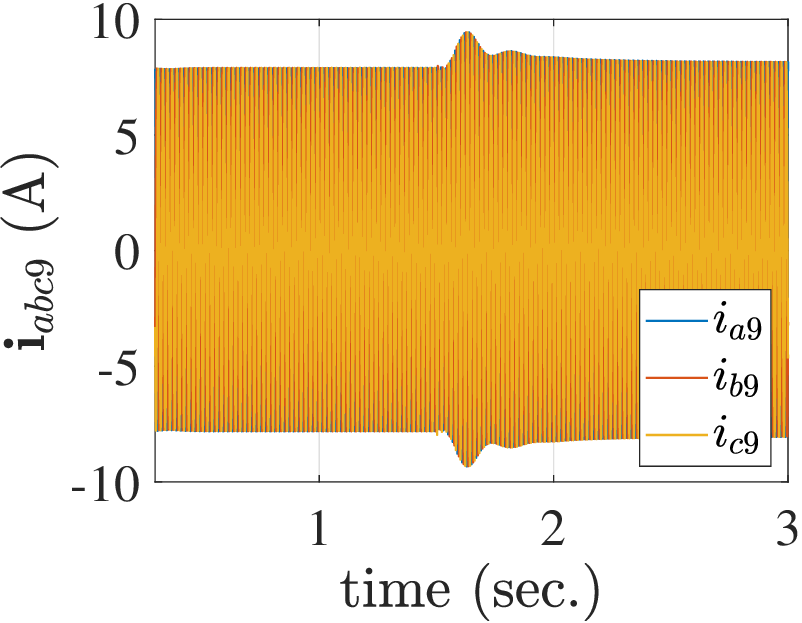}}
				\hfil
				\subfloat[]{\includegraphics[width=0.25\linewidth]{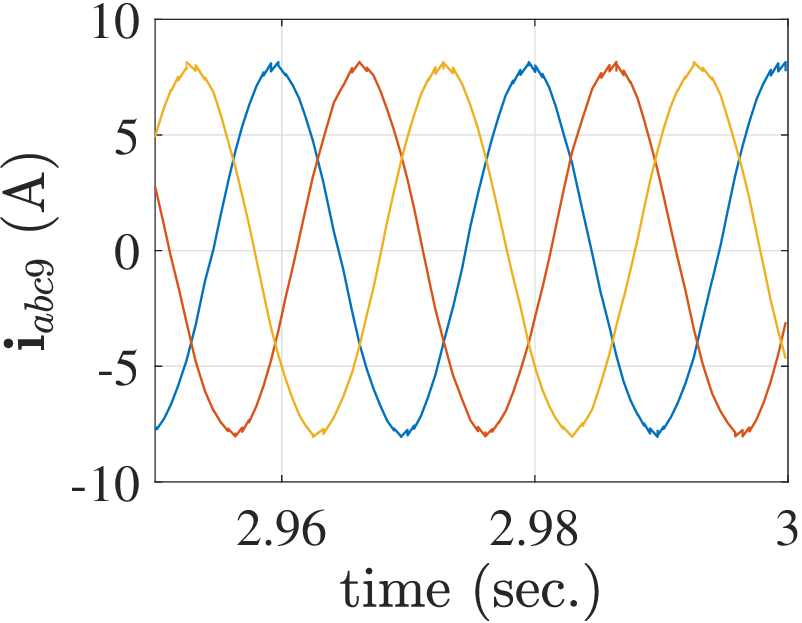}}
				\hfil
                \subfloat[]{\includegraphics[width=0.25\linewidth]{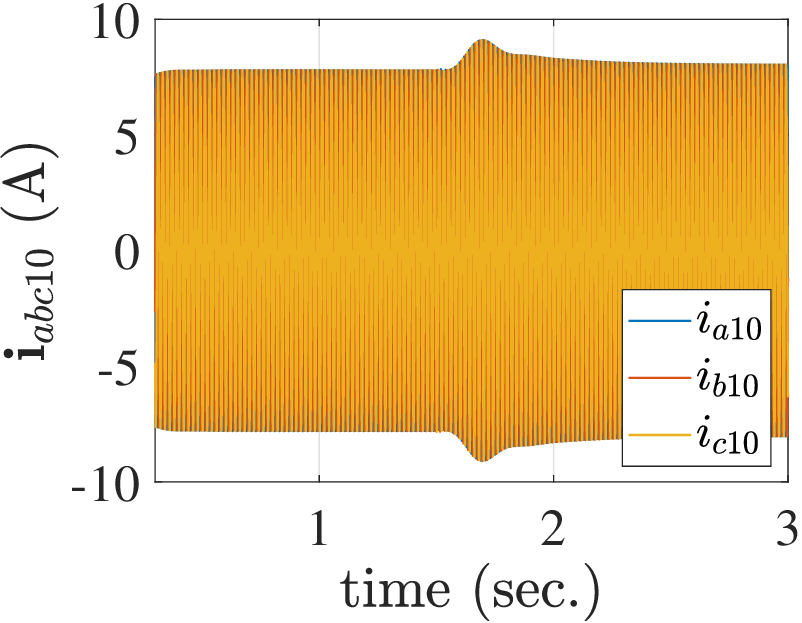}}
				\hfil
				\subfloat[]{\includegraphics[width=0.25\linewidth]{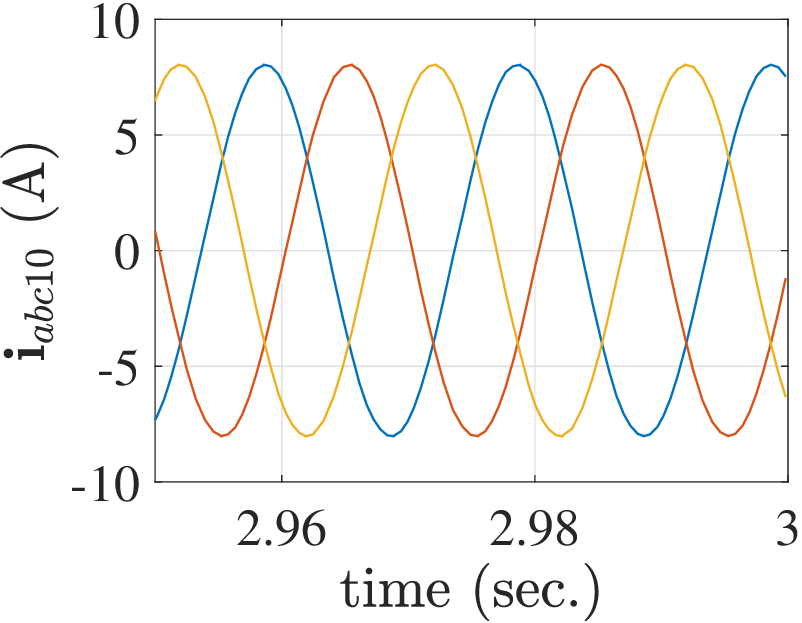}}
				\hfil
				\caption{\textcolor{black}{Terminal currents of the 10-IBR system (a,c,e,g,i,k,m,o,q,s) and their zoomed-in version (b,d,f,h,j,l,n,p,r,t) around 2.98 sec.}}
				\label{fig: TenIBR_iabc}
			\end{figure*}
\begin{figure*}
				\centering
				\subfloat[]{\includegraphics[width=0.25\linewidth]{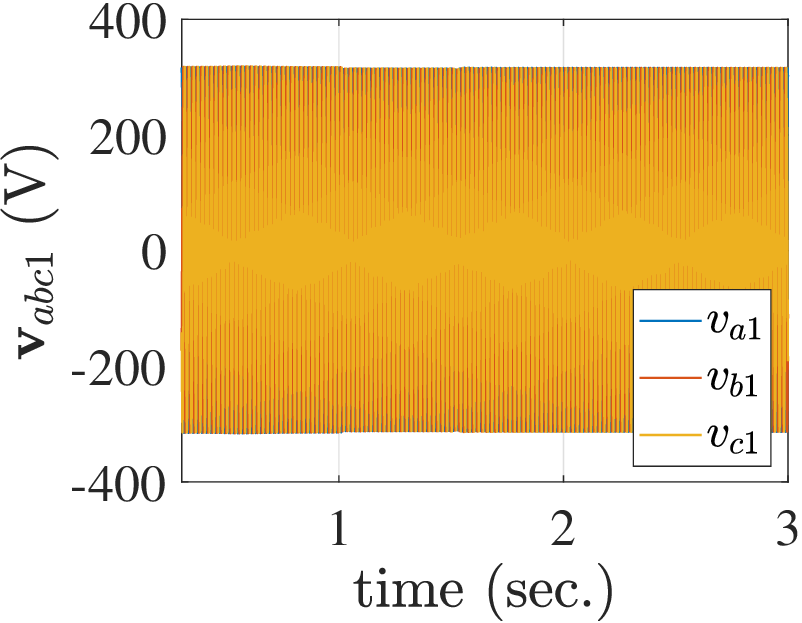}}
				\hfil
				\subfloat[]{\includegraphics[width=0.25\linewidth]{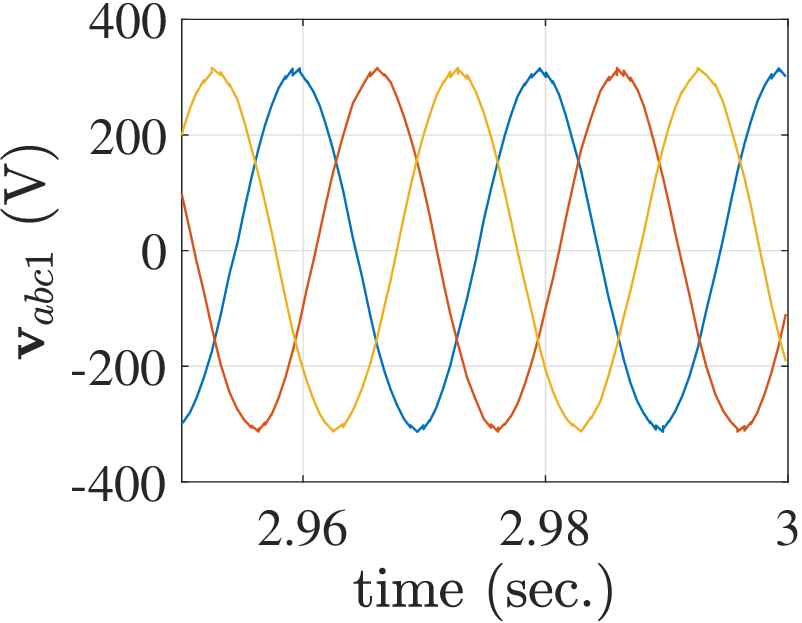}}
				\hfil
                \subfloat[]{\includegraphics[width=0.25\linewidth]{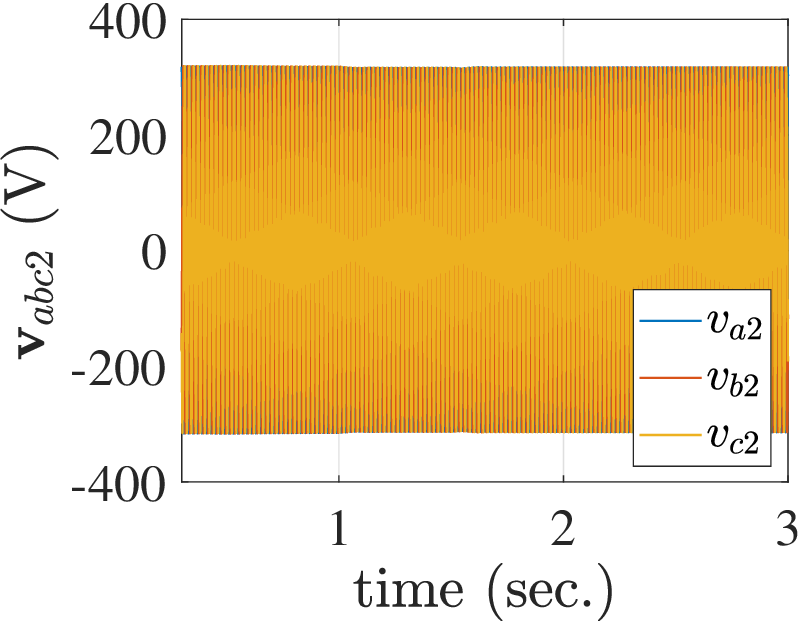}}
				\hfil
				\subfloat[]{\includegraphics[width=0.25\linewidth]{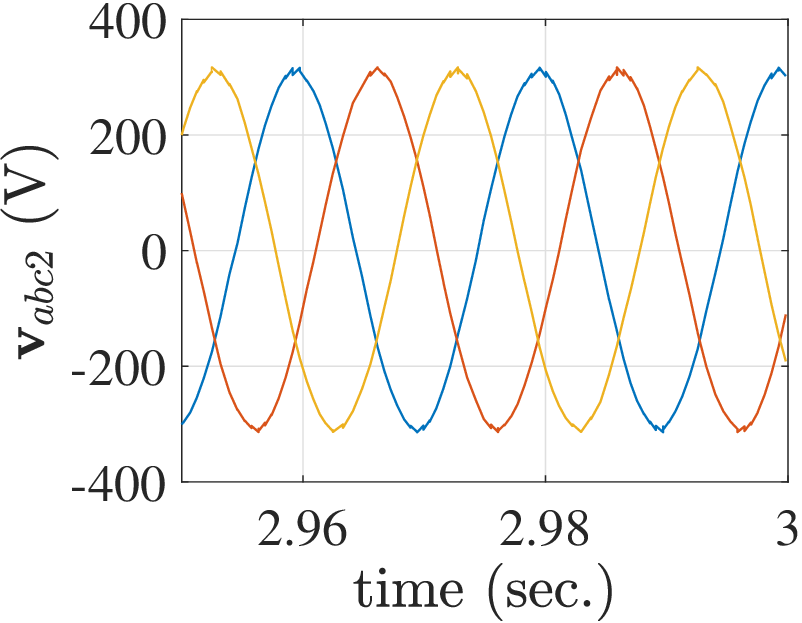}}
				\hfil
                \subfloat[]{\includegraphics[width=0.25\linewidth]{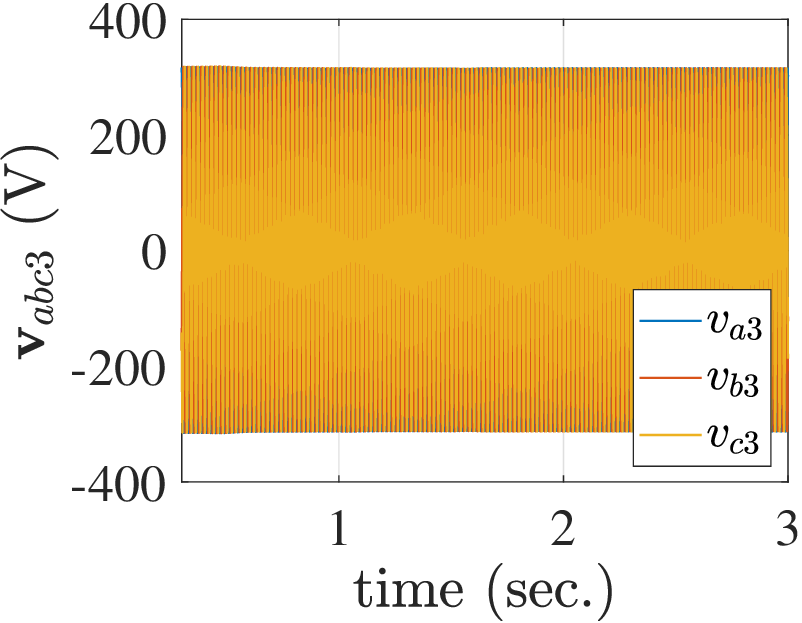}}
				\hfil
				\subfloat[]{\includegraphics[width=0.25\linewidth]{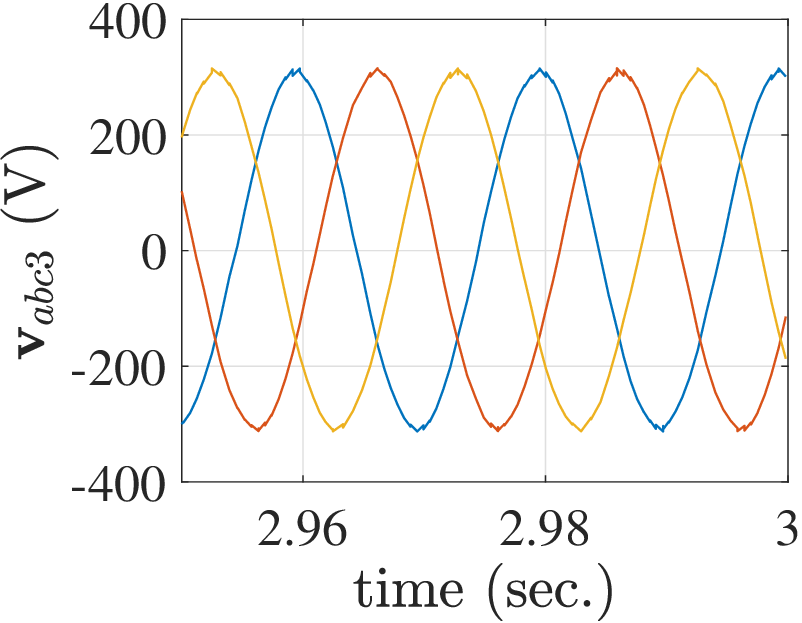}}
				\hfil
                \subfloat[]{\includegraphics[width=0.25\linewidth]{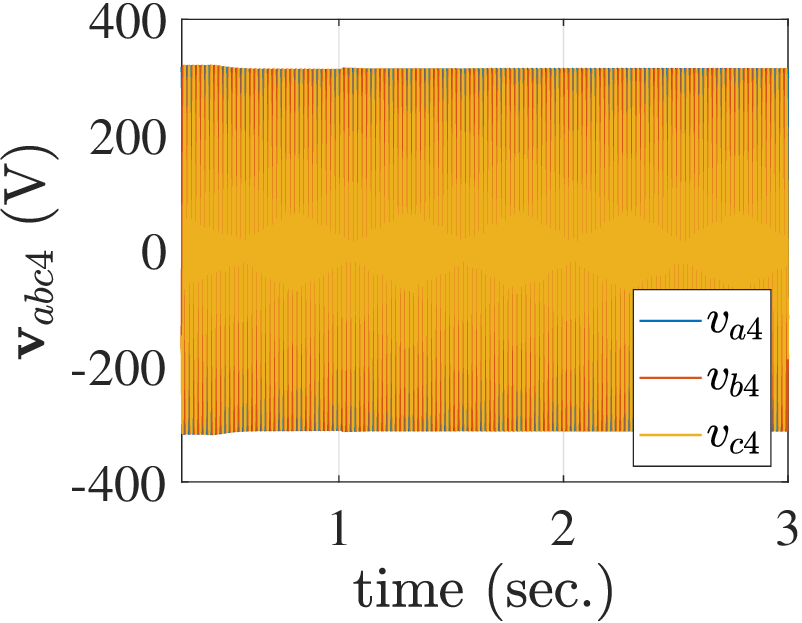}}
				\hfil
				\subfloat[]{\includegraphics[width=0.25\linewidth]{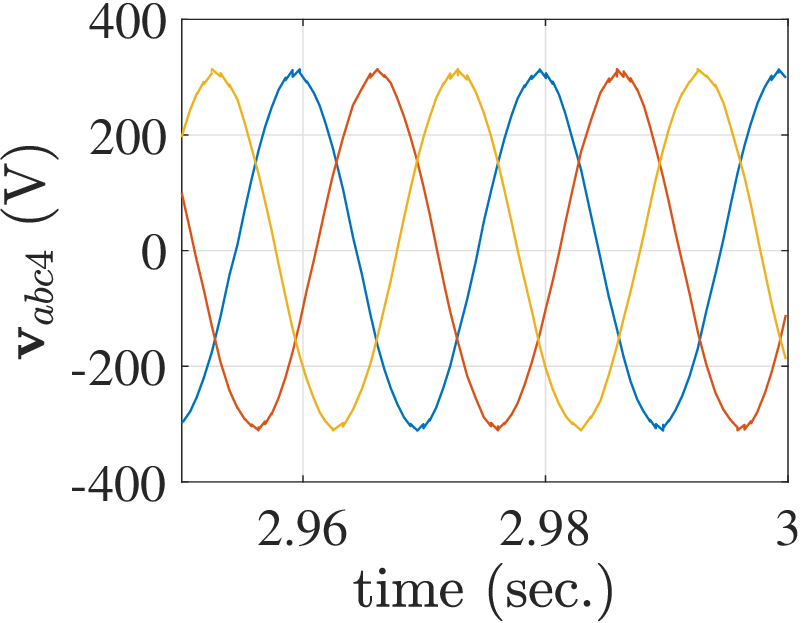}}
				\hfil
                \subfloat[]{\includegraphics[width=0.25\linewidth]{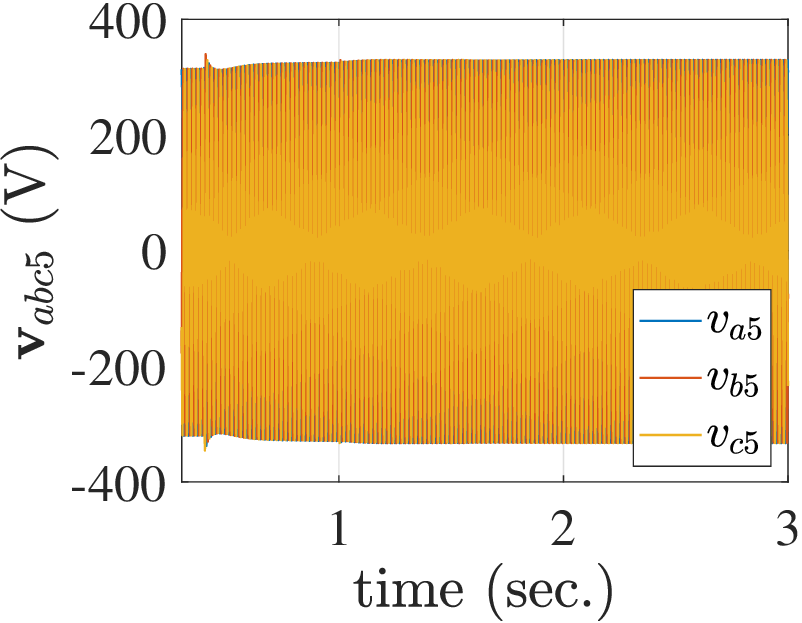}}
				\hfil
				\subfloat[]{\includegraphics[width=0.25\linewidth]{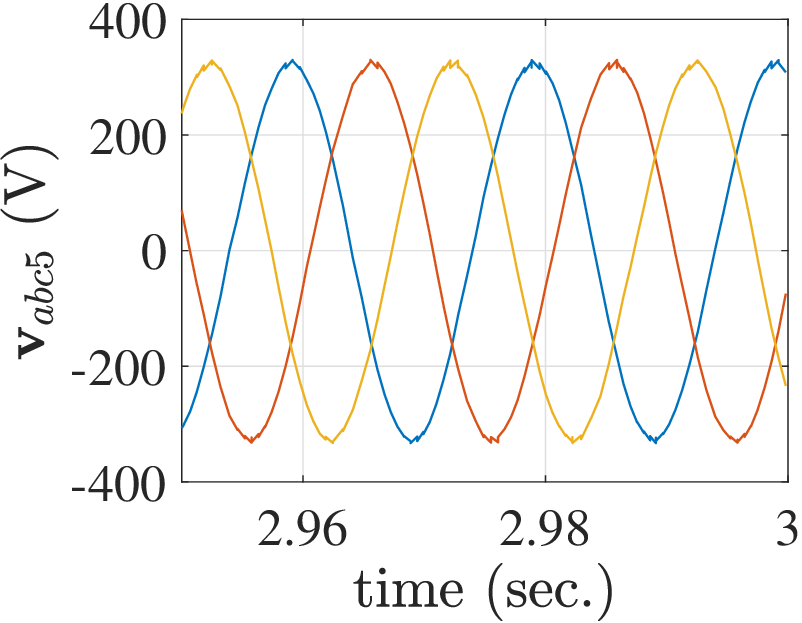}}
				\hfil
                \subfloat[]{\includegraphics[width=0.25\linewidth]{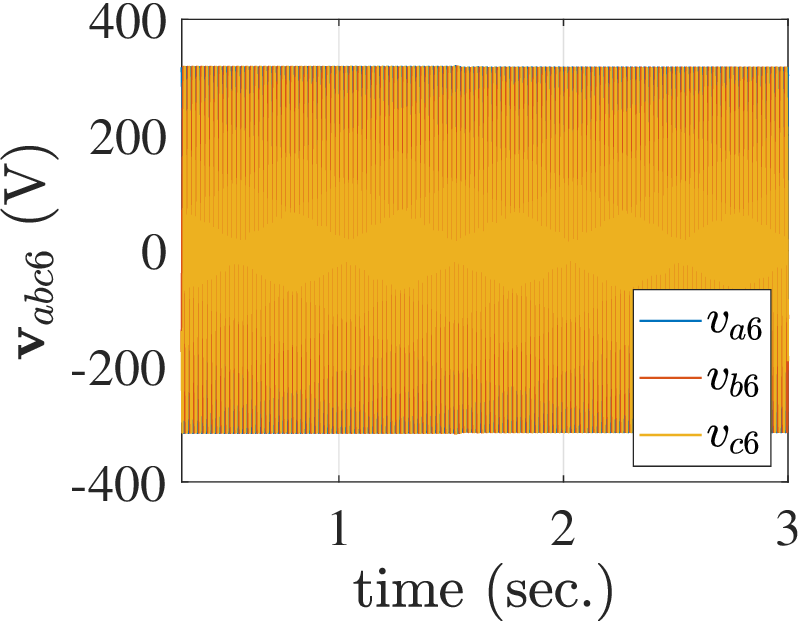}}
				\hfil
				\subfloat[]{\includegraphics[width=0.25\linewidth]{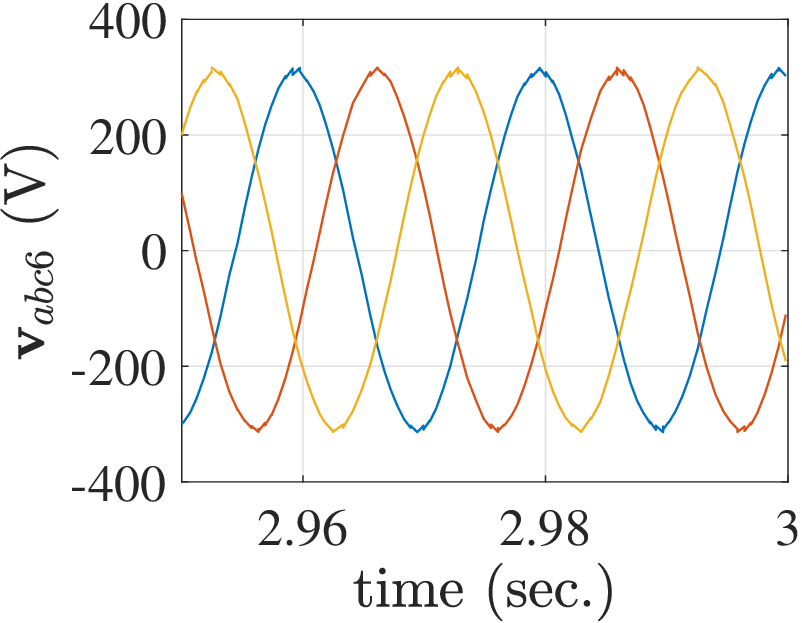}}
				\hfil
                \subfloat[]{\includegraphics[width=0.25\linewidth]{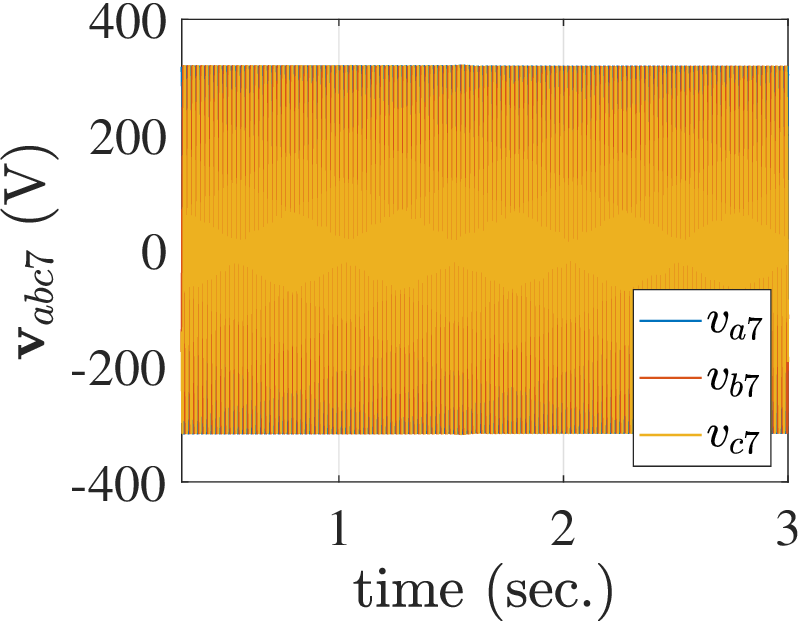}}
				\hfil
				\subfloat[]{\includegraphics[width=0.25\linewidth]{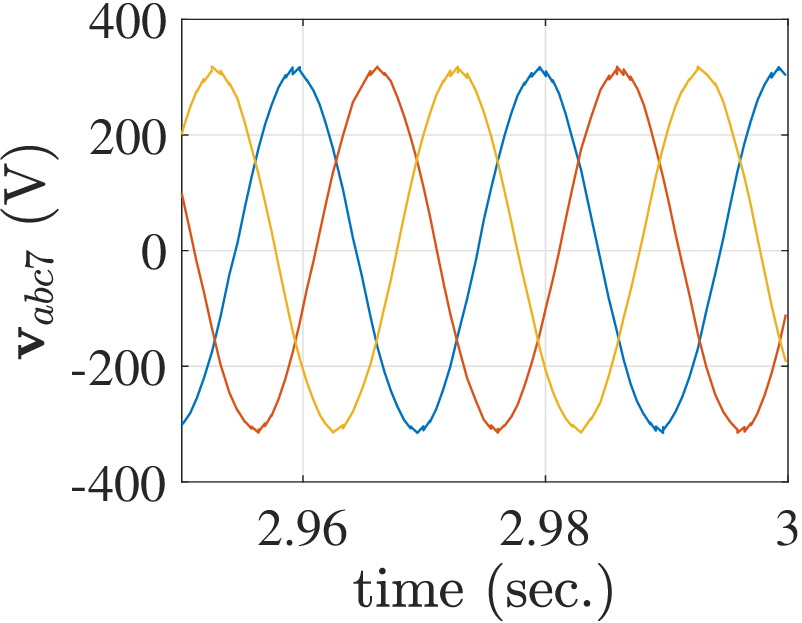}}
				\hfil
                \subfloat[]{\includegraphics[width=0.25\linewidth]{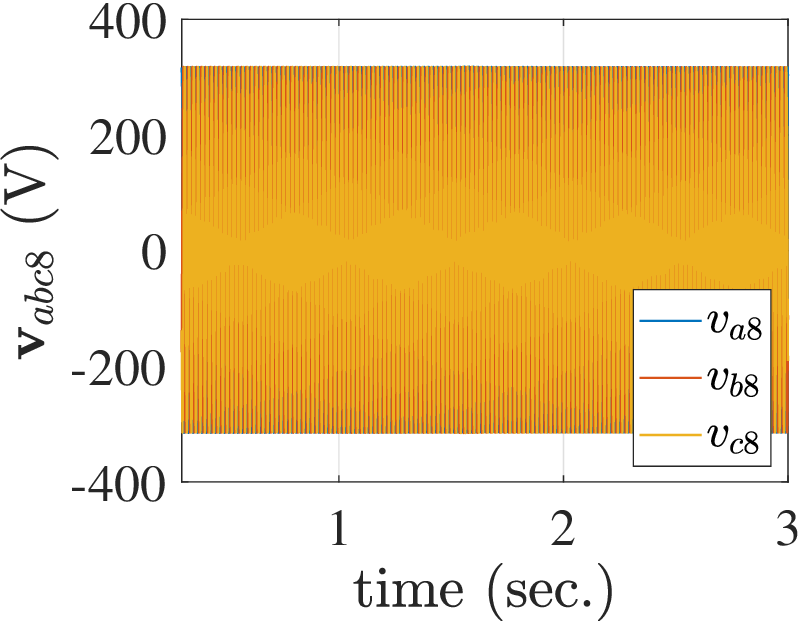}}
				\hfil
				\subfloat[]{\includegraphics[width=0.25\linewidth]{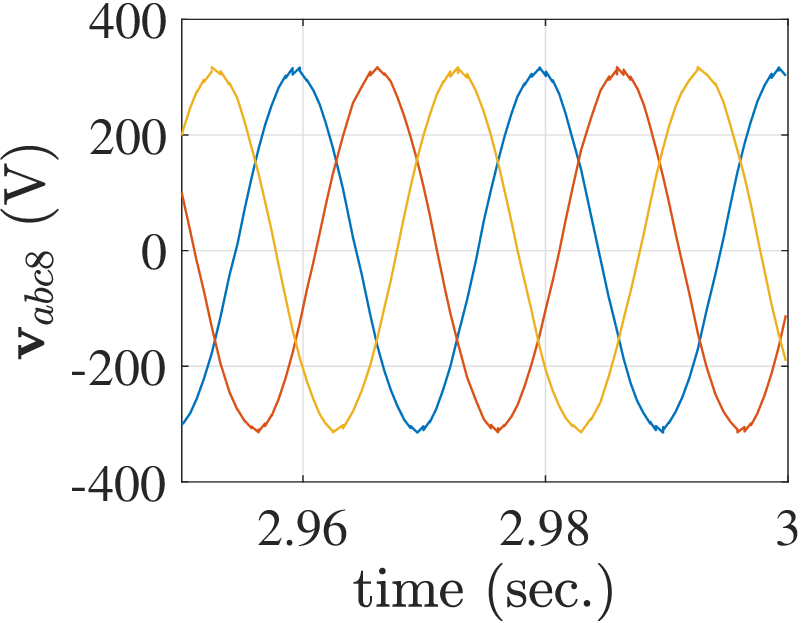}}
				\hfil
                \subfloat[]{\includegraphics[width=0.25\linewidth]{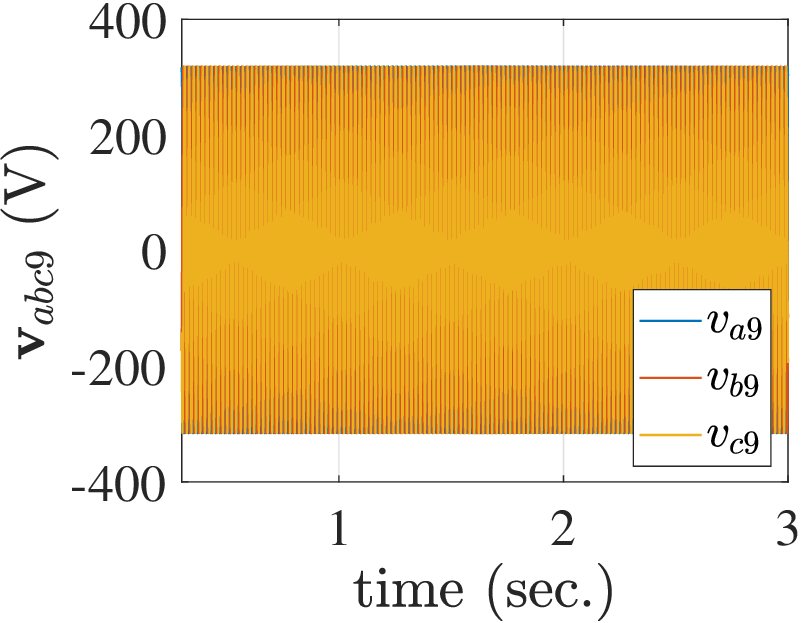}}
				\hfil
				\subfloat[]{\includegraphics[width=0.25\linewidth]{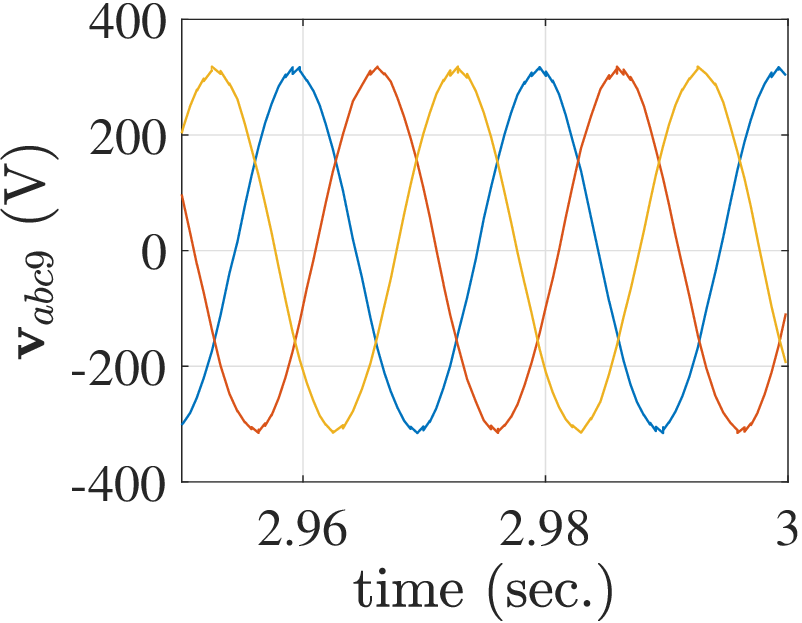}}
				\hfil
                \subfloat[]{\includegraphics[width=0.25\linewidth]{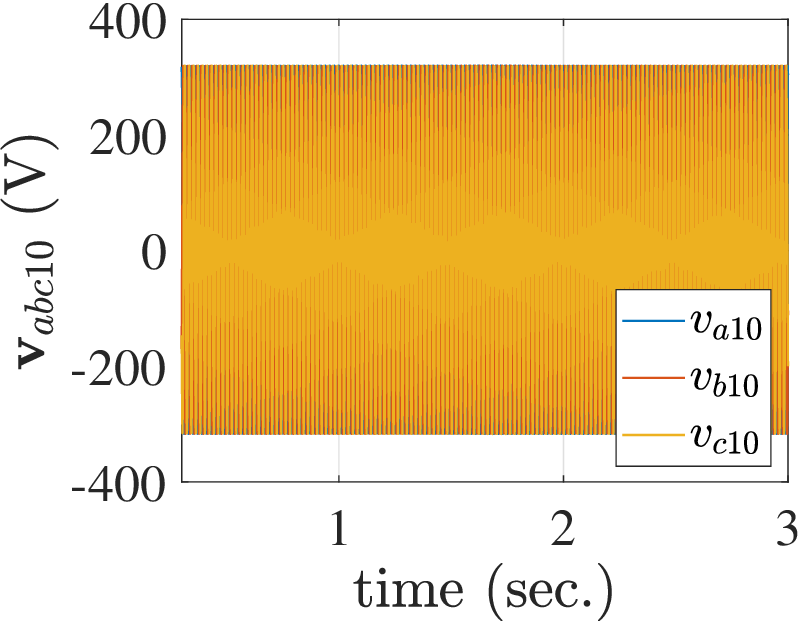}}
				\hfil
				\subfloat[]{\includegraphics[width=0.25\linewidth]{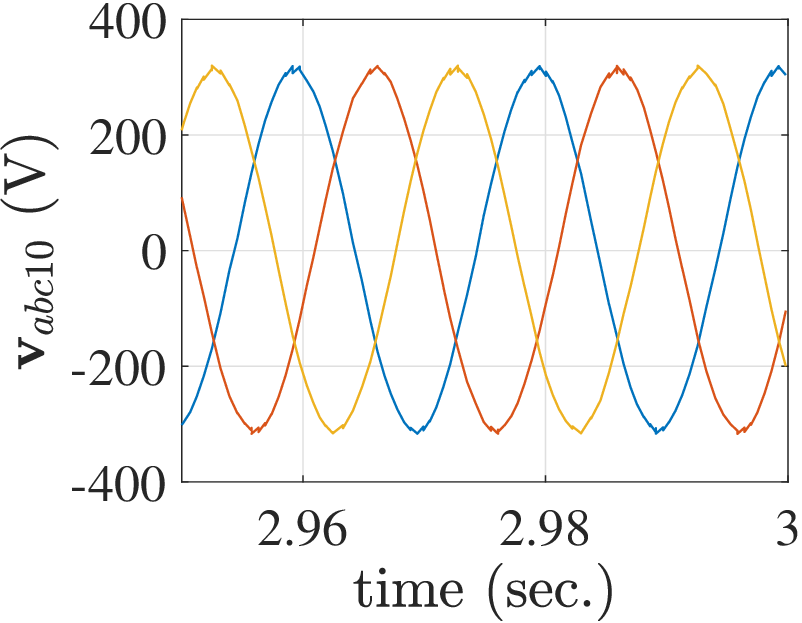}}
				\hfil
				\caption{\textcolor{black}{Terminal voltages of the 10-IBR system (a,c,e,g,i,k,m,o,q,s) and their zoomed-in version (b,d,f,h,j,l,n,p,r,t) around 2.98 sec.}}
				\label{fig: TenIBR_vabc}
			\end{figure*}

\begin{figure}
				\centering
				\subfloat[]{\includegraphics[width=0.5\linewidth]{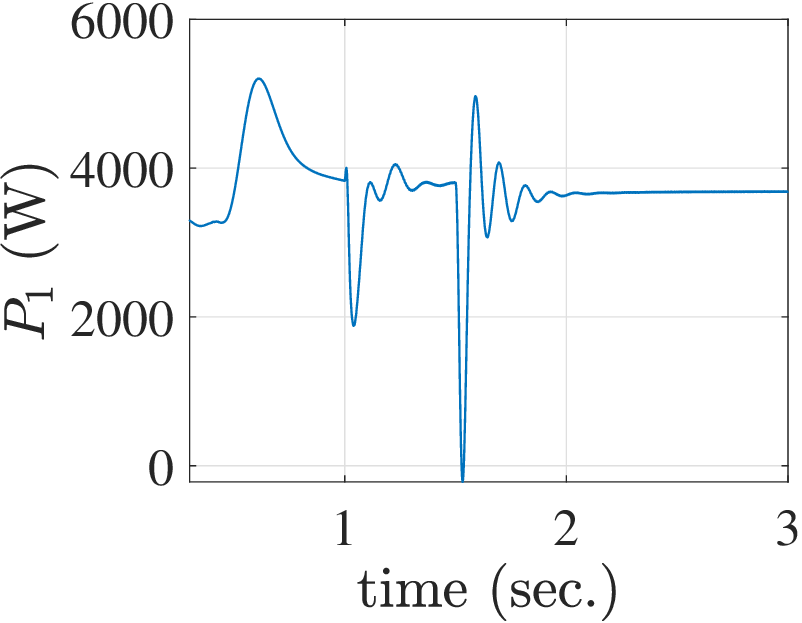}}
				\hfil
				\subfloat[]{\includegraphics[width=0.5\linewidth]{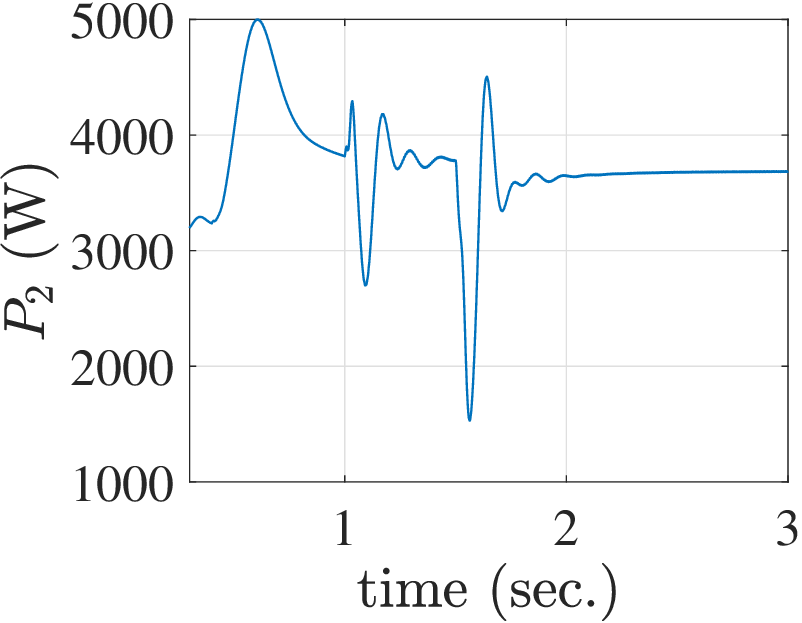}}
				\hfil
                \subfloat[]{\includegraphics[width=0.5\linewidth]{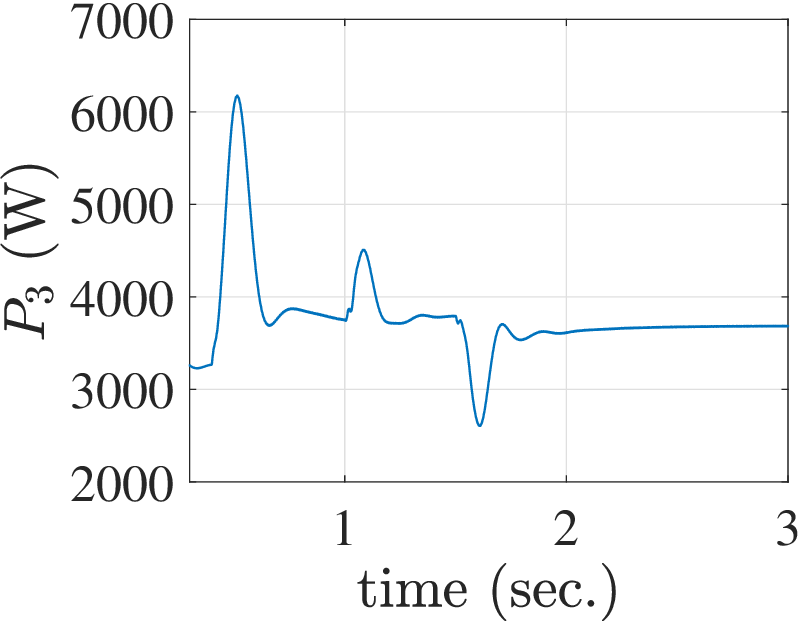}}
				\hfil
                \subfloat[]{\includegraphics[width=0.5\linewidth]{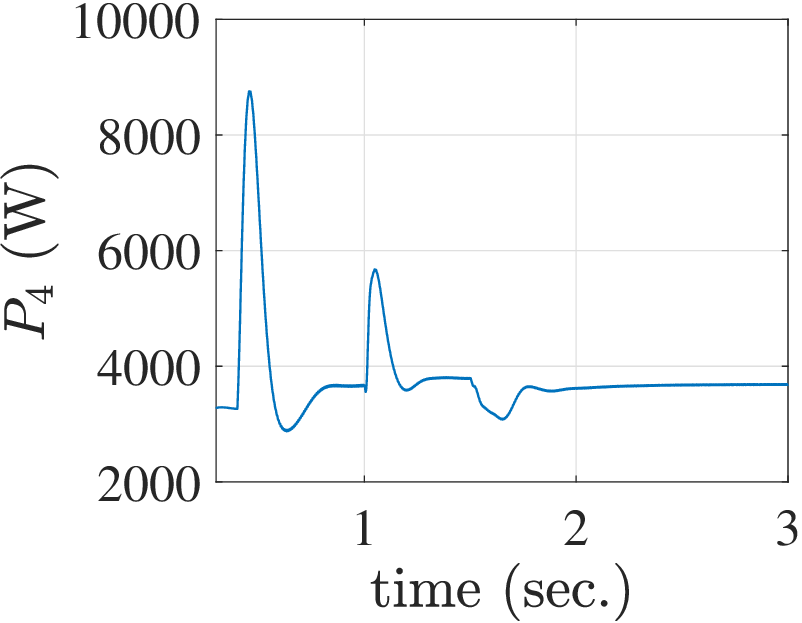}}
				\hfil
                \subfloat[]{\includegraphics[width=0.5\linewidth]{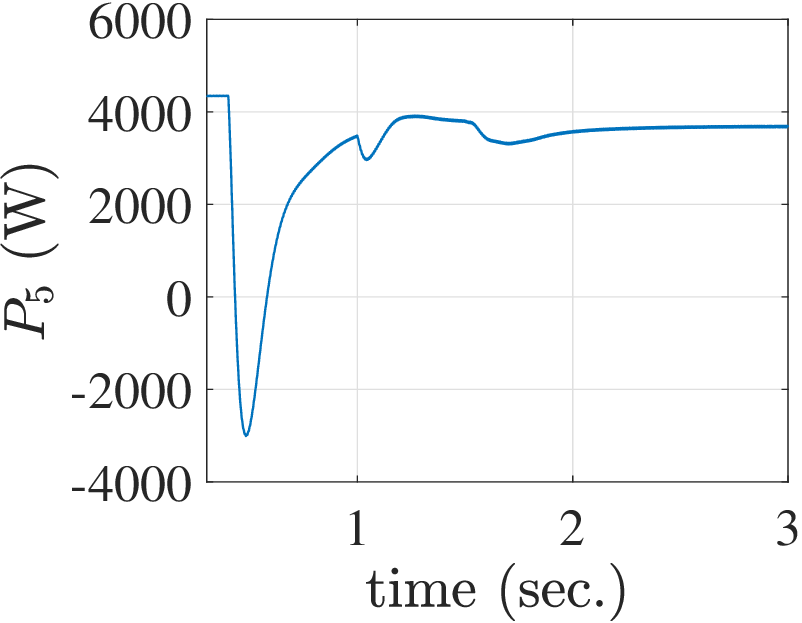}}
				\hfil
                \subfloat[]{\includegraphics[width=0.5\linewidth]{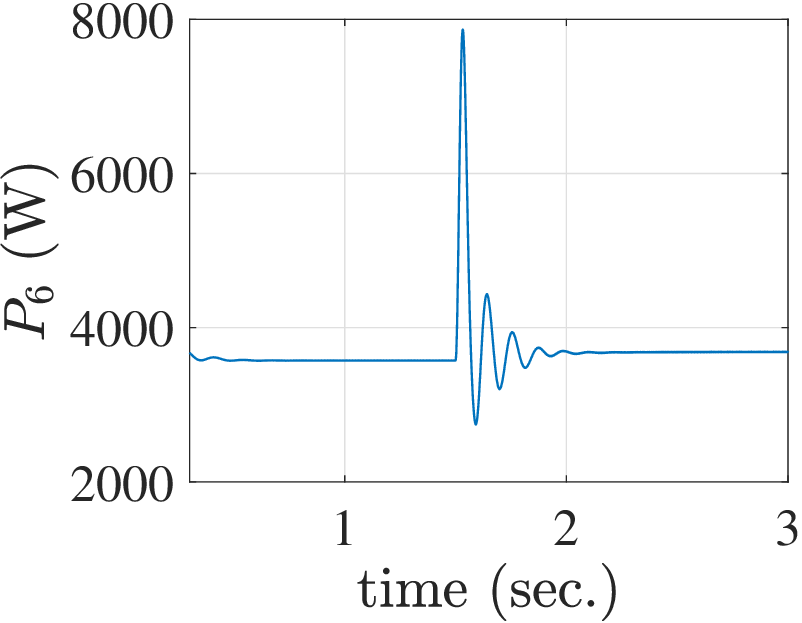}}
				\hfil
                \subfloat[]{\includegraphics[width=0.5\linewidth]{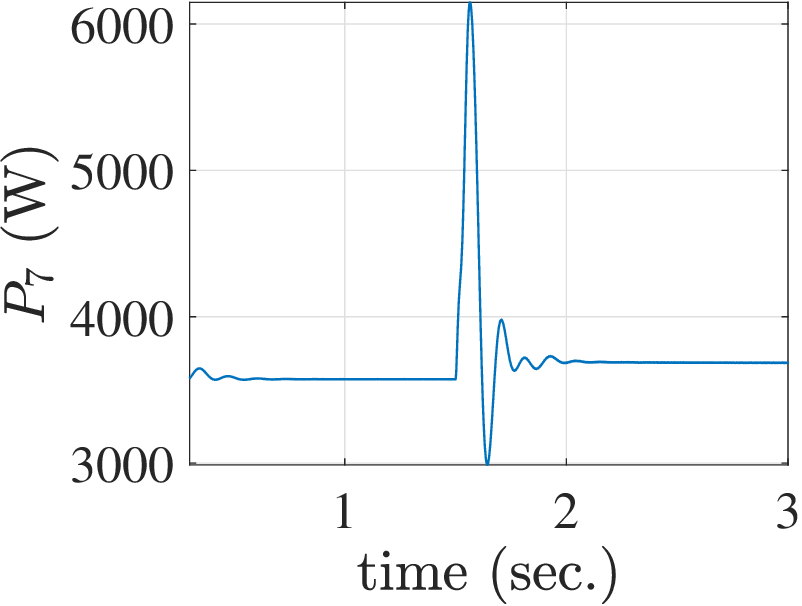}}
				\hfil
                \subfloat[]{\includegraphics[width=0.5\linewidth]{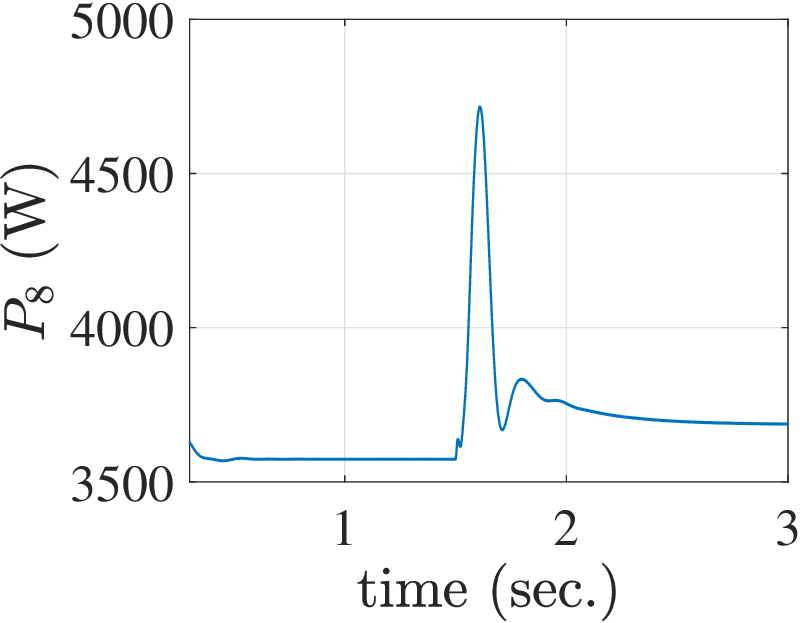}}
				\hfil
                \subfloat[]{\includegraphics[width=0.5\linewidth]{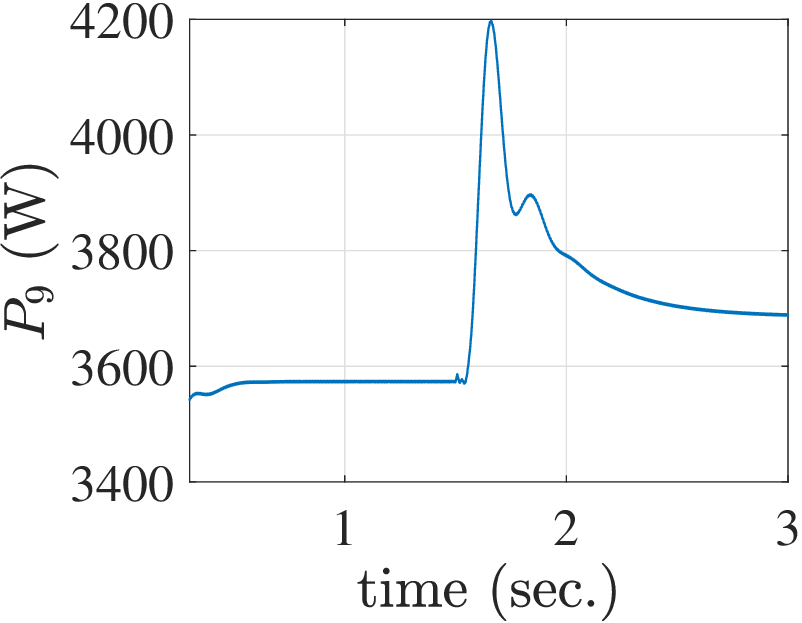}}
				\hfil
                \subfloat[]{\includegraphics[width=0.5\linewidth]{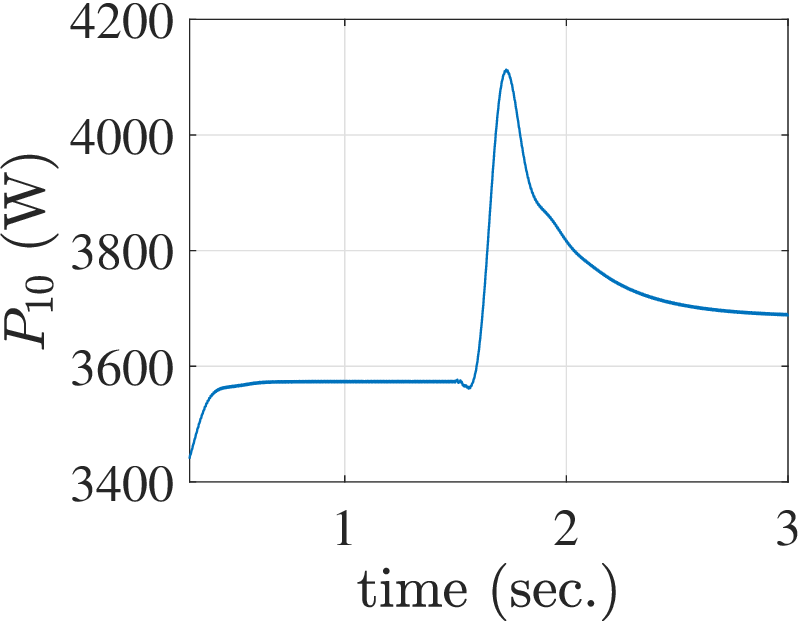}}
				\hfil
				\caption{\textcolor{black}{Instantaneous power of each IBR in the 10-IBR system.}}
				\label{fig: TenIBR_P}
			\end{figure}

\bio{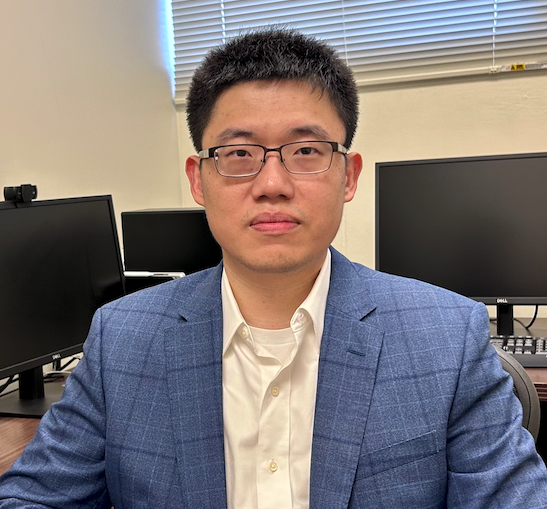}
\textbf{Tong Huang} received the Ph.D. degree from Texas A\&M University, College Station, TX, USA, in 2021. He is currently an Assistant Professor with the Department of Electrical and Computer Engineering, San Diego State University (SDSU), San Diego, CA, USA. Before joining SDSU, he was a Postdoctoral Associate with the Massachusetts Institute of Technology (MIT), Cambridge, MA, USA. In 2019, he had industry experience with ISO-New England, and in 2018 with Mitsubishi Electric Research Laboratories. As the lead PI, he was the recipient of the U.S. NSF ASCENT Award. As the first author, he was also the recipient of IEEE PES Technical Committee Prize Paper Award and two Best Paper Awards from 2020 IEEE PES General Meeting and 54th Hawaii International Conference on System Sciences.
\endbio

\end{document}